\newcommand{\set}[1]{\left\lbrace #1 \right\rbrace}
\newcommand{\C}{\mathcal{C}}
\newcommand{\R}{\mathbb{R}}
\newcommand{\T}{\mathbb{T}}
\newcommand{\iotab}{{\iota\hspace{-0.6em}\raisebox{0.75pt}{\,--}}}
\renewcommand{\L}{\mathcal{L}}
\newcommand{\U}{\mathcal{U}}
\DeclareMathOperator{\crit}{Crit}
\DeclareMathOperator{\Div}{div}
\DeclareMathOperator{\Vol}{Vol}
\theoremstyle{plain}
\newtheorem{thm}{Theorem}[section]
\newtheorem{corollary}[thm]{Corollary}
\newtheorem{lemma}[thm]{Lemma}
\theoremstyle{definition}
\newtheorem{definition}[thm]{Definition}
\newtheorem{remark}[thm]{Remark}
\begin{document}
\title{Integrability, Normal Forms and Magnetic Axis Coordinates}
\author{J. W. Burby$^1$, N. Duignan$^2$, and J. D. Meiss$^2$}
\address[foot]{(1) Los Alamos National Laboratory, Los Alamos, NM 97545 USA\\ 
                (2) Department of Applied Mathematics, University of Colorado, Boulder, CO 80309-0526, USA}


\begin{abstract}
Integrable or near-integrable magnetic fields are prominent in the design of plasma confinement devices. 
Such a field is characterized by the existence of a singular foliation  consisting entirely of invariant submanifolds. A regular leaf, known as a flux surface,of this foliation must be diffeomorphic to the two-torus. In a neighborhood of a flux surface, it is known that the magnetic field admits several exact, smooth normal forms in which the field lines are straight. However, these normal forms break down near singular leaves including elliptic and hyperbolic magnetic axes. In this paper, the existence of exact, smooth normal forms for integrable magnetic fields near elliptic and hyperbolic magnetic axes is established. In the elliptic case, smooth near-axis Hamada and Boozer coordinates are defined and constructed. Ultimately, these results establish previously conjectured smoothness properties for smooth solutions of the magnetohydrodynamic equilibrium equations. The key arguments are a consequence of a geometric reframing of integrability and magnetic fields; that they are presymplectic systems.
\end{abstract}

\date{\today}

\maketitle
\tableofcontents

\section{Introduction}\label{sec:Intro}
Toroidal Magnetic confinement devices, such as tokamaks and stellarators, are designed to confine particles with a strong magnetic field $B$ that lies, as much as possible, on a set of nested tori \cite{Hazeltine03, Helander14}. In magnetohydrodynamic equilibria, i.e., for Magneto-Hydro-Statics (MHS), these surfaces are iso-pressure surfaces. As is well-known, the incompressiblity of the field $B$ makes the ``dynamics'' of the field-line flow similar to that of a Hamiltonian system. This fact together with the invariance of iso-pressure surfaces implies that, in some sense, the system should be integrable.

We will say that a non-vanishing magnetic field $B$ is \emph{integrable} if there is a divergence-free vector field $J$ and a function $p$ such that
\begin{equation}\label{eq:commuting-def}
\begin{split}
    [B,J] &\equiv B \cdot \nabla J - J \cdot \nabla B =0 \\
    J\times B &= \nabla p
\end{split}
\end{equation}
where $\nabla p \neq 0 $ \textit{ almost everywhere}. The first condition states 
that $J$ commutes with $B$ and so is a \textit{symmetry} vector field of $B$. By contrast, the second condition in \eqref{eq:commuting-def} states that that $p$ is an invariant of both $B$ and $J$, since, $B \cdot \nabla p = 0$ and $J \cdot \nabla p = 0$. Indeed, in \S\ref{sec:Perspective} we think of $p$ as a \textit{Hamiltonian} and will call the pair $(J,p)$ a \emph{Hamiltonian pair} associated with $B$. Note that \eqref{eq:commuting-def} implies that $\nabla\times(J\times B) = 0$ since $B$ and $J$ are both divergence-free.  

The conditions \eqref{eq:commuting-def} are independent of the physical interpretation of $J$ as the current. For MHS, the condition $J = \nabla \times B$ must be added, and $p$ interpreted as the hydrostatic pressure. With these two additional conditions, we say that the magnetic field is \emph{MHS integrable}. 

Since both $J$ and $B$ are tangent to surfaces of constant $p$, the phase portrait of an integrable magnetic field is characterized by the topology of the level sets of $p$. A closed surface along which $p=\text{constant}$ and $\nabla p \neq 0$ is known as a \emph{regular flux surface}. By contrast, a \emph{magnetic axis} is a closed $B$-line on which $\nabla p = 0$.

The structure of an integrable $B$ in a neighborhood of a compact, regular flux surface has been well-understood for decades. Assuming MHS integrability, Hamada \cite{Hamada62} showed that, in such a neighborhood, there are special coordinates $(\psi,\theta,\zeta)$ where $\psi$ is a radial or flux variable, and $\theta, \zeta$ are angular variables. In these coordinates, $p$ depends only on the flux, $\psi$, and the magnetic field and the vector field $J$ have only two contravariant components that are also flux functions. For example,
\[
    B^\theta(\psi) = B \cdot \nabla  \theta, \quad 
    B^\zeta(\psi) = B \cdot \nabla \zeta, 
\]
and $B^\psi = 0$ (see \S\ref{sec:Summary} for more details).
The implication is that the field lines of $B$ and $J$ are straight in these coordinates. We remark in \S\ref{sec:Applications} that existence of Hamada coordinates does not require MHS integrability; our weaker notion of integrability is sufficient.

Assuming MHS integrability, Boozer \cite{Boozer81} showed there are flux coordinates, which are constructed so that the field lines of $B$ and $B\times \nabla p$ are straight, or, equivalently, so that the angular covariant components of $B$ are flux functions.
Note that $B\times\nabla p$ is proportional to the perpendicular (or ``diamagnetic'') current:
\[
   B\times \nabla p   = B \times (J \times B)
   = |B|^2\left( J - B \frac{J \cdot B}{|B|^2}\right)
   = |B|^2 J_\perp .
\]
In contrast to the Hamada case, our generalized notion of integrability is not sufficient to imply the existence of Boozer coordinates. 
Instead, MHS integrability is a sufficient condition for these coordinates to exist.

There is a strong analogy between these coordinate systems and those that can be found for an integrable Hamiltonian system. Indeed, in such a system, near any compact, regular level set of the
invariants, there are action-angle variables: this is the Liouvllle-Arnol'd theorem \cite{Arnold78}. For the magnetic field case, $B$ plays the role of the
Hamiltonian vector field, but, as we will argue in \S\ref{sec:Perspective}, it has a Hamiltonian that is effectively zero. The field $J$ plays the role of a symmetry vector field and $p$ is the corresponding invariant. In this case, Hamada coordinates correspond to action-angle variables. Boozer coordinates reflect the alternative choice of $B/|B|^2$ as the Hamiltonian vector field and  $(B \times \nabla p)/|B|^2$ as the symmetry vector field.

Neither the Hamada nor the Boozer normal form is valid in the neighborhood of a magnetic axis since the construction of each of these coordinate systems requires $\nabla p \neq 0$. However, since field line dynamics near a magnetic axis may be written as a non-autonomous Hamiltonian system, the near-axis magnetic field must admit an asymptotic Birkhoff normal form \cite{Arnold78}. The latter result, while not completely satisfactory since Hamada and Boozer coordinates are smooth rather than merely formal, does suggest that a smooth normal form theory near a magnetic axis may exist. 

The goal of this paper is to investigate the consequences of integrability in the neighborhood of a \emph{nondegenerate} magnetic axis, i.e. a magnetic axis along which the matrix of second derivatives normal to the axis is non-singular. We will establish in \S\ref{sec:NormalForm} the existence $C^\infty$ and analytic near-axis coordinates that, as we see in \S\ref{sec:Applications} reduce to a normal forms analogous to Hamada's coordinates near the axis. We will also show in \S\ref{sec:Applications} that for the special case of MHS integrability, there exist $C^\infty$ near-axis Boozer coordinates.

\section{Summary of the Main Results} \label{sec:Summary}

In this paper it will be assumed that a \emph{magnetic field} $B$ is a non-vanishing, $C^\infty$, divergence-free vector field defined on a smooth manifold $M$ with volume element $\Omega$. As a special case, we may take $M=U$, for some open region $U$ in $\mathbb{R}^3$ equipped with the usual volume element $\Omega = dx\,dy\,dz$.
While naturally-occurring magnetic fields may have nulls, those used in the most promising candidate designs for magnetically-confined fusion reactors typically do not \cite{Hazeltine03,Helander14}. Our results therefore apply to the common magnetic fields used for plasma confinement.

Much is known about the structure of integrable magnetic fields. Recall that our definition in \eqref{eq:commuting-def} of integrability does not require MHS. 
As we will more formally recall in \S\ref{sec:NormalForm}, if $\Lambda\subset U$ 
is a compact $C^\infty$ surface on which $p$ is constant and $\nabla p$ is nowhere vanishing, then $\Lambda$ must be a two-torus. Moreover, as we formally state in \cref{thm:RegularHamada}, there are $C^\infty$ coordinates $(\psi,\theta,\zeta)$ defined in a (possibly narrow) tubular neighborhood of $\Lambda$ such that 
\begin{equation}\label{eq:HamadaCoord}
\begin{split}
    p & = p(\psi) ,\\
    B &= F^\prime(\psi)\,\nabla\psi\times \nabla\theta - G^\prime(\psi)\,\nabla\psi\times \nabla\zeta,\\
    J &= K^\prime(\psi)\,\nabla\psi\times \nabla\theta-L^\prime(\psi)\,\nabla\psi\times \nabla\zeta,
\end{split}
\end{equation}
where $p,F,G,K,L$ are $C^\infty$ functions of $\psi$, and $\theta,\zeta$ are $2\pi$-periodic angular variables \cite{Hamada62, Hazeltine03, Helander14}. Such coordinates are known as \emph{Hamada coordinates}. According to \eqref{eq:HamadaCoord}, $B$ and $J$ in Hamada coordinates are placed in a \emph{normal form}. In particular, each of $B,J$ has only two contravariant components. 

If $B$ also happens to be MHS integrable then, as we formally state in \cref{thm:RegularHamada}, there are $C^\infty$ coordinates $(\psi,\theta,\zeta)$ defined in a tubular neighborhood of $\Lambda$ such that
\begin{equation}\label{eq:BoozerCoord}
\begin{split}
    p& = p(\psi) ,\\
    B & = f^\prime(\psi)\,\nabla\psi\times \nabla\theta - g^\prime(\psi)\,\nabla\psi\times \nabla\zeta ,\\
    B\times\nabla p & = k^\prime(\psi)\,\nabla\psi\times \nabla\theta - l^\prime(\psi)\,\nabla\psi\times \nabla\zeta,
\end{split}
\end{equation}
where $p,f,g,k,l$ are $C^\infty$ functions of $\psi$ and $\theta,\zeta$ are $2\pi$-periodic angular variables. Such coordinates are known as \emph{Boozer coordinates} \cite{Boozer81, Helander14}. 
In this case $B$ and $B\times\nabla p$ are placed in normal form. This contrasts with the Hamada form \eqref{eq:HamadaCoord}. If $B$ is expressed in the covariant representation
\[
    B = B_\psi \nabla \psi + B_\theta \nabla \theta + B_\zeta \nabla \zeta ,
\]
then the angular covariant components are flux functions: $k'(\psi) = -B_\theta p'(\psi)$
and $l'(\psi) = B_\zeta p'(\psi)$. Imposing the latter property was how Boozer originally defined his eponymous straight-line coordinates \cite{Boozer81}.

We give a precise summary of the existence theory for Hamada and Boozer coordinates in \S\ref{sec:Applications}.

Hamada and Boozer coordinates play foundational roles in many theoretical analyses of magnetic confinement devices. However, these coordinate systems have an important flaw; they are ill-defined on any \emph{magnetic axis}, i.e., a closed $B$-line along which $\nabla p = 0$.  In fact, neither Hamada coordinates nor Boozer coordinates may be continuously defined in an open region that includes a magnetic axis, because each level set of $p$ contained in the domain of either type of coordinate system must be diffeomorphic to a two-torus. Therefore, any possible normal form for the vector fields in a neighborhood of a magnetic axis must differ from \eqref{eq:HamadaCoord}-\eqref{eq:BoozerCoord}.

Nevertheless, one might hope that the behavior of these coordinate systems near an axis is no more complicated than standard polar coordinates $(r,\theta)$ on $\mathbb{R}^2$ near the origin. Indeed, though the domain of the coordinates $(r,\theta)$ is the punctured plane $\mathbb{R}^2_0 = \{(x,y)\mid (x,y)\neq (0,0)\}$, by defining Cartesian coordinates, $(r,\theta)\mapsto (r\cos\theta,r\sin\theta)$, one obtains a coordinate system that is merely the restriction to $\mathbb{R}^2_0$ of smooth coordinates on all of $\mathbb{R}^2$. In other words, these functions have a \emph{removable} singularity at $r=0$. So if $(\psi,\theta,\zeta)$ comprises a system of, say, Hamada coordinates with $\psi = 0$ corresponding to the magnetic axis, one might hope that the coordinates $(X,Y,\zeta) = (\sqrt{2\psi}\cos\theta,\sqrt{2\psi}\sin\theta,\zeta)$ 
similarly exhibit only a removable singularity at $\psi=0$.

However, without further insight into the nature of Boozer and Hamada coordinates near a magnetic axis, there is no obvious justification for such an assumption. For a simple counterexample, instead of standard polar coordinates, suppose one were to define the new functions $(R,\Theta) = (r,\theta + r^{-1})$, which also comprise a valid coordinate system on $\mathbb{R}^2_0$. However, the corresponding ``Cartesian" coordinates $(X,Y)=(R\cos\Theta,R\sin\Theta)$ are generally ill-behaved since, for example,
\[
    \nabla X  =(\cos\Theta + r^{-1}\sin\Theta) \nabla r -r\sin\Theta \nabla\theta 
\]
 diverges as $r\rightarrow 0$.

We will address the case that the magnetic axis is \textit{nondegenerate}. An \emph{elliptic magnetic axis} is a closed field line along which $\nabla p = 0$, and $p$ is a normal
minimum or maximum, i.e., the normal Hessian, $D^2_\perp p$, is nondegenerate with signature $(+,+)$ or $(-,-)$. Similarly, a \emph{hyperbolic magnetic axis} is a normal saddle, so that the normal Hessian has signature $(+,-)$. If local (un)stable manifolds of the saddle are non-orientable, corresponding to a M\"{o}bius strip, then the axis is said to be \emph{reflection hyperbolic}; if the manifolds are orientable, an annulus, then it is said to be \emph{direct hyperbolic}.
More details on the various topologies are given in \S\ref{sec:Perspective}.

In \S\ref{sec:NormalForm} we establish the following basic existence result in a neighborhood of a nondegenerate magnetic axis. 
\begin{thm}\label{sec2_thm1}
Suppose that $B$ is an integrable magnetic field with Hamiltonian pair $(J,p)$, and $\gamma$ is a nondegenerate magnetic axis, all $C^\infty$ (or analytic). There are coordinates $(x,y,\phi)$ in a covering tubular neighborhood $\U$ of $\gamma$ and
$C^\infty$ coordinates $(x,y,\phi)$ such that $\gamma = \{(0,0,\phi):\phi \in \mathbb{R}\,\mathrm{ mod }\,2\pi\}$, and
\begin{align*}
     p& = P(\psi)\\
     B& = \nabla y\times \nabla x 
        - \Psi^\prime(\psi)\,\nabla\psi\times \nabla\phi,
\end{align*}
where $P$ and $\Psi$ are $C^\infty$ (or analytic) and $\psi =  \tfrac12(x^2 \pm y^2)$ in the elliptic/hyperbolic cases, respectively. Moreover, in the reflection hyperbolic case these coordinates give a double cover of $\U$.
\end{thm}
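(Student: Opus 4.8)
The plan is to exploit the presymplectic reframing advertised in the abstract: the pair $(B,\Omega)$ gives a closed 2-form $\beta = \iota_B\Omega$ with $d\beta=0$ (since $\Div B=0$), and integrability means $\beta$ has a two-dimensional characteristic foliation generated by $B$ whose leaf space carries a genuine symplectic structure away from the axis, with $p$ the moment-map-like invariant. Near a nondegenerate axis $\gamma$ the first step is to straighten the $B$-flow transversally: choose a tubular neighborhood $\U$ of $\gamma$, a closed transversal disk $D$, and the return/flow coordinates $(u,v,\phi)$ in which $\phi$ is the (rescaled) arclength along field lines and $B\cdot\nabla\phi = 1$, $B\cdot\nabla u = B\cdot\nabla v = 0$; this already forces $B = \partial_\phi + (\text{stuff tangent to }D)$, and the divergence-free condition plus periodicity in $\phi$ will let us average. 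In these coordinates $p$ is $\phi$-independent to leading order, and its normal Hessian is the nondegenerate form whose signature defines the elliptic/hyperbolic dichotomy.

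The heart of the argument is a parametrized (in $\phi$) normal-form theorem for the invariant $p$ together with the 2-form $\beta$ restricted to the $D$-slices. On each slice $\beta|_D$ is a closed 2-form, hence an area form (nonvanishing near $\gamma$ since $B$ is nonvanishing and transverse to $D$), and $p$ is constant along its Hamiltonian flow except at the critical point $\gamma\cap D$. So the task is: given a smooth (or analytic) family of area forms $\omega_\phi$ on a disk and a smooth family of functions $p_\phi$ with a common nondegenerate critical point at the origin, find a smooth family of diffeomorphisms bringing $\omega_\phi$ to $dx\wedge dy$ and $p_\phi$ to a function of $\psi = \tfrac12(x^2\pm y^2)$ alone. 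This is a Morse-lemma-with-parameters statement glued to a Moser-deformation statement. The Morse part is standard with parameters; the subtlety is that after the Morse normalization of $p$ we must still have freedom (the symplectic isotropy group of $\psi$, i.e.\ the "rotations" $e^{t\{\psi,\cdot\}}$ in the elliptic case or the boosts in the hyperbolic case) to simultaneously kill the $\phi$-dependence of the area form via Moser's trick — one checks the relevant cohomological obstruction (a class in $H^2$ of the disk, which vanishes) and that the correcting vector field can be chosen to preserve the level sets of $\psi$, using that $\psi$ is a submersion off the origin and behaves well at the origin because $d\psi$ vanishes there to exactly first order. The factor $\sqrt{2\psi}$ singularities of the associated polar/hyperbolic angle are removable precisely because the construction is carried out in the Cartesian $(x,y)$ variables from the start, exactly as in the $(X,Y)$ discussion in the introduction.

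Once $p = P(\psi)$ and $\beta|_{D} = dx\wedge dy = \nabla y\times\nabla x$ on each slice, reconstituting the full $B$ is bookkeeping: $\beta = dx\wedge dy - d\phi\wedge\alpha$ for some $\phi$-dependent 1-form $\alpha$ on $D$ (the "$B^\phi$-mismatch"), and $d\beta = 0$ together with $B$-invariance of $p$ forces $\alpha$ to be closed and basic relative to the $\psi$-foliation, hence $\alpha = \Psi'(\psi)\,d\psi$ for a smooth (analytic) $\Psi$ after absorbing an exact term by a final $\phi$-independent shear in $\phi$; dualizing with $\Omega$ gives exactly $B = \nabla y\times\nabla x - \Psi'(\psi)\nabla\psi\times\nabla\phi$. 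The existence of $J$ is never needed for the formula itself — integrability enters only through $[B,J]=0$ guaranteeing $p$ is genuinely $B$-invariant and $J\times B=\nabla p$ pinning down the nondegenerate normal form of $p$ — so the symmetry vector field can be recovered afterward if desired. Finally, in the reflection-hyperbolic case the (un)stable directions form a M\"obius band, so the eigenbundle of $D^2_\perp p$ has no global trivialization over $S^1$; passing to the orientation double cover of the $\phi$-circle trivializes it, which is why the stated coordinates live on a double cover of $\U$. The main obstacle I expect is the simultaneous normalization in the middle step — coordinating the Morse-with-parameters reduction of $p$ with the Moser deformation of $\omega_\phi$ while keeping everything smooth (and convergent in the analytic category) up to and including the degenerate point $\gamma$, where the $\psi$-foliation is singular — and, relatedly, verifying that no $\phi$-holonomy obstruction (a "twist" that cannot be removed) survives; the nondegeneracy hypothesis on the axis is exactly what defeats that obstruction.
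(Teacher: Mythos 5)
Your elliptic-case outline is essentially the paper's route (Morse--Bott normalization of $p$, a flux-based choice of the radial variable, then Moser's trick with the $H^2$-obstruction vanishing and the correcting field tangent to $\psi$-levels), and it would go through with the details you flag filled in. The genuine gap is the hyperbolic case, which you treat as the same argument with $x^2+y^2$ replaced by $x^2-y^2$. It is not. Two steps break. First, your key reduction ``$\alpha$ is closed and basic relative to the $\psi$-foliation, hence $\alpha=\Psi'(\psi)\,d\psi$ for a smooth $\Psi$'' relies on the implication \emph{constant on the leaves of $\{\tfrac12(x^2-y^2)=c\}$ $\Rightarrow$ smooth function of $\psi$}. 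Near a hyperbolic point the level sets are disconnected (two branches for each $c\neq0$, four separatrix sectors at $c=0$), so a leafwise-constant function may take different values on different branches, and even a function that is genuinely constant on full level sets need not factor smoothly through $\psi$ without a nontrivial division/flatness argument; this is exactly the hard content of Eliasson's smooth normal form at hyperbolic singularities. Second, the Moser step needs a primitive $\sigma$ of $\alpha-\alpha_*$ on each leaf with matching periods; in the elliptic case the leaves are compact tori and the toroidal/poloidal fluxes kill the periods, but in the hyperbolic case the leaves are non-compact cylinders that exit the tubular neighborhood, the separatrix leaf is singular, and there is no analogous pair of flux invariants to normalize against. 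You acknowledge ``the simultaneous normalization in the middle step'' as the main obstacle, but acknowledging it is not resolving it, and the nondegeneracy of the axis alone does not defeat it.

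The paper avoids this entirely in the hyperbolic case by a different mechanism: it coisotropically embeds the tubular neighborhood $(\U,\beta)$ into a $4$-dimensional symplectic manifold $(\U\times\R,\omega)$ with $\omega=\beta+d(u\,d\phi)$, shows that $(u,\tilde p)$ is an integrable Hamiltonian system there (this is where $J$ is genuinely used, contrary to your claim that it is never needed beyond invariance of $p$), invokes the Eliasson--Vey normal form for nondegenerate rank-one singular orbits of $4$D integrable systems, and restricts back to $u=0$. If you want a self-contained proof along your lines, you must either reprove the hyperbolic Eliasson lemma (smooth/analytic dependence of leafwise invariants on $x^2-y^2$ across the separatrix) or import it explicitly; as written, the hyperbolic half of your argument does not close. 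The double-cover remark for the reflection-hyperbolic case is correct and matches the paper.
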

\noindent Though this theorem applies both to elliptic and hyperbolic axes, its proof relies on very different arguments in the two cases. We may describe the essential difficulties encountered in the proofs in broad strokes as follows.

The elliptic case is simplest. After moving into coordinates $(X,Y,\Phi)$ in which $p$ is  a function only of $\tfrac12(X^2+Y^2)$, the crux of the proof in \S\ref{sec:ellipticproof} is establishing existence of a smooth solution $f: U \to \R$ of the ``magnetic differential equation''
\begin{align}
    B_\lambda\cdot\nabla f = B_\lambda\cdot a,\label{mdiff}
\end{align}
where $B_\lambda = \nabla\times A_\lambda$, $\lambda\in [0,1]$, is a divergence-free vector field that linearly interpolates between the ``old" magnetic field $B_0$ and the ``new" normal form magnetic field $B_1$, and $a = \partial_\lambda A_\lambda$. As is well-known in plasma physics, such magnetic differential equations are generally plagued by singular divisor problems. Equation \,\eqref{mdiff} also enjoys the additional complication of requiring the solution $f$ to be independent of $\lambda$ even though the coefficients are $\lambda$-dependent. To sidestep these difficulties, we show that the toroidal and poloidal fluxes for $B_\lambda$ are indepedent of $\lambda$. In addition, we will see that the interpolating field $B_\lambda$ is tangent to $\psi$-surfaces. These two facts imply the existence of the single-valued function $f$ that satisfies \eqref{mdiff}.

The hyperbolic case, discussed in \S\ref{sec:hyperbolicproof}, is more complicated because the level sets of $p$ near the axis are not necessarily connected. Indeed, the approach used in the elliptic case to establish that $p$ and $\Psi$ are smooth functions of $\psi$ no longer works. In this case we  will leverage preexisting results for 4D Hamiltonian systems by first embedding the 3D magnetic system in 4D. Under this \textit{coisotropic} embedding, $B$ extends to a Hamiltonian vector field on the larger space. In fact, we show that $J$ can also be embedded in such a way that the extended $B$ is integrable in the 4D space. Ultimately, this allows for the immediate application of the existing results on Hamiltonian systems. The desired theorem then emerges through projection onto the embedded 3D magnetic system.

For the elliptic magnetic axis, we will then build upon \cref{sec2_thm1} to construct $C^\infty$ Hamada and Boozer coordinates near a nondegenerate elliptic magnetic axis in \S\ref{sec:Applications}.

\begin{thm}[Near Axis Hamada Coordinates] \label{sec2_NAH}
Suppose that $B$ is an integrable magnetic field with Hamiltonian pair $(J,p)$ and elliptic magnetic axis $\gamma$. There is an open set, $\U$, containing $\gamma$ and $C^\infty$ coordinates $(x,y,\phi)$ defined on $\U$ such that $\gamma = \{(0,0,\phi): \phi \in \mathbb{R}\,\mathrm{ mod }\,2\pi\}$ 
and
\begin{equation}\label{eq:NAH}
\begin{split}
    p& = p(\psi)\\
    B &= F^\prime(\psi)\,\nabla y\times\nabla x - G^\prime(\psi )\,\nabla\psi \times \nabla\phi \\
    J& = K^\prime(\psi )\,\nabla y\times\nabla x -L^\prime(\psi )\,\nabla\psi \times \nabla\phi ,
\end{split}
\end{equation}
where $\psi  = \tfrac12(x ^2 + y ^2)$, and all the functions are $C^\infty$.
\end{thm}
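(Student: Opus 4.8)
The plan is to build on \cref{sec2_thm1}. In the elliptic case it supplies $C^\infty$ coordinates $(x,y,\phi)$ on a tubular neighborhood $\U$ of $\gamma$, with $\gamma=\{x=y=0\}$, $\psi=\tfrac12(x^2+y^2)$, $p=P(\psi)$, and
\[
    B=\nabla y\times\nabla x-\Psi'(\psi)\,\nabla\psi\times\nabla\phi ,
\]
so $B$ already has the shape required by \eqref{eq:NAH} (with $F'\equiv 1$, $G'=\Psi'$); it remains only to put $J$ in the corresponding form \emph{in these same coordinates}. First I would note that $J$ is tangent to the flux tori $\{\psi=\mathrm{const}\}$: from $J\cdot\nabla p=0$ and nondegeneracy of the axis ($P'(0)\neq 0$, hence $P'\neq 0$ off $\gamma$) one gets $J\cdot\nabla\psi=0$ off $\gamma$, and by continuity on all of $\U$. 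Since $\nabla y\times\nabla x$ and $\nabla\psi\times\nabla\phi$ are tangent to these tori and independent off $\gamma$, we may write $J=c\,\nabla y\times\nabla x+d\,\nabla\psi\times\nabla\phi$ with $c$ a coordinate component of $J$ (hence smooth across $\gamma$) and $d$ to be determined. Using the identity $(\nabla y\times\nabla x)\times(\nabla\psi\times\nabla\phi)=-\nabla\psi$, the equation $J\times B=\nabla p=P'(\psi)\,\nabla\psi$ is equivalent to $d+c\,\Psi'(\psi)=P'(\psi)$, so
\[
    J=c\,\nabla y\times\nabla x+\bigl(P'(\psi)-c\,\Psi'(\psi)\bigr)\,\nabla\psi\times\nabla\phi ,
\]
and in particular $d$ is also smooth across $\gamma$. (The commutation $[B,J]=0$ is automatic at this stage, being $\nabla\times(J\times B)=\nabla\times\nabla p=0$ with $\Div B=\Div J=0$.)

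The theorem is thus reduced to showing that $c$ depends on $\psi$ alone, and for this I would bring in the last condition $\Div J=0$. Since $\Div(\nabla f\times\nabla g)=0$, a short computation gives
\[
    \Div J=\nabla c\cdot(\nabla y\times\nabla x)-\Psi'(\psi)\,\nabla c\cdot(\nabla\psi\times\nabla\phi)=B\cdot\nabla c ,
\]
so $\Div J=0$ is equivalent to $c$ being a $C^\infty$ first integral of $B$. Restricted to a flux torus $\{\psi=\mathrm{const}\}$, $B$ is the constant-slope field $-\partial_\phi+\Psi'(\psi)\,\partial_\theta$ in a poloidal angle $\theta=\arg(x+iy)$, so on any torus with irrational transform $\Psi'$ every $B$-line is dense; since $c$ is continuous and $B$-invariant it is constant on each such torus, and as these are dense in $\U$ we conclude $c=c(\psi)$. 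I expect this last step --- the reduction of $c$ to a flux function, i.e. the density of irrational flux surfaces near a nondegenerate elliptic axis --- to be the main obstacle, and the place where the hypotheses are genuinely used; it is immediate under a twist/nonresonance assumption on the near-axis transform.

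Finally, with $c=c(\psi)$, set $F(\psi)=\psi$, $G(\psi)=\int_0^\psi\Psi'(s)\,ds$, $K(\psi)=\int_0^\psi c(s)\,ds$, and $L(\psi)=\int_0^\psi\bigl(c(s)\Psi'(s)-P'(s)\bigr)\,ds$, all $C^\infty$, and keep $p(\psi)=P(\psi)$. Then on $\U$ we have $B=F'(\psi)\,\nabla y\times\nabla x-G'(\psi)\,\nabla\psi\times\nabla\phi$ and $J=K'(\psi)\,\nabla y\times\nabla x-L'(\psi)\,\nabla\psi\times\nabla\phi$, which is \eqref{eq:NAH}. In particular one may take $F'\equiv 1$; the general functions in \eqref{eq:NAH} merely record the freedom to reparametrize $\psi$ and the parallel with the classical Hamada form.
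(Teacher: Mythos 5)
Your reduction to the coordinates of \cref{sec2_thm1} and the decomposition $J=c\,\nabla y\times\nabla x+d\,\nabla\psi\times\nabla\phi$ are a reasonable start, but the final step is a genuine gap, and it is exactly where the paper has to work. You claim that $B\cdot\nabla c=0$ forces $c=c(\psi)$ via density of irrational flux surfaces; no twist or nonresonance hypothesis is available, and the claim is false in general. Concretely, take $\Omega=dx\wedge dy\wedge d\phi$, $\iota_B\Omega=dy\wedge dx$ (so $\Psi'\equiv 0$ and every field line is closed), $p=\psi$, and $J=y\,\partial_x-x\,\partial_y+h(x,y)\,\partial_\phi$ for an arbitrary smooth $h$: this is an integrable field with elliptic axis already in the normal form of \cref{sec2_thm1}, yet your coefficient $c=-h(x,y)$ is not a flux function. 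So $J$ is \emph{not} automatically in Hamada form in the coordinates supplied by \cref{sec2_thm1}; one must perform a \emph{further} coordinate change — a shear along the $B$-lines, $\phi\mapsto\phi+\chi$, which preserves $p$ and the normal form of $B$ but shifts $c$ by a directional derivative. The paper constructs this shear as the time-one flow of a vector field $\xi_\lambda=b_\lambda B$ (a Moser deformation interpolating $j=\iota_J\Omega$ to a target $j_*$ built from the action integrals $K(\psi),L(\psi)$ of $\iota_J\Omega$), and the real analytic content is showing that the coefficient $b_\lambda=-h_\lambda/r_\lambda$ is smooth up to the axis, which requires proving $r_\lambda$ is bounded away from zero there. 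None of this appears in your argument.

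Two secondary errors compound the problem. First, $J\times B=\nabla p$ gives $c\,\Psi'+d=\rho\,P'$, where $\rho$ is the Jacobian density of the coordinates from \cref{sec2_thm1}; you dropped the factor $\rho$, which is not a priori a flux function in those coordinates (it only becomes one \emph{after} both normal forms are achieved). Second, with $d=\rho P'-c\,\Psi'$ the divergence is $\Div J=B\cdot\nabla c+P'\,(\nabla\psi\times\nabla\phi)\cdot\nabla\rho$, not $B\cdot\nabla c$; so even the magnetic differential equation you need to solve has a source term. Both issues are symptoms of the same omission: the theorem cannot be read off from the $B$-normal form alone, and the construction of the field-line shear (together with its smoothness on the axis) is the substance of the proof.
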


The construction of  these coordinates in \S\ref{sec:Applications} begins by using the coordinates $(x,y,\phi)$ provided by \cref{sec2_thm1}, and proceeds by appropriately shifting the coordinate system along the field lines of $B$. This is similar in spirit to the construction of Hamada coordinates near a regular invariant torus \eqref{eq:HamadaCoord}, but requires special care in order to establish smoothness on the magnetic axis.

In these coordinates, the volume of a region $R$ is given by $\Vol(R)=\int_R\,\rho \,dx \,dy \,d\phi $, where $\rho$ is a smooth density. Using \eqref{eq:commuting-def} and the normal forms \eqref{eq:NAH}, it is not difficult to show that $\rho$ must be a smooth function of $\psi$. In other words, one would say that the Jacobian is a flux function in near-axis Hamada coordinates.

While the near-axis Hamada coordinates only require that $B$ is integrable, near-axis Boozer coordinates require that $B$ is MHS integrable.
 
\begin{thm}[Near Axis Boozer Coordinates] \label{sec2_NAB}
Suppose $B$ is an MHS integrable magnetic field with Hamiltonian pair $(J=\nabla\times B,p)$ and elliptic magnetic axis $\gamma$. Then there is an open set $\U$ containing $\gamma$ and $C^\infty$ coordinates $(x ,y ,\phi )$ defined on $\U$ such that
\begin{equation}\label{eq:NAB}
\begin{split}
    p& = p (\psi )\\
    B &= f^\prime(\psi )\,\nabla y \times\nabla x  - g^\prime(\psi )\,\nabla\psi \times \nabla\phi \\
    B\times \nabla p& = k^\prime(\psi )\,\nabla y \times\nabla x -l^\prime(\psi )\,\nabla\psi \times \nabla\phi ,
\end{split}
\end{equation}
where $\psi  = \tfrac12(x ^2 + y ^2)$,  and all the functions are $C^\infty$ single-variable functions.
\end{thm}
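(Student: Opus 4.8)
The plan is to start from the near-axis Hamada coordinates provided by Theorem \ref{sec2_NAH} and perform a field-line-preserving change of the angular variable $\phi$, exactly paralleling the classical passage from Hamada to Boozer coordinates on a regular flux surface, while tracking smoothness at $\psi=0$. So first I would invoke Theorem \ref{sec2_NAH} to obtain $C^\infty$ coordinates $(x,y,\phi)$ on a neighborhood $\U$ of $\gamma$ in which $p=p(\psi)$, $B=F'(\psi)\,\nabla y\times\nabla x - G'(\psi)\,\nabla\psi\times\nabla\phi$, and the volume density $\rho$ is a smooth function of $\psi$ alone. Since we now assume MHS integrability, $J=\nabla\times B$, and the object to be normalized is $B\times\nabla p = p'(\psi)\,(B\times\nabla\psi)$. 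Because $B$ is tangent to the $\psi$-surfaces and $\nabla\psi = x\,\nabla x + y\,\nabla y$, a direct computation expresses $B\times\nabla p$ in the Hamada frame; the key point is that its two ``contravariant'' components (along $\nabla y\times\nabla x$ and along $\nabla\psi\times\nabla\phi$) need not be flux functions yet, and the goal is to remove the angular dependence by a gauge-type coordinate shift.

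The second step is to compute the covariant form of $B$. Since $\nabla\times B = J$ and $J\times B=\nabla p$, one has $\nabla\times B$ tangent to $\psi$-surfaces; writing $B = B_\psi\,\nabla\psi + B_\phi\,\nabla\phi + (\text{exact})$ in the near-axis coordinates and using that $B\cdot\nabla\psi=0$, I would show $B_\phi$ and the relevant combination of $B_\psi$ are, after averaging over the torus-like $\psi$-surfaces, flux functions up to an exact differential — this is where the MHS hypothesis enters, via $\nabla\times B = J$ being divergence-free and tangent to $\psi$-surfaces, forcing the poloidal/toroidal current functions to depend on $\psi$ only. Then the desired new toroidal angle is $\zeta = \phi + \lambda(x,y,\phi)$ where $\lambda$ solves a magnetic differential equation $B\cdot\nabla\lambda = (\text{known angular-dependent source with zero field-line average})$ on each $\psi$-surface; as in the elliptic proof of Theorem \ref{sec2_thm1}, solvability with a single-valued, smooth-up-to-the-axis $\lambda$ follows because the relevant fluxes are $\psi$-independent and $B$ is tangent to $\psi$-surfaces, so the obstruction cohomology class vanishes and the removable-singularity argument (passing through the Cartesian-like variables $x,y$) applies. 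Adjusting $y$ by a further $\psi$-dependent factor if needed to match the stated form completes the construction, with $f,g,k,l$ defined as antiderivatives of the resulting flux functions.

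The main obstacle I expect is precisely the smoothness of the angular shift $\lambda$ at the magnetic axis $\psi=0$. Away from the axis this is the textbook Boozer construction, but the source term in the magnetic differential equation is built from covariant components of $B$ that, a priori, could behave like the pathological $(R,\Theta)=(r,\theta+r^{-1})$ example in the introduction. The resolution must again be the presymplectic/removable-singularity machinery: one shows the source, when pulled back to the $(x,y,\phi)$ chart, extends smoothly across $x=y=0$ and has vanishing average along the closed axis field line $\gamma$, so that the primitive $\lambda$ is smooth there too. A secondary technical point is verifying that the $\psi$-dependent rescaling of $(x,y)$ needed to put $B\times\nabla p$ in the exact displayed form does not destroy the smoothness of $\psi=\tfrac12(x^2+y^2)$ or the periodicity of $\phi$; this should follow because the rescaling factor is a smooth positive function of $\psi$ and the map $(x,y)\mapsto(\sqrt{g(\psi)/\psi}\,x,\ldots)$ has a removable singularity at the origin, just as in the discussion following Theorem \ref{sec2_thm1}.
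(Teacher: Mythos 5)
Your strategy---redo the classical Hamada-to-Boozer angle shift and then argue that the shift is smooth at the axis---is genuinely different from the paper's, and as written it leaves the central difficulty unproved. Two concrete problems. First, a shift of the toroidal angle alone, $\zeta=\phi+\lambda$, does not preserve the straight-field-line form of $B$: to pass between two straight-field-line systems with the same $\psi$ you must shift both angles in the locked combination $\theta\mapsto\theta+\iotab(\psi)\,G$, $\zeta\mapsto\zeta+G$ (equivalently, flow along $B$ by a field-line parameter), and near the axis the poloidal part of this shift is a $\psi$- and angle-dependent rotation of $(x,y)$ whose smoothness at $x=y=0$ is exactly as delicate as that of $\lambda$ itself. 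Second, and more seriously, the entire content of the theorem is smoothness \emph{on} the axis; away from $\gamma$ the statement is classical. Your proposal reduces this to the assertion that the source ``extends smoothly across $x=y=0$'' and that ``the same removable-singularity machinery applies,'' but that is precisely the claim requiring proof: one must show that the generating function, defined surface-by-surface by integrating a closed one-form with vanishing periods on each shrinking torus, assembles into a single function that is $C^\infty$ on the solid torus including $\gamma$. In the paper's Hamada proof this step occupies most of the argument (the Moser flow $\xi_\lambda=b_\lambda B$ and the lower bound on $r_\lambda$ at the axis), and nothing in your proposal substitutes for it.

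The paper avoids all of this with one observation you are missing: set $\Omega'=|B|^2\Omega$, $B'=B/|B|^2$, $J'=(B\times\nabla p)/|B|^2$. Then $\iota_{B'}\Omega'=\iota_B\Omega$ and $\iota_{J'}\Omega'=B^\flat\wedge dp$, and (using $dB^\flat=\iota_J\Omega$, i.e.\ the MHS hypothesis) one checks that $B^\flat\wedge dp$ is closed, that $[B',J']=0$, and that $\iota_{J'}\iota_{B'}\Omega'=-dp$, so $(B',J',p,\Omega')$ is again an integrable presymplectic system with the same elliptic axis $\gamma$. Because \cref{NAH_existence} is proved for an arbitrary volume form, near-axis Boozer coordinates for $B$ are literally near-axis Hamada coordinates for this rescaled system, and no new analysis at the axis is required. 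If you want to keep your direct route, you must carry out the on-axis smoothness argument for the angle shift in full, which amounts to reproving the hard part of \cref{NAH_existence}.
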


In order to find the coordinates \eqref{eq:NAB}, we first observe that the vector fields
\begin{equation}\label{eq:Bprime}
    B^\prime = \frac{B}{|B|^2},\quad J^\prime  = \frac{B \times \nabla p}{|B|^2}
\end{equation}
satisfy
\begin{gather*}
    \nabla\cdot(|B|^2 B^\prime)  =\nabla\cdot(|B|^2 J^\prime)  = 0 \\
     [J',B'] = 0,\qquad |B|^2(J^\prime\times B^\prime)  = \nabla p,
\end{gather*}
assuming that $J=\nabla\times B$. These are exactly the conditions for $B^\prime$ to be an integrable magnetic field with Hamiltonian pair $(J^\prime,p)$ if volumes were computed according to $\Vol(U) = \int_U\,|B|^2\,d^3x$ instead of $\Vol(U) = \int_U \,d^3x$. Because we formulate our existence result for near-axis Hamada coordinates on 3D manifolds with arbitrary volume elements, existence of near-axis Boozer coordinates then follows immediately from the corresponding result for Hamada coordinates. In other words, our argument reveals that existence of near-axis Boozer coordinates is merely a corollary of existence of near-axis Hamada coordinates. Note that the converse statement cannot be true in general because Hamada coordinates only require an integrable magnetic field, while Boozer coordinates require MHS integrability.

If $(x,y,\phi)$ is a system of near-axis Hamada (Boozer) coordinates, then away from the axis the coordinates $(\psi,\theta,\zeta)$ defined through $x=\sqrt{2\psi}\cos\theta$, $y=\sqrt{2\psi}\sin\theta$, $\zeta = \phi$ comprise Hamada (Boozer) coordinates in the usual sense.

An immediate application of \cref{sec2_NAH} is to fill the gap in the proof of Theorem 1 from \cite{Burby_Kallinikos_MacKay_2020} concerning the existence and smoothness of a volume-preserving $\T$-symmetry for an MHS integrable magnetic field. An implication of \cref{sec2_NAH} is

\begin{corollary}\label{sec2_corollary}
If $B$ is a integrable magnetic field with Hamiltonian pair $(J,p)$ and  elliptic magnetic axis $\gamma$, and all functions are $C^\infty$, then there is an open set $U$ containing $\gamma$ and a $C^\infty$ volume-preserving $\T$ action $T_\zeta:U\rightarrow U$, $\zeta\in \T$, with nowhere-vanishing infinitesimal generator $u$ that satisfies $[u,B] = [u,J] = 0$.
\end{corollary}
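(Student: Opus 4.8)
The plan is to read the required symmetry directly off the near-axis Hamada normal form of \cref{sec2_NAH}. Apply that theorem to produce an open set $\U\supseteq\gamma$ and $C^\infty$ coordinates $(x,y,\phi)$, with $\phi$ a $2\pi$-periodic coordinate and $\gamma=\{(0,0,\phi)\}$, in which $p=p(\psi)$, $\psi=\tfrac12(x^2+y^2)$, and $B,J$ are given by \eqref{eq:NAH}. Since $\gamma$ is a compact circle and $\U$ is open, there is $\varepsilon>0$ with $U:=\{x^2+y^2<\varepsilon^2\}\subseteq\U$, and this $U$ is invariant under translations of $\phi$. Define
\[
    T_\zeta(x,y,\phi)=(x,y,\phi+\zeta),\qquad \zeta\in\T=\R/2\pi\Z .
\]
Because $\phi+\zeta$ depends only on the classes of $\phi$ and $\zeta$ modulo $2\pi$, this is a well-defined $C^\infty$ action of $\T$ on $U$ with $T_0=\mathrm{id}$, and its infinitesimal generator is the coordinate vector field $u=\partial_\phi$, which is nowhere vanishing on $U$.

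It then remains to verify that $u=\partial_\phi$ preserves the volume element $\Omega$ and commutes with $B$ and $J$. By the remark following \cref{sec2_NAH}, in these coordinates $\Omega=\rho(\psi)\,dx\wedge dy\wedge d\phi$ with $\rho$ a smooth function of $\psi$; in particular $\Omega$ is independent of $\phi$, so $\mathcal{L}_{\partial_\phi}\Omega=0$, and each $T_\zeta$, being the time-$\zeta$ flow of $u$, preserves $\Omega$. For the commutation relations, every ingredient of \eqref{eq:NAH} is independent of $\phi$: the functions $p(\psi),F'(\psi),G'(\psi),K'(\psi),L'(\psi)$ are functions of $\psi$, which satisfies $\partial_\phi\psi=0$, while the vector fields $\nabla y\times\nabla x$ and $\nabla\psi\times\nabla\phi$ are built from $x,y,\phi$ and $\Omega$, all $\phi$-translation invariant. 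Hence $\mathcal{L}_{\partial_\phi}B=\mathcal{L}_{\partial_\phi}J=0$, i.e. $[u,B]=[u,J]=0$. Equivalently, one can argue at the level of the closed two-forms $\iota_B\Omega$ and $\iota_J\Omega$ prescribed by \eqref{eq:NAH}: these are $\phi$-independent, so $\mathcal{L}_{\partial_\phi}\iota_B\Omega=\mathcal{L}_{\partial_\phi}\iota_J\Omega=0$, and combining with $\mathcal{L}_{\partial_\phi}\Omega=0$ and the identity $\iota_{[u,V]}\Omega=\mathcal{L}_u\iota_V\Omega-\iota_V\mathcal{L}_u\Omega$ yields the same conclusion by nondegeneracy of $\Omega$. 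Taking $u=\partial_\phi$ then completes the proof.

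Since the analytic substance is entirely absorbed into \cref{sec2_NAH}, there is no real obstacle in this corollary; the only points needing a word of care are (i) shrinking $\U$ to $U$ so that the $\phi$-translation is globally defined and genuinely $2\pi$-periodic, and (ii) the fact that the Jacobian density is a flux function, a short consequence of \eqref{eq:commuting-def} and \eqref{eq:NAH}, which is precisely what makes $\partial_\phi$ volume-preserving. For the record, $u=\partial_\phi+n(x\,\partial_y-y\,\partial_x)$ works equally well for every integer $n$, since the rotation field generates a $2\pi$-periodic flow and also commutes with $B$, $J$ while preserving $\Omega$; the choice $n=0$ is the simplest and already nowhere vanishing.
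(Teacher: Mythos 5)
Your proposal is correct and follows essentially the same route as the paper: apply \cref{sec2_NAH}, take $T_\zeta$ to be translation in the Hamada angle $\phi$ with generator $u=\partial_\phi$, and deduce volume preservation and $[u,B]=[u,J]=0$ from the $\phi$-independence of the normal form together with the fact that the Jacobian is a flux function. The only difference is that you spell out the verification (and the harmless shrinking of $\U$) in more detail than the paper's one-line argument.
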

\begin{proof}
In near-axis Hamada coordinates, the desired $\T$ action is just $T_\zeta(x ,y ,\phi ) = (x ,y ,\phi +\zeta) $ and the nowhere-vanishing infinitesimal generator is $u=\partial_\zeta$. $T_\zeta$ is volume-preserving and $u$ commutes with $B$ and $J$ because the Jacobian in near-axis Hamada coordinates is a flux function.
\end{proof}

\noindent The argument presented in the proof of Theorem 1 in \cite{Burby_Kallinikos_MacKay_2020} implies the existence of a $C^\infty$ volume-preserving $\T$ action that commutes with $B$ and $J$ in a neighborhood of any regular $p$-surface. Since both the $\T$ action given by  \cref{sec2_corollary} and the $\T$ action given in \cite{Burby_Kallinikos_MacKay_2020} are equivalent to translation along $\zeta $ in Hamada coordinates, the preceding remarks serve to complete the proof of Theorem 1 from \cite{Burby_Kallinikos_MacKay_2020} \emph{provided} the two $\T$ actions can be smoothly glued together. That such gluing can be done follows from the following uniqueness property satisfied by Hamada coordinates: if $(\psi,\theta,\zeta)$ and $(\overline{\theta},\overline{\zeta},\psi)$ are two systems of Hamada coordinates such that the periods of $\nabla\theta,\nabla\zeta$ agree with those of $\nabla\overline{\theta},\nabla\overline{\zeta}$ then $\overline{\theta} = \theta + c_1(\psi)$ and $\overline{\zeta} = \zeta + c_2(\psi)$, where $c_1,c_2$ are $C^\infty$ single-variable functions. For a detailed characterization of Hamada coordinates, see \cref{hamada_characterized}.

\cref{sec2_NAB} on the existence of $C^\infty$ near-axis Boozer coordinates may also be used to fill logical gaps in previous work on near-axis expansions of MHS integrable fields. For example, Section II.C of \cite{Garren_Boozer_91a} (c.f. Eqs.\,(30)-(43) in that reference) asserts without proof the existence of analytic near-axis Boozer coordinates assuming an analytic MHS integrable magnetic field. For the reason mentioned above, the usual construction of Boozer coordinates does not obviously imply the existence of near-axis Boozer coordinates with any particular regularity. However, since analytic  MHS integrable fields are infinitely differentiable, \cref{sec2_NAB} does imply the existence of $C^\infty$ near-axis Boozer coordinates assuming analytic (and more generally $C^\infty$) MHS integrability. Continuous differentiability is more than enough to justify the formal power series expansions studied in \cite{Garren_Boozer_91a}. These remarks apply without modification to the near-axis expansion introduced in Section II.C of \cite{Garren_Boozer_91b}. \cref{sec2_NAB} may also be used to simplify and extend the low-order perturbative construction of near-axis Boozer coordinates given in \cite{Landerman_Sengupta_2018} (cf Appendix A in that reference) and applied in the series of papers \cite{Landerman_Sengupta_2018,Landerman_Sengupta_Plunk_2019,Plunk_Landerman_Helander_2019,Landerman_Sungupta_2019}. Our result immediately implies that near-axis Boozer coordinates exist to all orders in perturbation theory assuming an analytic or $C^\infty$ MHS integrable magnetic field.

\section{Field-Line Flow as a Presymplectic System}\label{sec:Presymplectic}

The geometry of field-line flow is not intrinsically Hamiltonian since the phase space is three-dimensional. It is often assumed that if the field is everywhere non-vanishing one can use, for example, the toroidal coordinate as an effective time and think of the field-line flow as that of a nonautonomous Hamiltonian. In a properly chosen coordinate system, this gives a symplectic dynamical system with the canonically conjugate coordinates given by the toroidal flux and poloidal angle \cite{Boozer81}.

Nevertheless, we assert that this is not the natural geometry in which to consider this problem.
In this paper we argue that a more natural foundation is that of a \textit{presymplectic} form.
In particular,  the presymplectic formulation of field-line flow requires no assumptions on $B$ other than it be divergence-free.

\subsection{A perspective on the geometry of field-line flow}\label{sec:Perspective}
A presymplectic form on a manifold $M$ is analogous to the canonical two-form, $\omega$; 
the nondegenerate, closed two-form, of symplectic geometry, see \cref{sec:Weak}.
Recall, for a Hamiltonian $H$, the flow vector field $X_H \in \mathfrak{X}(M)$ is generated by the equations
\begin{equation}\label{eq:Hamilton}
    \iota_{X_H}\omega = -dH. 
\end{equation}
In canonical coordinates $(q,p)$, where $\omega = dp\wedge dq$, one obtains
\[
    X_H = (\dot q, \dot p) = (\partial H/\partial p, -\partial H /\partial q).
\]

In lieu of the canonical formalism, we begin with a two-form $\beta$ on a three-dimensional space that will correspond to the ``flux form'' of the magnetic field $B$, i.e., for any surface $S$,
\[
    \mathcal{F}_S = \int_S \beta = \int B \cdot \hat{n} d^2S
\]
is the flux of $B$ through $S$ with unit normal $\hat{n}$ and surface area element $d^2 S$. Indeed, we
can reformulate this integral in a coordinate-free way in terms the Riemannian volume form $\Omega$
\[
    \mathcal{F}_S = \int_S \iota_B\Omega.
\]
In general, a vector field $B \in \mathfrak{X}(M)$ can be thought of as an operator on functions, e.g., in coordinates $(x^1,x^2,x^3)$,
\[
    B = B^i \frac{\partial}{\partial x^i},
\]
where $B^i$ is the $i^{th}$ contravariant component of $B$ and we use the summation convention.
In the same coordinates, the Riemannian volume form may be written $\Omega = \rho dx^1 \wedge dx^2 \wedge dx^3$, where $\rho = (\nabla x^1 \cdot \nabla x^2 \times \nabla x^3)^{-1}$. Therefore
\begin{equation}\label{eq:EuclideanBeta}
   \iota_B\Omega = \rho \left(B^1 dx^2\wedge dx^3+ B^2 dx^3\wedge dx^1+ B^3 dx^1 \wedge dx^2\right).
\end{equation}
The equality of the two expressions for $\mathcal{F}_S$ then implies that $\beta = \iota_B \Omega$. This two-form is the natural starting point from which to investigate the field-line flow of $B$. 

The form $\beta$ is called a \textit{presymplectic} form. It is a closed two-form, 
like the canonical form $\omega$, but it is necessarily degenerate.
Indeed, inserting the field $B$ into the flux form \eqref{eq:EuclideanBeta} gives zero:
\begin{align*}
    \iota_B\beta &= \rho \left[(B^2B^3-B^3B^2)dx^1 + (B^3B^1-B^1B^3)dx^2 +(B^1B^2-B^2B^1)dx^3\right] \\
             &= (B \times B)^\flat  = 0 .
\end{align*}
Here ${}^\flat$ denotes the index-lowering operator associated with a metric $g$, so that 
for any vector field $v$,
\begin{equation}\label{eq:flatDef}
    v^\flat = \iota_v g = v_i dx^i
\end{equation}
is the associated one-form, where $v_i$ are the covariant components, $v_i = g_{ij} v^j$, see e.g., \cite{MacKay20}.
Whenever the dimension of the space is odd, the two-form $\beta$ must be degenerate; this is to be contrasted with the assumed nondegeneracy of a symplectic form (which requires an even-dimensional space). The equation $\iota_B \beta = 0$ is analogous to Hamilton's equations \eqref{eq:Hamilton}, though here the Hamiltonian is effectively zero or a constant.

Recall that an $n$ degree-of-freedom Hamiltonian system is (Liouville-Arnold) integrable if it has a set of $n$ independent, Poisson-commuting invariants $\{F_1,F_2,\ldots,F_n\}$ \cite{Arnold78}. Due to the nondegeneracy of the symplectic form $\omega$, each invariant,  when thought of as a Hamiltonian, gives rise, as in \eqref{eq:Hamilton}, to a  vector field through $\iota_{X_{F_i}}\omega = -dF_i$. If $F_i$ is an integral, the vector field $X_{F_i}$ is a symmetry of $X_H$ in that the two vector fields commute, $[X_H,X_{F_i}] = 0$. 

Since the three-dimensional system $\iota_B \beta = 0$ is like  a $1\tfrac12$ 
degree-of-freedom Hamiltonian system, one should expect that a single nontrivial invariant
should be sufficient for integrability. 
In the plasma context, the invariant corresponds to a scalar pressure $p$; this
is invariant under the field-line flow if $B$ lies in surfaces of constant $p$, i.e., 
$B \cdot \nabla p = 0$, or equivalently $\ \L_B p = 0$ for the Lie derivative $\L$. 
If we denote the Hamiltonian vector field associated with this invariant by  $J$, it must solve $\iota_{J}\beta = -dp$, which becomes
\begin{equation}\label{eq:MHS}
   -dp = \iota_J\beta = \iota_J\iota_B \Omega = (B \times J)^\flat \quad 
   \Rightarrow \quad  J \times B = \nabla p .
\end{equation}
As a consequence of the degeneracy of $\beta$, \eqref{eq:MHS} only determines $J$ up to an arbitrary component parallel to $B$.

Equation \eqref{eq:MHS} implies that $J$ must also lie in surfaces of constant $p$. In MHS, we take $J = \nabla \times B$ so that $J$ is the current. Note that 
this then implies that $\nabla \cdot J = 0$, so $J$ is also a divergence free vector field. 
More generally, whenever $J$ is divergence free, it is the required symmetry of $B$ because the commutator then vanishes:
\begin{align*}
    [B,J] &= (B\cdot \nabla) J -(J\cdot \nabla) B \\
          &= \nabla \times (J \times B) - J(\nabla \cdot B) + B(\nabla \cdot J)\\
          &= \nabla \times \nabla p  = 0 ,
\end{align*}
since both $B$ and $J$ are divergence free.

Reformulating \eqref{eq:commuting-def} in terms of $\beta$, we will say that a magnetic field $B$ is \emph{integrable} if there is a $C^\infty$ divergence-free vector field $J$ and a $C^\infty$ function $p$ such that
\[
    \iota_J\beta = -dp
\]
A magnetic field that arises as a solution of the magnetohydrostatic (MHS) equation, \eqref{eq:MHS}, is integrable with $J=\nabla \times B$ and $p$ equal to the hydrostatic pressure. We emphasize, however, that most of the results in this paper do not require that $J$ be the current and $p$ the pressure. Whenever a result applies only to MHS integrable magnetic fields, we will clearly indicate that this is the case.

In the following sections, we investigate the consequences of integrability in the neighborhood of a magnetic axis. For the Liouville-Arnold case, it is well-known that the 
intersection of levels sets of the $n$ invariants is an $n$-torus if it is compact and the invariants have independent gradients. One says that integrable systems are foliated by such invariant tori. On each invariant torus, the commuting symmetries from the invariants give rise to a ``torus-action'' and the flow of
the Hamiltonian is conjugate to a rigid rotation with some frequency vector. For the magnetic field case,
the analog is simply that the iso-pressure surfaces are tori and $B$ has a well-defined rotational transform, $\iotab(p)$, on each surface.

However, this structure breaks down near a degenerate level set, and a magnetic axis is such a degeneracy.
For MHS the standard case corresponds 
to the center of the plasma where the pressure is maximum.
However, any closed field line along which $\nabla p = 0$ corresponds to such an ``axis.''
Even for integrable Hamiltonian systems, the dynamics near such a degenerate level set can be more complicated. A standard example is the separatrix of the integrable pendulum. For the field case, such hyperbolic magnetic field lines occur whenever there are magnetic islands, and these generically occur for rational rotational transform.

\subsection{Presymplectic forms and Hamiltonian flows} \label{sec:presymplecticForms}

Generally, a \textit{presymplectic} form is a closed two-form on a manifold $M$. 
When this form is nondegenerate, it is said to be \textit{symplectic} \cite{Ortega04}. We recall this and several other related concepts in \cref{sec:Weak}.

We will specialize here to the case that $M$ is an orientable three-manifold, as is
appropriate for magnetic fields. As we remarked in \S\ref{sec:Perspective}, every two-form on an odd-dimensional manifold must be degenerate. 
We will assume that there is a closed, presymplectic two-form $\beta$ 
of maximal rank on $M$. Any two-form $\beta$ induces a linear map, $\hat{\beta}$, from vector fields, $X \in \mathfrak{X}(M)$, to one-forms, $\alpha$, by
\begin{equation}\label{eq:bundleMap}
     \hat{\beta}_z:T_zM \to T^*_zM,\quad X \mapsto \alpha_z = \iota_X \beta_z .
\end{equation}
From the fact that $\beta$ is maximal rank, it follows that its kernel,
\begin{equation}\label{eq:BetaKernel}
    \ker \hat{\beta}_z =: \{X \in T_zM : \iota_X \beta_z = 0\},
\end{equation}
is a one-dimensional subspace at each point $z\in M$. 

Let $\Omega$ denote a volume form on $M$. Such a form always exists since $M$ is orientable. As all volume forms are nondegenerate, there is a unique, non-vanishing $B\in \mathfrak{X}(M)$ such that
\begin{equation}\label{eq:betaDef}
   \iota_B \Omega = \beta.
\end{equation}
Noting that $\iota_B \beta = \iota_B\iota_B\Omega = 0$,
the vector field lies in $\ker \hat\beta_z$ for each $z \in M$ and is a Hamiltonian vector field for $\beta$,  \eqref{eq:Hamilton}, with trivial Hamiltonian, $H = 0$.

The following lemma provides useful properties of the magnetic field $B$.
\begin{lemma}\label{lem:BProperties}
    The magnetic vector field satisfies the following properties:
    \begin{enumerate}
        \item $B$ is divergence free, $\L_B \Omega = 0$
        \item $B$ preserves the presymplectic form $\beta$, i.e., $\L_B \beta = 0$.
    \end{enumerate}
\end{lemma}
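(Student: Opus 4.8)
The plan is to derive both statements directly from the defining relation $\iota_B\Omega = \beta$ together with the hypothesis that $\beta$ is closed, using only Cartan's magic formula $\L_X = d\iota_X + \iota_X d$.

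For property (1), I would compute $\L_B\Omega = d\iota_B\Omega + \iota_B d\Omega$. Since $\Omega$ is a volume form on the three-manifold $M$, it is a top-degree form, so $d\Omega = 0$ automatically. Hence $\L_B\Omega = d\iota_B\Omega = d\beta$, and $d\beta = 0$ by hypothesis, so $\L_B\Omega = 0$. In Euclidean coordinates this is precisely the statement $\nabla\cdot B = 0$, since $\L_B\Omega = (\Div B)\,\Omega$; I would include this remark to connect with the physics terminology, but the coordinate-free argument is complete on its own.

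For property (2), apply Cartan's formula again: $\L_B\beta = d\iota_B\beta + \iota_B d\beta$. The second term vanishes because $\beta$ is closed. The first term vanishes because $\iota_B\beta = \iota_B\iota_B\Omega = 0$ — this is the identity already noted in the text right before the lemma, reflecting the elementary fact that $\iota_B\iota_B = 0$ for any vector field (equivalently, that $B\times B = 0$). Therefore $\L_B\beta = 0$.

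There is really no substantive obstacle here: the lemma is a formal consequence of $\beta$ being closed and of $B$ being the $\Omega$-dual of $\beta$, so the only ``work'' is invoking Cartan's formula twice and noting that $d\Omega=0$ for dimensional reasons and $\iota_B\beta=0$ by antisymmetry. If anything merits a word of care, it is simply recording that both computations rely on closedness of $\beta$, which is part of the standing assumption that $\beta$ is a (maximal-rank) presymplectic form; no nondegeneracy or maximal-rank hypothesis is needed for either conclusion.
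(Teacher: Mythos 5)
Your proposal is correct and follows exactly the paper's argument: two applications of Cartan's magic formula, using $d\Omega=0$, $d\beta=0$, and $\iota_B\beta=\iota_B\iota_B\Omega=0$. Nothing further is needed.
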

\begin{proof}
    Using Cartan's magic formula \cite{MacKay20} and the fact that $\beta$ and $\Omega$ are closed, it is easy to see that the flow generated by $B$ is volume preserving:
    \[ \L_B \Omega = \iota_B d\Omega + d\iota_B\Omega = d \beta = 0. \]
    Similarly since $B$ is in the kernel of $\beta$, applying the Lie derivative $\L_B$ to $\beta$ yields,
    \[\L_B \beta = \iota_B d\beta + d \iota_B\beta = 0. \]
\end{proof}

Defining the vector field $B$ by \eqref{eq:betaDef} is not the only way to obtain dynamics from the presymplectic form $\beta$. Indeed, through the map $\hat{\beta}$, \eqref{eq:bundleMap}, a vector field $X$ can be generated from a one-form $\alpha$ provided we can solve $\hat{\beta}(X) = \alpha$. However, since $\ker \hat{\beta} $ is non trivial, the existence of a corresponding vector field $X$ for every choice of $\alpha$ is not guaranteed. If there does exist a solution, following \cite{Enriquez99}, we may define two related concepts:
\begin{definition}[Presymplectic and Hamiltonian Vector Fields]
    A vector field $X\in \mathfrak{X}(M)$ is \emph{presymplectic with respect to $\beta$} or is \emph{locally Hamiltonian} if there exists a closed one-form
    $\alpha$ such that $\iota_X \beta = \alpha$. In this case, we call $(X,\alpha)$ a \emph{presymplectic pair}.\\
    If, in addition, $\alpha$ is exact, that is $\alpha = -dH$ for some $H\in C^\infty(M)$, then $X$ is a Hamiltonian vector field for the Hamiltonian $H$, and $(X,H)$ is a \emph{Hamiltonian pair}.
\end{definition}

Since $\L_X \beta = \iota_X d\beta + d \iota_X \beta =  d\alpha = 0$, every presymplectic vector field, and hence every Hamiltonian vector field, preserves the presymplectic form.

Whilst every Hamiltonian pair is presymplectic, the converse is not necessarily true. It is true that, given a presymplectic pair, $B$ preserves $\alpha$.
\begin{lemma}\label{lem:presym}
    Suppose that $(X,\alpha)$ is a presymplectic pair for $\beta$. Then
    \begin{enumerate}
        \item $\iota_B \alpha = 0$;
        \item $\L_B \alpha =  0$;
        \item $\beta\wedge \alpha = 0 $.
    \end{enumerate}
    Moreover, if $(X,H)$ is a Hamiltonian pair, $B$ is tangent to surfaces of constant $H$, that is $\ \L_B H = 0$.
\end{lemma}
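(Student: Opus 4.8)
The plan is to derive all four items from the single structural fact that $B$ spans $\ker\hat\beta$, i.e. $\iota_B\beta = 0$ (equivalently $\iota_B\iota_B\Omega = 0$), together with Cartan's magic formula and the closedness of $\alpha$ that is built into the definition of a presymplectic pair. No substantive analysis is required; the whole lemma is a formal identity-chase.

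For (1), I would write $\alpha = \iota_X\beta$ and use that interior products anticommute: $\iota_B\alpha = \iota_B\iota_X\beta = -\iota_X\iota_B\beta = 0$. Item (2) is then immediate from Cartan's formula, $\L_B\alpha = \iota_B\,d\alpha + d\,\iota_B\alpha$, since $d\alpha = 0$ (as $(X,\alpha)$ is a presymplectic pair) and $\iota_B\alpha = 0$ by (1). For the last claim, if $(X,H)$ is a Hamiltonian pair then $\alpha = -dH$, so (1) gives $\L_B H = \iota_B\,dH = -\iota_B\alpha = 0$; hence $B$ is tangent to the level sets of $H$.

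Item (3) is a pointwise linear-algebra consequence of (1). On the three-dimensional space $T_zM$, the form $\beta_z$ has rank $2$ with kernel the line $\R B_z$; choosing a coframe $e^1,e^2,e^3$ with $\beta_z = e^1\wedge e^2$ and $e^3(B_z)=1$, one computes $\beta_z\wedge\alpha_z = \alpha_z(B_z)\,e^1\wedge e^2\wedge e^3$, so $\beta\wedge\alpha$ vanishes exactly where $\iota_B\alpha$ does, and the latter is $0$ by (1). Equivalently, since $\beta\wedge\alpha$ is a top-degree form on $M$ one may write $\beta\wedge\alpha = h\,\Omega$, contract with $B$ using the graded Leibniz rule to get $0 = (\iota_B\beta)\wedge\alpha + \beta\wedge(\iota_B\alpha) = \iota_B(h\,\Omega) = h\,\beta$, and then cancel $\beta$, which is nowhere zero because it has maximal rank, to conclude $h\equiv 0$.

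I do not expect any real obstacle here. The only points that deserve minor care are the sign in the graded Leibniz rule for the contraction used in (3), and the appeal to maximal rank of $\beta$ (already established earlier in the paper) to know $\beta$ is nowhere-vanishing, which is exactly what licenses cancelling it to deduce $h\equiv 0$.
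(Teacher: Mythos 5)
Your proof is correct. Items (1), (2), and the final Hamiltonian claim are argued exactly as in the paper: antisymmetry of the interior product gives $\iota_B\alpha=\iota_B\iota_X\beta=0$, Cartan's formula plus closedness of $\alpha$ gives $\L_B\alpha=0$, and $\alpha=-dH$ turns (1) into $\L_BH=0$. The only place you diverge is item (3). The paper contracts the identity $\beta\wedge\beta=0$ with $X$ to get $\alpha\wedge\beta+\beta\wedge\alpha=0$, and then uses graded commutativity of a two-form with a one-form to conclude $2\,\beta\wedge\alpha=0$; your route instead writes the top-degree form as $\beta\wedge\alpha=h\,\Omega$, contracts with $B$ so that both $(\iota_B\beta)\wedge\alpha$ and $\beta\wedge(\iota_B\alpha)$ vanish (by the kernel condition and by (1)), and cancels the nowhere-vanishing $\beta$ from $h\,\beta=0$ — or equivalently does the pointwise coframe computation showing $\beta\wedge\alpha=(\iota_B\alpha)\,e^1\wedge e^2\wedge e^3$. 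Both are two-line formal arguments; the paper's has the mild advantage of not invoking $\Omega$ or the maximal-rank hypothesis at all for this item, while yours makes explicit that $\beta\wedge\alpha$ is pointwise proportional to $\iota_B\alpha$, which is a slightly sharper statement. Your signs in the graded Leibniz rule are right ($\deg\beta=2$, so no sign appears), and the appeal to $\beta_z\neq0$ to cancel it is legitimate since $\beta$ has maximal rank by standing assumption.
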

\begin{proof}
    By definition, $\iota_X\beta = \alpha$, and then \eqref{eq:betaDef} gives $\iota_B\alpha = \iota_B\iota_X\beta = 0$. Taking the Lie derivative  of $\alpha$ then gives
    \[\ \L_B \alpha = \iota_B d\alpha + d \iota_B\alpha = 0, \]
    using the fact that $\alpha$ is closed.
    Now $\beta \wedge \beta = 0$, so $0 = \iota_X (\beta \wedge \beta)
    =(\iota_X\beta)\wedge \beta + \beta \wedge \iota_X\beta$.
    Therefore $\beta\wedge \alpha = \beta\wedge \iota_X\beta = -(\iota_X\beta)\wedge \beta = -\alpha\wedge\beta$. 
    Since on a three-manifold two-forms commute with one-forms, this finally implies $\beta\wedge\alpha = 0$.
    
    Finally, if $\alpha = dH$ for some $H\in C^\infty(M)$ then we have $0=\iota_B \alpha = \iota_B dH = \L_B H$, concluding the proof.
\end{proof}

It is natural to ask whether the converse of  \cref{lem:presym}(1) is true; that is, provided $\iota_B\alpha = 0$ 
and $d\alpha=0$, are we guaranteed that there exists a presymplectic vector field $X$
that makes $(X,\alpha)$ a presymplectic pair? The following proposition provides an affirmative answer.
\begin{lemma}\label{lem:PresympCriteria}
    Suppose $M$ is an orientable three-manifold. Then $(X,\alpha)$ is presymplectic with respect to $\beta$ if and only if $ \iota_B\alpha = 0 $.
\end{lemma}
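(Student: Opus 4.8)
The plan is to prove the nontrivial direction: given a closed one-form $\alpha$ on the orientable three-manifold $M$ with $\iota_B\alpha = 0$, construct a vector field $X$ with $\iota_X\beta = \alpha$. The forward direction is already \cref{lem:presym}(1). Since $\beta = \iota_B\Omega$ is of maximal rank with one-dimensional kernel spanned by $B$ at each point, the bundle map $\hat\beta_z : T_zM \to T^*_zM$ has one-dimensional kernel (spanned by $B_z$) and hence one-dimensional cokernel. So at each point $z$ the equation $\iota_X\beta_z = \alpha_z$ is solvable if and only if $\alpha_z$ lies in the image of $\hat\beta_z$, and the solution is then unique up to adding a multiple of $B_z$.

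First I would identify the image of $\hat\beta_z$ pointwise. The key linear-algebra fact on a three-dimensional vector space with a chosen volume element and a fixed nonzero vector $B$: the map $X \mapsto \iota_X\iota_B\Omega$ sends $T_zM$ onto exactly the annihilator-type subspace $\{\eta \in T^*_zM : \eta \text{ vanishes on } B\}$ — equivalently, $\iota_B\eta = 0$. Indeed $\iota_B(\iota_X\beta) = -\iota_X(\iota_B\beta) = 0$ shows the image is contained in $\{\eta : \iota_B\eta = 0\}$; and both spaces are two-dimensional (the image because $\ker\hat\beta_z = \langle B_z\rangle$ is one-dimensional, the annihilator because $B_z\neq 0$), so they coincide. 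Therefore the pointwise solvability condition $\alpha_z \in \operatorname{im}\hat\beta_z$ is precisely the hypothesis $\iota_B\alpha = 0$.

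Next I would upgrade this pointwise statement to a smooth global one. The naive worry is that choosing $X_z$ pointwise need not be smooth, but this is handled by exhibiting an explicit smooth right inverse to $\hat\beta$. Pick any auxiliary Riemannian metric (or just use $\Omega$ and $B$ directly): for a one-form $\alpha$ with $\iota_B\alpha = 0$, one can write down $X$ via a ``cross-product'' type formula, e.g. in the Euclidean model $X = (B\times \alpha^\sharp)/|B|^2$ (here $\alpha^\sharp$ is the metric dual), and check $\iota_X\beta = \iota_X\iota_B\Omega = (B\times X)^\flat$; a short computation using $B\cdot\alpha^\sharp = \iota_B\alpha = 0$ and the vector-triple-product identity gives $B\times X = \alpha^\sharp$, hence $\iota_X\beta = \alpha$. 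Written invariantly this uses only $\Omega$, $B$, and the metric, so $X$ depends smoothly (linearly, in fact) on $\alpha$, and $X$ is $C^\infty$ whenever $\alpha$ is. Finally $d\alpha = 0$ together with $\iota_X\beta = \alpha$ gives $\L_X\beta = d\iota_X\beta + \iota_X d\beta = d\alpha = 0$, so $(X,\alpha)$ is a genuine presymplectic pair.

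The main obstacle — really the only subtlety — is ensuring the solution can be chosen globally and smoothly, rather than just pointwise; the kernel freedom (adding multiples of $B$) means there is no canonical choice, so one must commit to one, and the cleanest route is the explicit metric formula above, which simultaneously settles existence, smoothness, and dependence on $\alpha$. Everything else is linear algebra in three dimensions plus Cartan's formula. I would also remark that this lemma is exactly what makes the presymplectic framework convenient: it reduces the search for presymplectic (locally Hamiltonian) vector fields to the purely algebraic condition $\iota_B\alpha = 0$ on closed one-forms.
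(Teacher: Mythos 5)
Your proof is correct and follows essentially the same route as the paper: both reduce the problem to the pointwise fact that $\operatorname{im}\hat\beta_z$ is exactly the annihilator of $B_z$, and both resolve the smooth-selection issue by fixing an auxiliary metric --- your explicit cross-product formula is precisely the inverse of the paper's bundle isomorphism $\hat b(X)=\iota_X\beta+g(X,B)B^\flat$ applied to one-forms annihilating $B$. The only quibble is a sign: with the convention $\iota_X\iota_B\Omega=(B\times X)^\flat$ used in the paper, the triple-product identity gives $B\times\bigl(B\times\alpha^\sharp\bigr)=-|B|^2\alpha^\sharp$, so the formula should read $X=(\alpha^\sharp\times B)/|B|^2$.
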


\begin{proof}
    Since the only if part is \cref{lem:presym}(1), we only need to show that, given a nonzero, closed one-form $\alpha$ with $\iota_B\alpha = 0$, there exists a vector field $X$ such that $\iota_X \beta = \alpha$. Without loss of generality, assume $M$ is equipped with a Riemannian metric $g$. Then, using the one-form $B^\flat$, recall \eqref{eq:flatDef}, we define a map from vector fields to one-forms,
    \begin{equation}\label{eq:hatb}
        \hat{b}: X \mapsto \iota_X \beta + g(X,B) B^\flat,
    \end{equation}
    This map is a bundle isomorphism: indeed, injectivity is seen by noting $\hat{b}(X) = 0$ only if $g(X,B) = 0$ and $\iota_X \beta = 0$, which is only possible if $X = 0$. Now, for surjectivity, we must have a solution to $\iota_X\beta + g(X,B)B^\flat = \alpha$ for any $\alpha$. Contracting with $B$ yields
    \[ g(X,B)|B|^2 = \iota_B \alpha \implies g(X,B) = |B|^{-2} \iota_B \alpha. \]
    Taking the wedge product with $B^\flat$ yields,
    \[ (\iota_X\beta) \wedge B^\flat = \alpha \wedge B^\flat \implies \iota_X( \beta\wedge B^\flat) = \alpha\wedge B^\flat + (|B|^{-2}\iota_B\alpha) \beta.\]
    As $\beta\wedge B^\flat$ is a volume form, then this equation must have a solution $X$. Thus $\hat{b}$ is an isomorphism as claimed.
    
    Suppose that $\iota_B \alpha = 0$. Then $X = \hat{b}^{-1}(\alpha)$  is the unique vector field satisfying $g(X,B) = 0$ and $\iota_X \beta = \alpha$.
\end{proof}



There are some interesting differences between Hamiltonian vector fields on a presymplectic manifold $(M,\beta)$ and Hamiltonian vector fields on a symplectic manifold $(N,\omega)$ (see \cref{sec:Weak}). Recall that a symplectic two-form $\omega$ is nondegenerate, and therefore given any function $H\in C^\infty(M)$ there is unique, Hamiltonian vector field $X_H$ by \eqref{eq:Hamilton}.
In the presymplectic case, such a vector field need not exist. Even if it does exist,
it is not unique. Indeed, any $X + f B$ for $f \in C^\infty(M)$ is also 
a Hamiltonian vector field. 

Perhaps more surprisingly, unlike the symplectic case, not all presymplectic Hamiltonian vector fields are volume preserving!
\begin{lemma}\label{lem:divIsCommute}
    A Hamiltonian vector field $X$ is volume preserving if and only if $[X,B] = 0$
\end{lemma}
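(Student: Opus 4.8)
The plan is to compute the divergence of a Hamiltonian vector field $X$ directly in terms of $B$ and the commutator $[X,B]$, using the presymplectic structure. Recall that $X$ Hamiltonian means $\iota_X\beta = -dH$ for some $H$, and $\beta = \iota_B\Omega$. The divergence of $X$ is measured by $\L_X\Omega = (\Div X)\,\Omega$, so the claim is equivalent to showing $\L_X\Omega = 0 \iff [X,B]=0$.

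First I would express $\L_X\Omega$ in terms of $\beta$. Since $\iota_B\Omega = \beta$ and $\Omega$ is a top form on a three-manifold, contracting $\L_X\Omega$ with $B$ and comparing should recover everything: indeed $\iota_B\L_X\Omega = \L_X(\iota_B\Omega) - \iota_{[X,B]}\Omega = \L_X\beta - \iota_{[X,B]}\Omega$. By the remark following the definition of presymplectic/Hamiltonian vector fields (or \cref{lem:presym}), $\L_X\beta = 0$. Hence $\iota_B\L_X\Omega = -\iota_{[X,B]}\Omega$. Now $\L_X\Omega = (\Div X)\Omega$ for some function $\Div X$, so $\iota_B\L_X\Omega = (\Div X)\,\iota_B\Omega = (\Div X)\,\beta$. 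Therefore
\[
   (\Div X)\,\beta = -\iota_{[X,B]}\Omega.
\]
This single identity is the crux. If $\Div X = 0$, then $\iota_{[X,B]}\Omega = 0$, and since $\Omega$ is nondegenerate this forces $[X,B] = 0$. Conversely, if $[X,B]=0$, then $(\Div X)\,\beta = 0$; since $\beta = \iota_B\Omega$ is nowhere zero ($B$ is non-vanishing and $\Omega$ nondegenerate), we conclude $\Div X = 0$.

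The main obstacle — really the only subtlety — is the bookkeeping with the identity $\iota_B\L_X\Omega = \L_X\iota_B\Omega - \iota_{[X,B]}\Omega$, which is the standard fact that $\L_X$ and $\iota_B$ commute up to the term $\iota_{[X,B]}$; this should be cited or verified in one line via Cartan calculus. One should also note explicitly where the Hamiltonian hypothesis enters: it is used only to guarantee $\L_X\beta = 0$, which in fact holds for any \emph{presymplectic} vector field $X$ (closed $\alpha$), so the lemma could be stated in that greater generality — but as written, the Hamiltonian assumption certainly suffices. No hard analysis is needed; the whole argument is a short identity plus nondegeneracy of $\Omega$ and non-vanishing of $B$.
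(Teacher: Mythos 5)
Your proof is correct and follows essentially the same route as the paper: both hinge on the identity $\iota_{[X,B]}\Omega = \L_X\iota_B\Omega - \iota_B\L_X\Omega$, the fact that $\L_X\beta=0$ for a Hamiltonian (indeed any presymplectic) vector field, and the nondegeneracy of $\Omega$ together with the non-vanishing of $\beta$. The only differences are cosmetic (sign/ordering of the commutator and the explicit remark that the lemma holds for presymplectic $X$), so nothing further is needed.
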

\begin{proof}
    First, note that since $\Omega$ is a three-form on a three-manifold, the divergence is defined
    by $\ \L_X\Omega = (\Div X) \Omega$  so that $ \ \L_X \Omega = 0 \iff \Div X = 0 $.
    Now, since $\Omega$ is nondegenerate, $\iota_{[B,X]} \Omega = 0 \iff [B,X] = 0 $. Moreover,
    \begin{align*}
        \iota_{[B,X]} \Omega &= \iota_B \ \L_X \Omega - \L_X \iota_B \Omega \\
        &= (\Div X )\iota_B \Omega + \L_X \beta \\
        &= (\Div X )\beta .
    \end{align*}
    Thus, $\L_X\Omega = 0 \iff [X,B] = 0$.
\end{proof}

\subsection{Integrable presymplectic systems} \label{sec:Integrability}

As we argued in \S\ref{sec:Intro}, 
the geometric formulation of integrability requires both an invariant and a corresponding symmetry.

\begin{definition}[Integrability]\label{def:integrable}
    Let $M$ be an orientable three-manifold with presymplectic form $\beta$.
    We say that the presymplectic form $\beta$ is integrable if there exists a
    volume form $\Omega$ and a Hamiltonian pair $(J,p)$ so that $B$ is the unique vector field that satisfies $\iota_B\Omega = \beta$, and
    \[
        [B,J] = 0.
    \]
    We refer to the $(B,J,p, \Omega)$ (or equivalently $(\beta,J,p,\Omega)$) as an \textit{integrable presymplectic system}.     
\end{definition}
From \cref{lem:presym,lem:divIsCommute}, integrability of $B$ is equivalent to the existence of a Hamiltonian pair $(J,p)$ such that the flow of $J$ is volume preserving.

For the magnetohydrostatics case, we require, in addition, that $J$ is the current:
\begin{definition}[MHS integrable]\label{def:MHSIntegrable}
Let $(M,g)$ be a Riemannian three-manifold with volume form $\Omega$ induced by the metric tensor $g$. An integrable presymplectic system $(B,J,p,\Omega)$ on $M$ is  \emph{MHS integrable} if
\[
    \iota_J\Omega = d\iota_B g \equiv dB^\flat.
\]
\end{definition}

The goal of this section is to understand the possible topologies of the level sets of $p$ 
for an integrable presymplectic system on a compact three-manifold $M$. In  \S\ref{sec:NormalForm}, 
we will use this to obtain normal form coordinates near points of $M$ where $dp$ vanishes.

To avoid pathologies, we will assume that $p$ is \emph{proper}, that is, 
that inverse images of compact sets are compact. It is useful 
to consider $p$ as generating a foliation of $M$, the leaves of which are the level 
sets $p^{-1}(c)$, $c\in p(M)$. A leaf of $p^{-1}(c)$ is said to be \emph{regular} if $dp\neq 0$ 
for all points on $p^{-1}(c)$ and \textit{singular} if $dp=0$ for some point on $p^{-1}(c)$. Due to the fact that $p$ is proper, both the regular and singular leaves are compact. 

It is well known that the connected components of regular leaves of $p$ are 
diffeomorphic to $\T^2$ \cite[Sec. 49]{Arnold78}.
The proof of this fact goes roughly as follows. Let $\Phi^B_{t}$, and $\Phi^J_{t}$ be the flows of 
$B$, and $J$, respectively. Since $[B,J] = 0$, the two-parameter group generated
by $\Phi_{t_1,t_2} := \Phi^B_{t_1}\circ\Phi^J_{t_2}$, is Abelian, and so can be thought of as 
an $\R^2$-action on the manifold $M$. Since both $B$ and $J$ preserve $p$, it follows 
that the orbits of this action are contained in the leaves of $p$. For each regular leaf, the $\R^2$-action is 
non-singular and thus the orbit of a point on the leaf must be either 
diffeomorphic to $\R^2, \T\times \R$ or $\T^2$. Such an orbit must be closed; indeed, if there were some point in the closure that is not in the orbit, then the $\R^2$ action would be singular at this point, contradicting the regularity of the leaf. Hence the orbit must be diffeomorphic to $\T^2$ and the result follows.

By contrast, the singular leaves, which have non-empty intersections with
\[
    \crit(p) = \{ z \in M : dp_z = 0\}, 
\]
i.e., the set of critical points of $p$, can be arbitrarily complex. A sketch showing some possible regular and singular leaves is given in \cref{fig:comparison}. We will focus our attention on transversely nondegenerate critical points. 

\begin{definition}
    The set of \emph{nondegenerate} critical points of $p$, $\crit_0(p)$, contains all 
    $z\in\crit(p)$ such that, on any section $\Sigma \ni z$ transverse to $B$, the restriction $p|_\Sigma$ of $p$ to $\Sigma$ is a Morse function in a neighborhood of $z$.
\end{definition}

\begin{figure}[ht]
    \centering
    \includegraphics[width=0.5\linewidth]{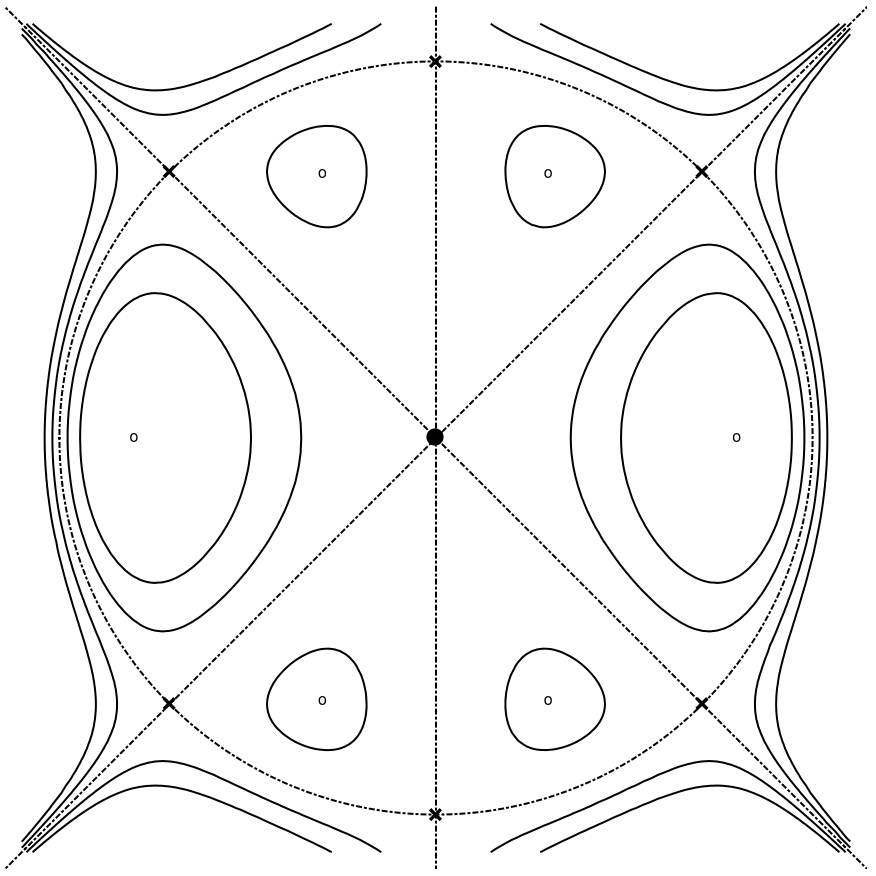}
    \caption{Example of an intersection between trajectories of $B$ and a transverse section $\Sigma$. The `x' points represent hyperbolic singular orbits while the  `o' points represent elliptic. The central, $\bullet$ point corresponds to a degenerate orbit. The dashed curves are  singular leaves containing several hyperbolic points. The tori encircling tori encircling the elliptic orbits are regular leaves. }
    \label{fig:comparison}
\end{figure}

The topological structure of $\crit_0(p)$ for two degree of freedom Hamiltonian systems was given in \cite{deVediere00}. The proof is easily adapted to the presymplectic case.

\begin{lemma}
    If $p$ is a proper map, then the set of nondegenerate critical points, $\crit_0(p)$, is the union of finitely many, disjoint, periodic orbits of $B$. 
\end{lemma}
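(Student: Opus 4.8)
The plan is to show that $\crit_0(p)$ is a closed, one-dimensional, embedded submanifold of $M$ that is everywhere tangent to $B$; then compactness (from properness of $p$) and the non-vanishing of $B$ will force it to be a finite union of periodic orbits. First I would establish that $\crit_0(p)$ is a one-dimensional submanifold. Fix $z\in\crit_0(p)$ and a section $\Sigma\ni z$ transverse to $B$, so locally $M\cong\Sigma\times(-\epsilon,\epsilon)$ with the last factor the $B$-flow parameter. Since $\L_B p=0$ by \cref{lem:presym} (applied to the Hamiltonian pair $(J,p)$), $p$ is constant along $B$-orbits, hence $p=p|_\Sigma\circ\pi$ for the projection $\pi$ onto $\Sigma$. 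A point $z'$ near $z$ is critical for $p$ exactly when $d(p|_\Sigma)$ vanishes at $\pi(z')$ \emph{and} $\partial p/\partial(\text{flow parameter})=0$; the latter is automatic, so $\crit(p)$ near $z$ is $(\crit(p|_\Sigma))\times(-\epsilon,\epsilon)$. Because $z\in\crit_0(p)$, $p|_\Sigma$ is Morse near $\pi(z)$, so $\pi(z)$ is an isolated nondegenerate critical point of $p|_\Sigma$; thus $\crit(p|_\Sigma)$ is locally the single point $\pi(z)$, and $\crit_0(p)$ near $z$ is the smooth arc $\{\pi(z)\}\times(-\epsilon,\epsilon)$, which is a one-dimensional embedded submanifold tangent to $B$.

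Next I would check this local description is consistent globally, i.e.\ that $\crit_0(p)$ is a genuine embedded $1$-manifold and that membership in $\crit_0(p)$ does not depend on the chosen transversal. Independence of the transversal follows from the flow-invariance just used: any two sections through a point of a $B$-orbit are related by the (local) $B$-flow, which is a diffeomorphism, and Morseness of $p|_\Sigma$ at a point is preserved under diffeomorphism since $p$ is flow-invariant, so the restricted Hessians are conjugate (up to the trivial flow direction). Hence $\crit_0(p)$ is a well-defined set, and the local arcs patch to make it a closed subset of $\crit(p)$ that is an embedded one-dimensional submanifold of $M$, everywhere tangent to the nowhere-zero field $B$. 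Since $\{p\ \mathrm{const}\}$ forces $\crit_0(p)\subset p^{-1}(p(\crit_0(p)))$ and each such value gives a compact (singular) leaf by properness, and $\crit_0(p)$ is itself closed in $M$, each connected component of $\crit_0(p)$ is compact.

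A connected, compact, embedded one-dimensional manifold is diffeomorphic to a circle, and being an integral curve of the non-vanishing vector field $B$ it must be a periodic orbit of $B$. Finally, finiteness: the periodic orbits comprising $\crit_0(p)$ are pairwise disjoint (distinct integral curves of $B$ through distinct points are disjoint), and they are isolated in $\crit_0(p)$ by the local product description above (no sequence of distinct components can accumulate, since near any point of $\crit_0(p)$ the set is a single arc). An infinite family of disjoint circles inside the compact set $\crit_0(p)$ (compact because it is closed in $M$ and contained in finitely many compact leaves — here one uses that $p(\crit_0(p))$ is discrete, again from the local analysis, combined with properness) would have an accumulation point, contradicting isolation. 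Hence $\crit_0(p)$ is a finite disjoint union of periodic orbits of $B$.

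The main obstacle I anticipate is the bookkeeping around the transversal section: making precise that ``Morse on $\Sigma$'' is a well-defined, $\Sigma$-independent condition and that the local arcs assemble into an embedded submanifold without self-intersections or accumulation. Once the reduction $p=p|_\Sigma\circ\pi$ and the resulting identification $\crit(p)\cong\crit(p|_\Sigma)\times(\text{flow direction})$ are set up cleanly via $\L_B p=0$, the rest is a short topological argument; I would expect the authors to invoke the adaptation of \cite{deVediere00} precisely for this local normal-form/patching step.
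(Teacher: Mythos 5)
Your proposal is correct and follows essentially the same route as the paper's proof: a flow-box reduction using $\L_Bp=0$ to identify $\crit_0(p)$ locally with $\crit(p|_\Sigma)\times I$, isolation of the nondegenerate critical point on the section (the paper cites a corollary of the Morse--Bott lemma where you invoke Morseness directly), compactness forcing each component to be a periodic orbit of the non-vanishing field $B$, and an accumulation argument for finiteness. The only cosmetic difference is that you route compactness through properness of $p$ and discreteness of the critical values, whereas the paper invokes the standing assumption that $M$ is compact.
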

\begin{proof}
    Since $B$ is non-vanishing,  the flow-box theorem guarantees that there is a neighborhood $V$ of any $z\in\crit_0(p)$ so that $B$ has section $\Sigma$ containing $z$. Moreover, $V$ is diffeomorphic to $\Sigma\times I$ where $I$ is some small interval containing $0$, and each trajectory of $B$ is given by $s\times I$ for some $s\in\Sigma$. As $p$ is constant under the flow of $B$, then it follows $p(s\times I) = p(s,0)$ for each $s\in \Sigma$.  Furthermore, the rank of the quadratic form on $z\times I$ induced by $p$ is constant, and $dp(s,t) = 0$ if and only if $s = z$. Hence, $z\times I$ is diffeomorphic to $V\cap \crit_0(p)$, and $\crit_0(p)$ is a smooth submanifold of $M$.
    
    As $z$ is nondegenerate, a simple corollary of the Morse-Bott lemma (see \cref{thm:Morse-Bott}) guarantees that $z$ is an isolated critical point on $\Sigma$. Together with the compactness of $M$, this guarantees that $\crit_0(p)$ is compact. Hence, the orbit of each point $z\in\crit_0(p)$ is a compact one-dimensional submanifold of $\crit_0(p)$ and thus a periodic orbit. It can then be concluded that $\crit_0(p)$ is the union of disjoint periodic orbits $\gamma_i $, i.e. $\crit_0(p) = \cup_i \gamma_i$. Finally, as $p$ is proper, there can only be finitely many such $\gamma_i$.
\end{proof}

The fact that the set of nondegenerate critical points $\crit_0(p)$ is constituted by periodic orbits warrants the following definition.

\begin{definition}[Magnetic Axes] \label{def:SingularOrbit}
    Orbits $\gamma_i$ in $\crit_0(p)$ are called \emph{singular orbits} of the $\mathbb{R}^2$ action generated by $B$ and $J$. In the context of magnetic fields, the singular orbits are referred to as \emph{nondegenerate magnetic axes}.
\end{definition} 
Due to the nondegeneracy condition, singular orbits come in two flavors. If $p|_\Sigma$ is locally a maximum or minimum at $z_i\in \gamma_i\cap\Sigma$, then the singular orbit $\gamma_i$ is an elliptic periodic orbit. In this case, the singular leaf $\Gamma = p^{-1}(p(z_i))$ has a connected component that is merely the single periodic orbit $\gamma_i$. In the alternative case, when $p|_\Sigma$ is not a local extremum at $z_i$, the orbit $\gamma_i$ is a hyperbolic orbit.

For an integrable presymplectic system, the singular orbits can be classified using
the normal Hessian of $p$:

\begin{definition}[Elliptic/Hyperbolic singular orbits]\label{def:ellipticAxis}
An \emph{elliptic/hyperbolic} singular orbit for an integrable presymplectic system $(B,J,p,\Omega)$ is a smooth closed curve $\gamma$ such that the normal Hessian of $D^2_\perp p$ along $\gamma$ is sign-definite/indefinite.
\end{definition}
In the hyperbolic case a connected component of a leaf may contain several singular orbits, which are perhaps degenerate, recall \cref{fig:comparison}. Provided the connected component does not contain any degenerate orbits, the following generalization of a result for two degree-of-freedom Hamiltonian systems, \cite{deVediere00}, reveals how the hyperbolic orbits may be connected.

\begin{lemma}
    Let $\Gamma$ be a connected component of a singular leaf containing only hyperbolic orbits $\gamma_i$. Then $\Gamma\setminus \cup_i\gamma_i$ is the union of 
    hetero- or homoclinic orbits
    $\gamma_{i,j}^k$.
\end{lemma}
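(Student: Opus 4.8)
The plan is to run everything through the $\R^2$-action $\Phi_{t_1,t_2} = \Phi^B_{t_1}\circ\Phi^J_{t_2}$ generated by the commuting, complete (since $M$ is compact) pair $(B,J)$, together with compactness of $M$; notably, the normal-form data near the $\gamma_i$ will not be needed. First I would record the basic structure. The leaf $\Gamma$ is compact (a connected component of the compact set $p^{-1}(c)$, $p$ being proper) and $\R^2$-invariant, since $\mathcal{L}_B p = \mathcal{L}_J p = 0$ forces every $\R^2$-orbit to be a connected subset of a level set, hence to remain in the component of any of its points. The hypotheses (nondegenerate, only hyperbolic) give $\Gamma\cap\crit(p) = \bigcup_i\gamma_i$, a finite disjoint union of compact periodic $B$-orbits; applying the same connectedness argument inside the invariant set $\crit(p)$ shows each $\gamma_i$ is $\R^2$-invariant, and being a single $B$-orbit it is a single $\R^2$-orbit, hence a minimal closed invariant set. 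Consequently $\Gamma^{*} := \Gamma\setminus\bigcup_i\gamma_i = \Gamma\setminus\crit(p)$ is an $\R^2$-invariant, embedded, smooth $2$-submanifold of $M$, being an open subset of the regular level set $p^{-1}(c)\setminus\crit(p)$, and $B,J$ are tangent to it.

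Next I would show the restricted action on $\Gamma^*$ is locally free: where $dp\neq 0$, the identity $\iota_J\beta = -dp$ shows $J$ is nonzero and not in $\ker\hat\beta = \operatorname{span}(B)$, so $B$ and $J$ are pointwise linearly independent on $\Gamma^*$. A locally free $\R^2$-action on a surface has open $2$-dimensional orbits that partition it, so every connected component $N$ of $\Gamma^*$ is a single orbit, and therefore $N \cong \R^2/\Lambda$ for a discrete subgroup $\Lambda\le\R^2$: $N$ is diffeomorphic to $\R^2$, to a cylinder $\R\times\T$, or to $\T^2$. The torus case is impossible, since a compact $N$ would be a nonempty proper clopen subset of the connected set $\Gamma$ (proper because each $\gamma_i\subset\Gamma$).

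The core step is to describe $\overline{N}\setminus N$. As $N$ is embedded and noncompact while $\overline{N}\subseteq\Gamma$ is compact, $\overline{N}\setminus N$ is nonempty; and since $N$ is a connected component of $\Gamma^*$ it is closed in $\Gamma^*$, so $\overline{N}\cap\Gamma^* = N$ and hence $\overline{N}\setminus N = \overline{N}\cap\bigcup_i\gamma_i$, a nonempty closed $\R^2$-invariant subset of $\bigcup_i\gamma_i$. Because the $\gamma_i$ are minimal and are the connected components of $\bigcup_i\gamma_i$, this set is a union of some of the $\gamma_i$, and each of its connected components is exactly one $\gamma_i$. Finally I would split $\overline{N}\setminus N$ into the ``ends'' of $N$: for $N\cong\R^2$ the complement of any compact set has one end, so $\overline{N}\setminus N$ is connected and equals a single $\gamma_i$, making $N$ homoclinic to $\gamma_i$; for $N\cong\R\times\T$ there are two ends $E_\pm$, each a nested intersection of connected compact sets and hence connected, with $\overline{N}\setminus N = E_+\cup E_-$, so $\overline{N}\setminus N = \gamma_i\cup\gamma_j$ with $i=j$ allowed. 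Thus each $N$ is a heteroclinic (or, when $i=j$, homoclinic) orbit connecting $\gamma_i$ and $\gamma_j$; relabelling the components as $\gamma_{i,j}^k$ gives the claim. Finiteness of the index set, if wanted, follows by noting that every surviving component accumulates on $\bigcup_i\gamma_i$ and that, by the Morse--Bott normal form near a nondegenerate hyperbolic orbit, only finitely many local sheets of $\Gamma$ emanate from each $\gamma_i$.

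I expect the last paragraph to be the only delicate point: controlling the approach of the noncompact orbit $N$ to $\bigcup_i\gamma_i$, that is, verifying that $\overline{N}\setminus N$ is exhausted by the ends of $N$ and that each end is connected. Everything else is soft --- the $\R^2$-action is abelian and complete, $B$ and $J$ are independent off $\crit(p)$, and the classification of locally free $\R^2$-actions on surfaces is standard. A secondary place requiring care is checking that $\Gamma^*$ is genuinely a smooth embedded surface decomposed into orbits and that the $\gamma_i$ are minimal invariant sets; this is exactly where the ``nondegenerate, only hyperbolic'' hypothesis enters, guaranteeing $\Gamma\cap\crit(p)=\bigcup_i\gamma_i$ with no leftover degenerate critical set.
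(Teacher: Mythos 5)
Your argument is essentially the paper's: both proofs run the commuting pair $(B,J)$ into an $\mathbb{R}^2$-action, observe that points of $\Gamma\setminus\bigcup_i\gamma_i$ are regular for this action so that each connected component is a single orbit diffeomorphic to $\mathbb{R}^2$, $\mathbb{T}\times\mathbb{R}$, or $\mathbb{T}^2$, and then exclude the torus; your end-of-$N$ analysis showing $\overline{N}\setminus N$ is one or two of the $\gamma_i$ is correct and is what the paper relegates to the remark following the lemma. The one substantive divergence is that the paper also excludes planar orbits (asserting that compactness of $M$ together with accumulation on a periodic orbit rules out $\mathbb{R}^2$, so every connecting orbit is a cylinder), whereas you leave the planar case open and argue it would still be ``homoclinic''; this suffices for the literal statement, but the cylinder conclusion is what underlies the paper's subsequent description of the separatrices as $\mathbb{T}\times\mathbb{R}$ sheets attached to annulus or M\"obius bundles, so it is worth noting that the standard exclusion of $\mathbb{R}^2$ (via the nontrivial isotropy element produced by the return map near a hyperbolic $\gamma_i$) closes that remaining case.
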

\begin{proof}
    As in the topological argument for the regular leaves of $p$, we have an $\R^2$-action generated from the flows of $B$ and $J$. Any point $u\in\Gamma\setminus \cup_i\gamma_i$ is a regular point of this action, thus, its orbit under the $\R^2$ action must be diffeomorphic to either $\R^2, \T\times\R$ or $\T^2$. However, by compactness of $M$, and since the orbit of $u$ must contain at least one periodic orbit $\gamma_i$ in its closure, it cannot be diffeomorphic to $\R^2$ or $\T^2$. Thus, $\Gamma\setminus\cup_i\gamma_i$ is the union of orbits diffeomorphic to $\T\times \R$. 
\end{proof}

As an immediate consequence of the lemma, the closure of any $\gamma_{i,j}^k$ must contain either one or two singular orbits $\gamma_i,\gamma_j$. Denote by $\gamma_{i,j} = \cup_k \gamma^k_{i,j}$ the set of connecting orbits between $\gamma_i$ and $\gamma_j$. Each $\gamma_i$ has an associated stable $W^s(\gamma_i)$ and unstable $W^u(\gamma_i)$ manifold. The local stable and unstable manifolds are each 1D vector bundles over $\gamma_i \cong \T$. Up to diffeomorphism, there are only two possibilities for such a vector bundle: it is either a trivial or a Mobi\"{u}s bundle. Moreover both the stable and unstable manifolds must have the same type. This motivates the following definition.

\begin{definition}
    If $W^s(\gamma_i)$ is a trivial vector bundle over $\gamma_i$, then we say $\gamma_i$ is \emph{direct hyperbolic}. If $W^s(\gamma_i)$ is a Mobi\"{u}s bundle then we say $\gamma_i$ is \emph{reflection hyperbolic}. 
\end{definition}

Sketches of these cases are shown in \cref{fig:HyperbolicCases}.

\begin{figure}[ht]
    \centering
    \begin{subfigure}{0.48\linewidth}
        \includegraphics[width=\linewidth]{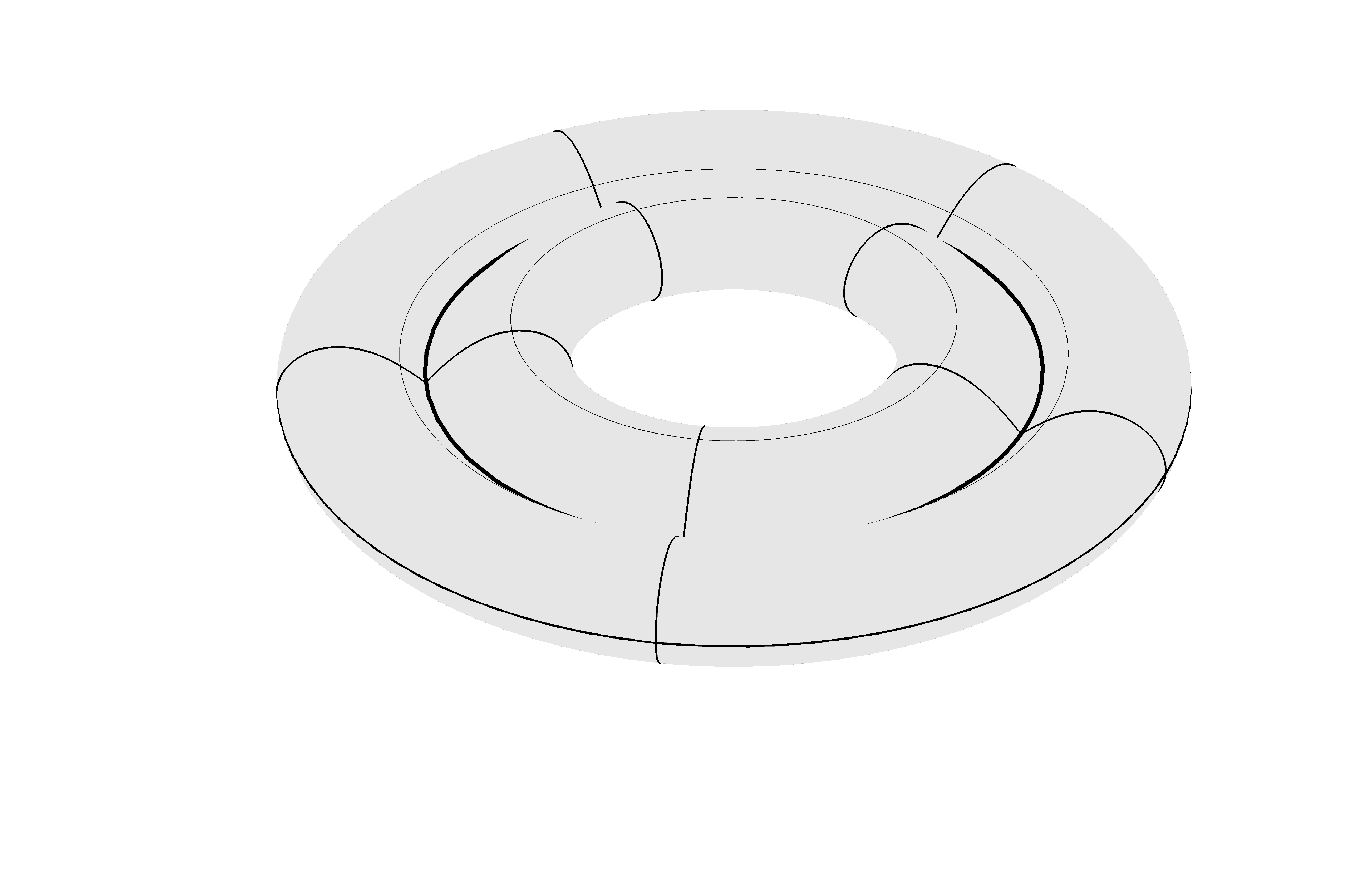}
        \subcaption{Direct Hyperbolic with 0 twist}
    \end{subfigure}
    \begin{subfigure}{0.48\linewidth}
        \includegraphics[width=\textwidth]{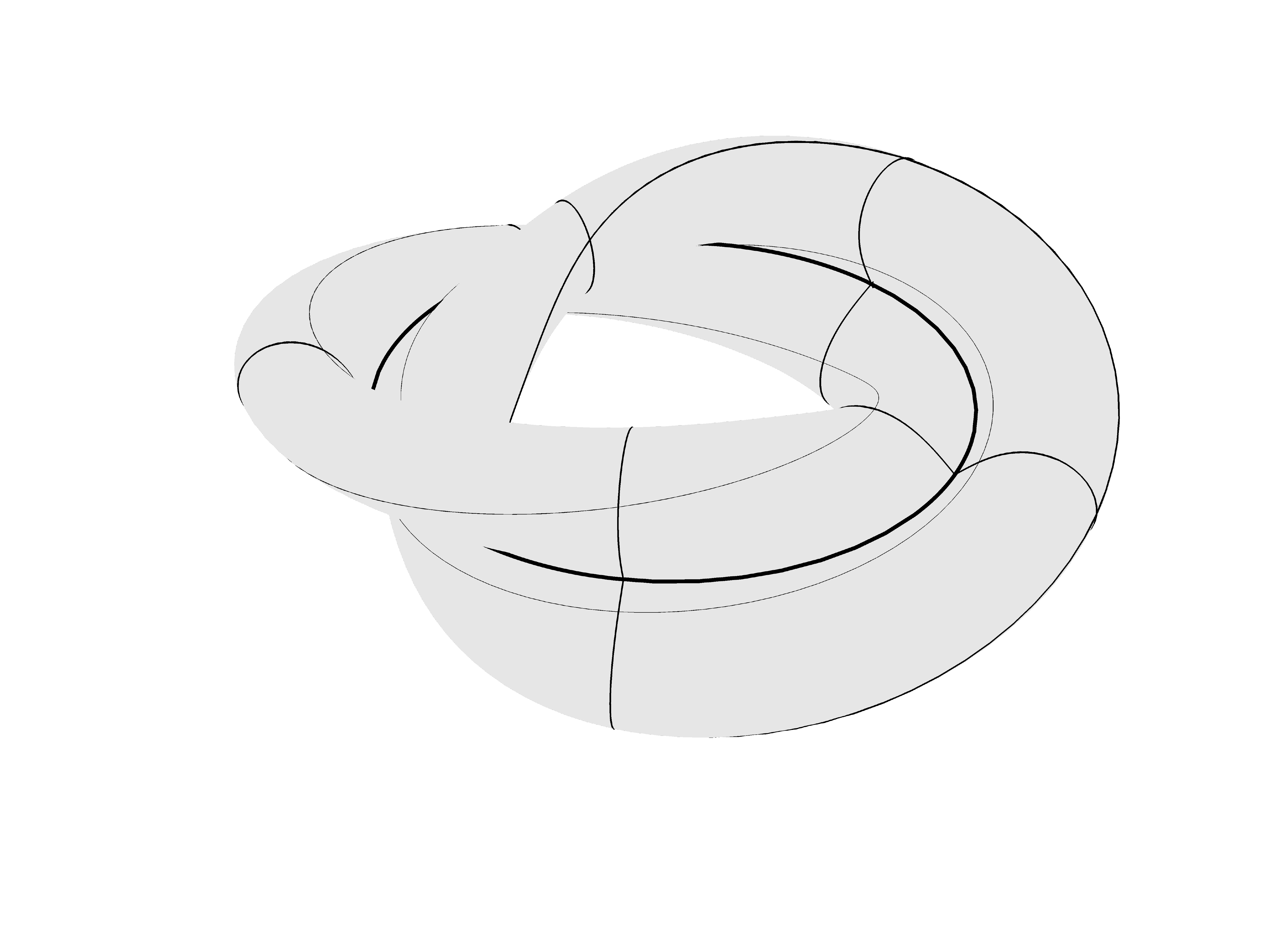}
        \subcaption{Direct Hyperbolic with 1 twist}
    \end{subfigure}
    \begin{subfigure}{0.48\linewidth}
        \includegraphics[width=\textwidth]{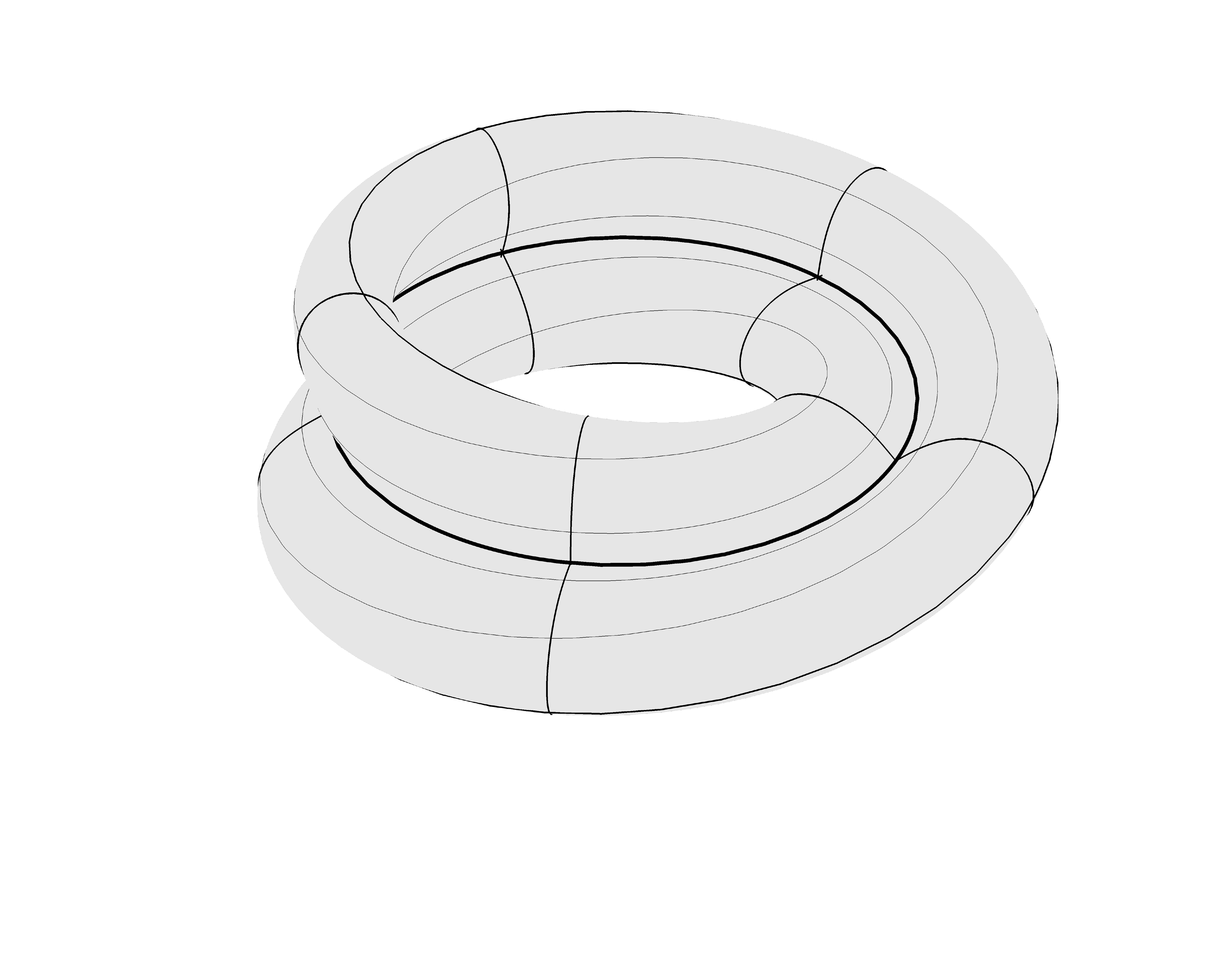}
        \subcaption{Reflection Hyperbolic with 1/2 twist}
    \end{subfigure}
    \begin{subfigure}{0.48\linewidth}
        \includegraphics[width=\textwidth]{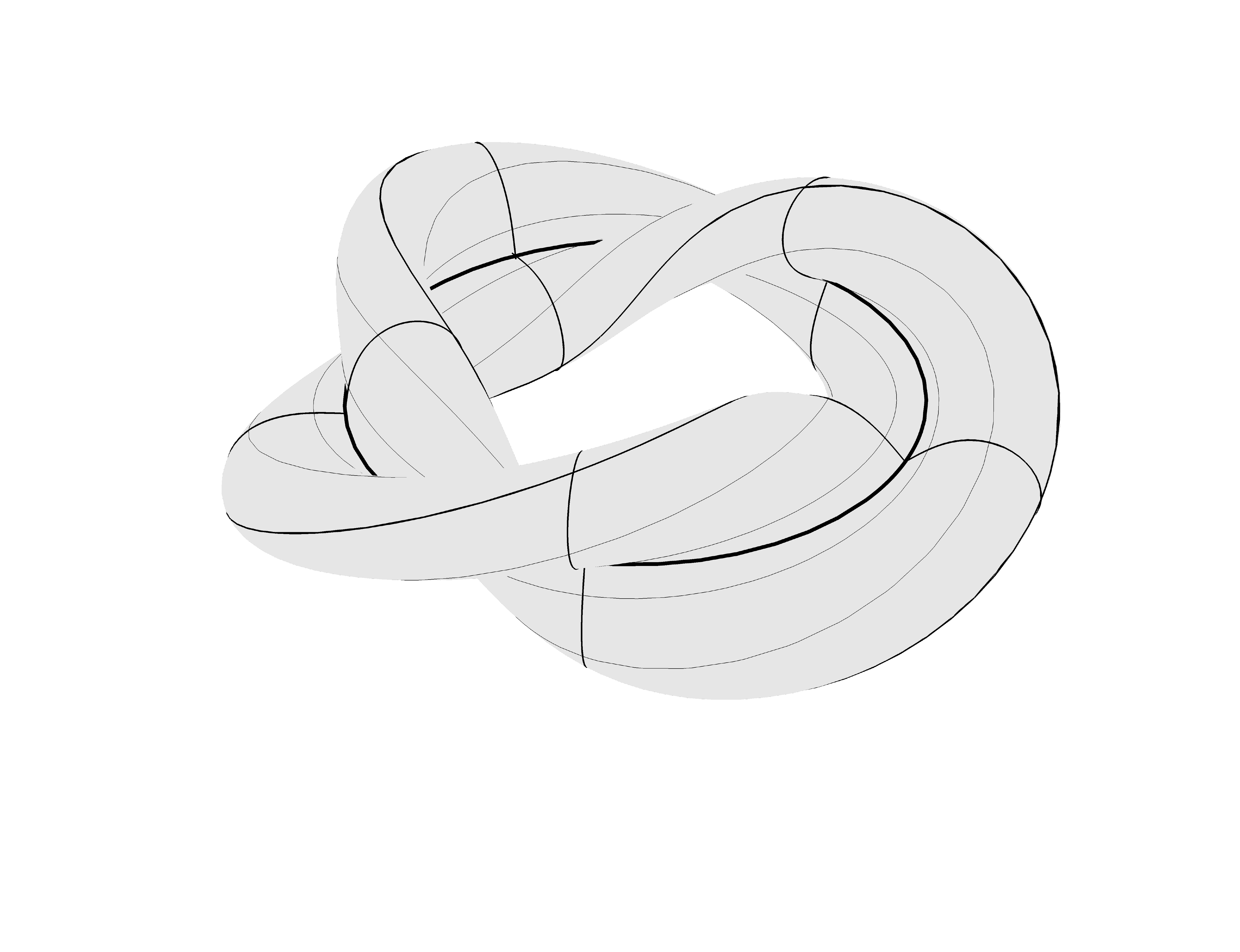}
        \subcaption{Reflection Hyperbolic with 3/2 twist}
    \end{subfigure}
    \caption{Examples of direct and reflection hyperbolic orbits.}
    \label{fig:HyperbolicCases}
\end{figure}

Since the regular leaves of an integrable system are diffeomorphic to $\T^2$, a skeleton of the phase space
can be built from understanding of the singular leaves and how they intertwine. This is particularly useful in understanding toroidal confinement devices with ``divertors". For example, a device with a central elliptic magnetic axis and a hyperbolic axis on the outer edge to divert plasma away from the center, as sketched in \cref{fig:Divertor}.

\begin{figure}
    \centering
    \includegraphics[width=0.7\textwidth]{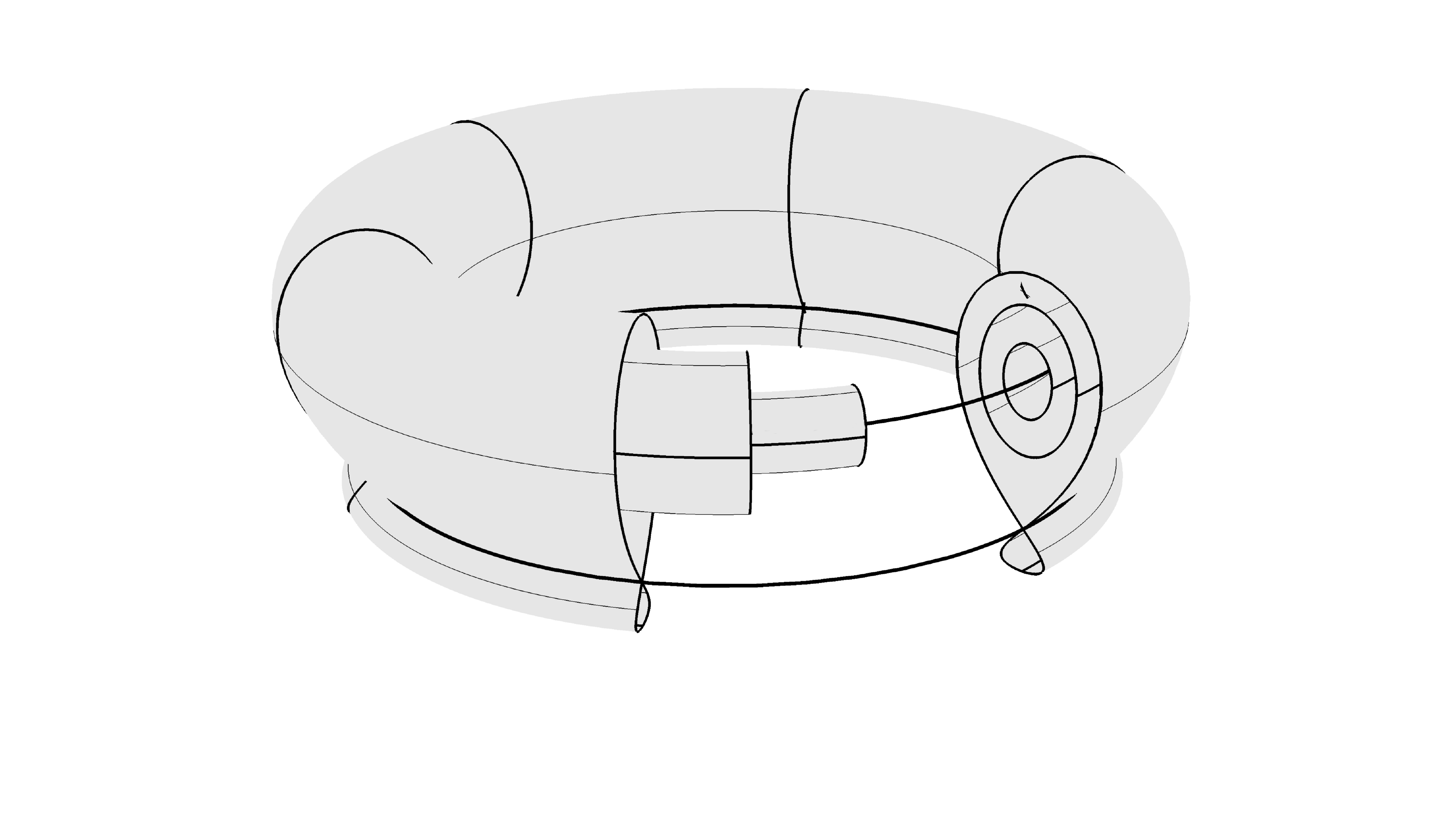}
    \caption{Elliptic and direct hyperbolic singular orbits composed to give a ``divertor".}
    \label{fig:Divertor}
\end{figure}

\section{Normal Form Coordinates for Integrable Presymplectic Systems}\label{sec:NormalForm}

In a study of a given geometric structure, it is often paramount to form the ``simplest'' possible coordinates. Of course, ``simplest'' is often a matter of taste; perhaps the coordinates accentuate some geometric properties or the lack thereof, or reveal some algebraic structure. For an integrable symplectic system, such coordinates are known as \textit{action-angle} coordinates and these exist even in a neighborhood of singular leaves.  As we explain below, certain results for the symplectic case can be modified to show that similar---action-angle---coordinates exist near a regular leaf of an integrable presymplectic system, giving \cref{thm:AML}.

We then obtain normal form coordinates near a nondegenerate magnetic axis in \cref{thm:SingularCoordinates}. This theorem will be proved using different techniques for the elliptic and hyperbolic  cases in \S\ref{sec:ellipticproof}, and \S\ref{sec:hyperbolicproof}, respectively.
As we will see a simple, direct argument based on what is called \textit{Moser's trick} is available in the elliptic case. By contrast, the hyperbolic case requires invoking a deeper result from symplectic geometry.

A first result establishes the existence of a symmetry group near a regular surface when $p$ is a proper map.

\begin{thm}[\cite{zungConceptualApproachProblem2018}]\label{thm:Zung}
    Suppose that $M$ is a compact three-manifold and that $(\beta,J, p,\Omega)$ is an integrable presymplectic system such that $p$ is proper. Then, in a neighborhood $\U(\Lambda)$ of a regular leaf $\Lambda$ there exists, up to automorphisms of $\T^2$, a unique, free $\T^2$-action that preserves $\beta$ and whose orbits are regular level sets of $p$ in $\U$. 
\end{thm}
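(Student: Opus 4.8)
The plan is to adapt the classical Liouville--Arnol'd torus-action construction, in the form used in \cite{zungConceptualApproachProblem2018}, to the presymplectic setting, keeping careful track of the degenerate direction $B$. First I would set up the geometry near $\Lambda$. Since $p$ is proper and $\Lambda$ is a connected regular leaf, the argument recalled in \S\ref{sec:Integrability} shows that the orbit of the $\R^2$-action generated by $B$ and $J$ through any point of $\Lambda$ is closed, equals $\Lambda$, and is diffeomorphic to $\T^2$. Because $dp\neq 0$ on $\Lambda$, a collar/tubular-neighbourhood argument (flowing along a vector field transverse to the fibres with $dp$-value $1$) produces $\U=\U(\Lambda)$ on which $p$ restricts to a submersion onto an interval $I$ with every fibre $\Lambda_c:=p^{-1}(c)\cap\U$ a connected $2$-torus on which the $\Omega$-preserving commuting pair $B,J$ again acts transitively. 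The stabiliser of the $\R^2$-action on $\Lambda_c$ is therefore a rank-two lattice $\mathcal{L}_c\subset\R^2$, and $B|_{\Lambda_c},J|_{\Lambda_c}$ are its translation-invariant generators, hence pointwise linearly independent. As $c$ ranges over $I$ the lattices $\mathcal{L}_c$ assemble into a smooth $\Z^2$-subbundle of $I\times\R^2$; since $I$ is contractible I may choose a smooth frame $c\mapsto(\lambda_1(c),\lambda_2(c))$, with $\lambda_i(c)=(a_i(c),b_i(c))$, for this subbundle.

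Next I would exhibit the action. Set
\[
    Y_i \;=\; \frac{1}{2\pi}\bigl(a_i(p)\,B + b_i(p)\,J\bigr),\qquad i=1,2,
\]
and let the candidate action be $(t_1,t_2)\cdot z = \Phi^{Y_1}_{t_1}\bigl(\Phi^{Y_2}_{t_2}(z)\bigr)$. The verifications are all short. Each $Y_i$ is tangent to the leaves, $\L_{Y_i}p=\tfrac{1}{2\pi}(a_i\L_Bp+b_i\L_Jp)=0$, and $[Y_1,Y_2]=0$ since both are $p$-dependent combinations of the commuting pair $B,J$ while $p$ is constant along their flows; the same Leibniz computation gives $[Y_i,B]=0$. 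Using $\iota_B\beta=0$ and $\iota_J\beta=-dp$ one gets $\iota_{Y_i}\beta=-\tfrac{1}{2\pi}b_i(p)\,dp=-dH_i$ with $H_i=\tfrac{1}{2\pi}\int^p b_i$, so $(Y_i,H_i)$ is a Hamiltonian pair and $\L_{Y_i}\beta=d\iota_{Y_i}\beta=0$; since $[Y_i,B]=0$, \cref{lem:divIsCommute} shows $Y_i$ also preserves $\Omega$, though only $\beta$-invariance is asserted in the statement. Finally, on each fibre the coefficients are constant, so $\Phi^{Y_1}_{t_1}\circ\Phi^{Y_2}_{t_2}$ acts on $\Lambda_c$ as translation by $\tfrac{t_1}{2\pi}\lambda_1(c)+\tfrac{t_2}{2\pi}\lambda_2(c)$ in $\R^2/\mathcal{L}_c$; because $\{\lambda_1(c),\lambda_2(c)\}$ is a \emph{basis} of $\mathcal{L}_c$, this is the identity precisely when $(t_1,t_2)\in(2\pi\Z)^2$. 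Hence the action descends to a \emph{free} $\T^2=\R^2/(2\pi\Z)^2$ action on $\U$, it preserves $\beta$, and each orbit is a full fibre $\Lambda_c$, i.e. a regular level set of $p$ in $\U$.

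To prove uniqueness up to $\operatorname{Aut}(\T^2)$, suppose $(Z_1,Z_2)$ generate another $\T^2$-action on a (possibly smaller) neighbourhood with the stated properties. Since its orbits are the leaves, $Z_i|_{\Lambda_c}$ is the infinitesimal generator of a one-parameter subgroup of the $\T^2$-action on $\Lambda_c$, hence a translation-invariant vector field there, i.e. a constant combination $\alpha_i(c)B+\beta_i(c)J$; as $B,J$ are pointwise independent the coefficients descend to smooth functions of $p$, so $Z_i=\alpha_i(p)B+\beta_i(p)J$ on $\U$. Freeness and $2\pi$-periodicity of the joint flow force $\{2\pi(\alpha_1(c),\beta_1(c)),2\pi(\alpha_2(c),\beta_2(c))\}$ to be a basis of $\mathcal{L}_c$ for every $c$, so $(Z_1,Z_2)=M(c)(Y_1,Y_2)$ with $M(c)\in\mathrm{GL}(2,\Z)$; since $\mathrm{GL}(2,\Z)$ is discrete and $I$ connected, $M(c)\equiv M$ is constant, and $M$ induces an automorphism of $\T^2$ intertwining the two actions.

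The part I expect to require the most care — and where I would lean directly on the cited theorem rather than re-derive it — is the smooth dependence of the period lattice $\mathcal{L}_c$ on $c$ (equivalently, smoothness of the frame $\lambda_i(c)$, hence of the $Y_i$) together with the fact that this leaf-by-leaf data genuinely assembles into a single smooth $\T^2$-action on all of $\U$. The specifically presymplectic modifications to the classical argument — bookkeeping around the degenerate direction $B$, and the observation that $p$-dependent rescalings of $B,J$ remain Hamiltonian for $\beta$ and volume-preserving via \cref{lem:divIsCommute} — are by comparison routine.
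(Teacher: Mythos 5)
First, note that the paper itself gives no proof of \cref{thm:Zung}: it is quoted from \cite{zungConceptualApproachProblem2018}, so your proposal can only be measured against the standard argument. Your existence construction is exactly that standard argument, correctly adapted to the presymplectic setting: the verifications that the $Y_i$ commute, are tangent to the leaves, satisfy $\iota_{Y_i}\beta=-dH_i$, and generate a free action whose orbits are the fibres are all sound. The one step you defer---smooth dependence of the period lattice $c\mapsto\mathcal{L}_c$---is the technical heart of the existence half, and deferring it ``to the cited theorem'' is circular, since that is precisely the theorem being proved. You should instead run the implicit-function-theorem argument on the return map $(t,c)\mapsto \Phi^B_{t_1}\circ\Phi^J_{t_2}(z_c)$ along a transversal $c\mapsto z_c$: its $t$-derivative is an isomorphism onto $T\Lambda_c$ because $B,J$ frame the leaves, so each basis vector of $\mathcal{L}_{c_0}$ continues smoothly in $c$.

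The genuine gap is in uniqueness, at the sentence ``hence a translation-invariant vector field there.'' A free transitive $\T^2$-action on a leaf $\Lambda_c\cong\T^2$ need not have translation-invariant generators for the $(B,J)$-affine structure: its generators are pushforwards of invariant fields under the orbit-map diffeomorphism, which can be an arbitrary diffeomorphism of the torus. Writing $Z_i=\alpha_iB+\beta_iJ$, the hypothesis $\L_{Z_i}\beta=0$ pins down only $\beta_i=\beta_i(p)$ (since $\iota_{Z_i}\beta=-\beta_i\,dp$ must be closed); it constrains $\alpha_i$ not at all, because $B$ spans $\ker\hat{\beta}$. And the constraint really is missing under the hypotheses as you have used them: in the model $\U=I\times\T^2$ with $\Omega=dc\wedge d\theta_1\wedge d\theta_2$, $B=\partial_{\theta_2}$, $J=\partial_{\theta_1}$, $p=c$, the fields $Z_1=\partial_{\theta_1}$ and $Z_2=a(\theta_2)\,\partial_{\theta_2}$ with $a>0$ and $\oint d\theta_2/a=2\pi$ generate a free $\T^2$-action preserving $\beta=dc\wedge d\theta_1$ whose orbits are the leaves, yet $Z_2$ is not a constant-coefficient combination of $B$ and $J$, so this action is not the reference one composed with an automorphism of $\T^2$. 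The resolution is that the uniqueness requires the action to preserve the integrable system and not merely $\beta$ (this is how it is formulated in \cite{zungConceptualApproachProblem2018}): if the action also preserves $\Omega$, hence $B$, then $[Z_i,B]=-(\L_B\alpha_i)B=0$ and $[Z_i,J]=-(\L_J\alpha_i)B=0$ force $\alpha_i$ to be constant on each connected leaf, after which your $\mathrm{GL}(2,\Z)$ and discreteness argument goes through. As written, your proof asserts the key rigidity rather than deriving it, and it cannot be derived from $\beta$-invariance alone.
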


As pointed out in \cite{zungConceptualApproachProblem2018}, the existence of this group is fundamental to the existence of the action-angle variables.  Indeed, given this group, one can find coordinates with straight field lines.

\begin{thm}[Arnol'd-Liouville-Mineur \cite{Zung05}]\label{thm:AML}
    Suppose $M$ is a compact three-manifold and that $(\beta, J, p,\Omega)$ is an integrable presymplectic system with a proper Hamiltonian $p$. Then, for each connected component, $\Lambda_i$, of a regular leaf $\Lambda$ of $p$:
    \begin{enumerate}
        \item $\Lambda_i$ is diffeomorphic to $\T^2$.
        \item There is a neighborhood $\U(\Lambda_i)$ and smooth coordinates $(\psi,\theta_1,\theta_2)\in\R\times\T^2$ such that,
        \begin{equation}\label{eq:AMLForm}
        \begin{split}
            p &= p(\psi), \\
            \beta &= d\Psi_1(\psi) \wedge d\theta_1 - d\Psi_2(\psi) \wedge d\theta_2 .
        \end{split}
        \end{equation}
    \end{enumerate}
\end{thm}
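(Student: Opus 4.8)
The plan is to read the coordinates $(\psi,\theta_1,\theta_2)$ directly off the $\T^2$-action supplied by \cref{thm:Zung}, and then to pin down the shape of $\beta$ by combining invariance with the kernel identity $\iota_B\beta=0$.

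Part (1) is immediate from \cref{thm:Zung}: the orbit through any point of $\Lambda_i$ is a regular level set of $p$ lying in $\U$, hence equals $\Lambda_i$, and an orbit of a free $\T^2$-action is diffeomorphic to $\T^2$ (this is also the standard argument recalled in \S\ref{sec:Integrability}). For part (2), let $\Phi\colon\T^2\times\U\to\U$ be the free, $\beta$-preserving $\T^2$-action on a neighborhood $\U$ of $\Lambda_i$ whose orbits are the regular level sets of $p$ in $\U$. Since $\T^2$ is compact and the action is free, $\U\to\U/\T^2$ is a principal $\T^2$-bundle and $\U/\T^2$ is a one-manifold; after shrinking $\U$ we may take $\U/\T^2$ to be an interval $(-\varepsilon,\varepsilon)$ with $\Lambda_i$ lying over $0$. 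A principal bundle over a contractible base is trivial, so there is a $\T^2$-equivariant diffeomorphism $\U\cong(-\varepsilon,\varepsilon)\times\T^2$; pulling back the interval coordinate and the standard angles on $\T^2$ gives coordinates $(\psi,\theta_1,\theta_2)$ in which $\Phi$ acts by translation in $(\theta_1,\theta_2)$. Because $p$ is constant along the orbits, it descends to the base, giving $p=P(\psi)$ with $P$ smooth; since $dp\neq0$ on $\Lambda_i$ we may shrink further so that $P'\neq0$, whence the level sets of $p$ coincide with those of $\psi$.

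It remains to put $\beta$ in normal form. Invariance of $\beta$ under $\Phi$ forces its coefficients in $(\psi,\theta_1,\theta_2)$ to be functions of $\psi$ alone, so $\beta=a(\psi)\,d\psi\wedge d\theta_1+b(\psi)\,d\psi\wedge d\theta_2+c(\psi)\,d\theta_1\wedge d\theta_2$. By \cref{lem:presym}, $B$ is tangent to the level sets of $p$, hence to the $\psi$-slices, so in these coordinates $B=B^1\partial_{\theta_1}+B^2\partial_{\theta_2}$ with $(B^1,B^2)$ nowhere vanishing; evaluating $\iota_B\beta=0$ and reading off the $d\theta_1$ and $d\theta_2$ components gives $cB^2=0$ and $cB^1=0$, hence $c\equiv0$. (Equivalently, $d\beta=0$ already forces $c$ to be constant, and then $\iota_B\beta=0$ at a single point kills it.) Setting $\Psi_1(\psi)=\int_0^\psi a$ and $\Psi_2(\psi)=-\int_0^\psi b$, which are smooth, we obtain $\beta=d\Psi_1(\psi)\wedge d\theta_1-d\Psi_2(\psi)\wedge d\theta_2$, and $(\Psi_1',\Psi_2')$ is nowhere zero since $\beta$ has maximal rank.

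The substantive content of the theorem is entirely packaged in \cref{thm:Zung}; given the $\T^2$-action, the remaining steps — triviality of the principal bundle and the one-line linear-algebra computation eliminating the $d\theta_1\wedge d\theta_2$ term — are routine. The only place requiring mild care is the shrinking of $\U$ so that the orbit space is genuinely an interval and the trivialization can be chosen $\T^2$-equivariant, which is a standard feature of proper free group actions.
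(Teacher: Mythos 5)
Your proof is correct, and it follows exactly the route the paper indicates: the paper does not write out a proof of \cref{thm:AML} (it cites the literature), but it explicitly remarks that the $\T^2$-action of \cref{thm:Zung} is the fundamental ingredient from which the action-angle coordinates follow, which is precisely what you carry out. The details you supply — equivariant trivialization of the principal $\T^2$-bundle over an interval, invariance forcing $\psi$-dependent coefficients, and $\iota_B\beta=0$ killing the $d\theta_1\wedge d\theta_2$ term — are all sound.
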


These coordinates reveal the inherent geometry and algebra of the integrable system in the neighborhood of a regular leaf. Writing $B = B^{\theta_1}\partial_{\theta_1} + B^{\theta_2} \partial_{\theta_2} + B^\psi\partial_\psi$, then $\iota_B \beta = 0$ immediately implies that
\[
    B^\psi = 0,
    \qquad 
    \frac{B^{\theta_1}}{B^{\theta_2}} = \frac{\Psi_2^\prime}{\Psi_1^\prime} = \iotab(\psi).
\]
Thus, $\psi$ is constant under $B$ and the field lines are straight with pitch $\iotab(\psi)$ on level sets of $\psi$. Moreover, \[[\partial_{\theta_1},B] = [\partial_{\theta_2},B] = [\partial_{\theta_1},\partial_{\theta_2}] = 0. \]
Consequently, the vector fields $\partial_{\theta_1},\partial_{\theta_2}$ generate the $\T^2$-action of \cref{thm:Zung} on the neighborhood $\U(\Lambda)$ that preserves $B$ and hence $\beta$.

The functions $\Psi_1,\Psi_2$ in \eqref{eq:AMLForm} are known as actions. In fact, it can be shown that each $(\Psi_i,\partial_{\theta_i})$ is a Hamiltonian pair for $\beta$. Mineur \cite{mineur1935systemes} gave a way to compute the actions. Let $ c_1,c_2$ be any pair of generators of the fundamental group of the chosen regular leaf and suppose we have $\beta = d\alpha$. Then,
\[ 
    \Psi_1 = \frac{1}{2\pi}\int_{c_1} \alpha,\qquad 
    \Psi_2 = -\frac{1}{2\pi}\int_{c_2} \alpha. 
\]
The actions $\Psi_1$ and $\Psi_2$ and their respective angles $\theta_1,\theta_2$ are dependent on the homology of $c_1,c_2$.
Due to this dependence, it is immediate that $\Psi_1,\Psi_2$ are not unique. In fact, this non-uniqueness will be leveraged in \S\ref{sec:Applications}.

The primary goal of this section is to prove an analog of \cref{thm:AML} near nondegenerate singular orbits. This result, previously stated as \cref{sec2_thm1}, becomes more formally:

\begin{thm}\label{thm:SingularCoordinates}
    Suppose that $M$ is a compact three-manifold and assume that $(\beta,J,p,\Omega)$ is a smooth (resp. analytic) integrable presymplectic system with a proper Hamiltonian $p$. Then, there exists a tubular neighborhood $\U$ of any nondegenerate magnetic axis $\gamma$, 
    a smooth (resp. analytic) transformation $\Phi:D^2\times\R \to \U $ and coordinates $(x,y,\phi) \in D^2\times\R$ such that,
    \begin{equation}\label{eq:SingularCoord}
    \begin{split}
            \Phi^*p &= P(\psi) \\
            \Phi^*\beta &= dy\wedge dx - d\Psi_P(\psi)\wedge d\phi,
    \end{split}
    \end{equation}
    where $\psi = \tfrac12 (x^2+ \epsilon y^2)$ and $P,\Psi_P$ are smooth (resp. analytic). 
    
    Here, $\epsilon = -1$ corresponds to the hyperbolic and $\epsilon = +1$ to the elliptic case.
    Moreover, $\Phi$ is a diffeomorphism if $\gamma$ is elliptic or direct hyperbolic, and a double cover if it is reflection hyperbolic.
\end{thm}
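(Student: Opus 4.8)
The argument naturally splits into the elliptic ($\epsilon=+1$) and hyperbolic ($\epsilon=-1$) cases, handled by quite different means as the section headings indicate; only the first step is common. That step normalises $p$ alone. Since $\gamma$ is a nondegenerate singular orbit it is a Morse--Bott critical circle of $p$ whose transverse Hessian has the stated signature, so the Morse--Bott lemma --- carrying the circle parameter along and diagonalising the resulting smooth family of nondegenerate quadratic forms --- yields, on a tubular neighbourhood of $\gamma$ or a double cover of one, coordinates $(x,y,\phi)$ with $p=p(\gamma)+\tfrac12(x^2+\epsilon y^2)$. A smooth radial rescaling $(x,y)\mapsto r(\psi)(x,y)$ then preserves the shape $p=P(\psi)$ while letting one also normalise the flux of $\beta$ through the meridional disc $\{\phi=\mathrm{const}\}$ to $-2\pi\psi$. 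A double cover is genuinely forced precisely when $\gamma$ is reflection hyperbolic, since the linear frame adapted to $W^s(\gamma)\oplus W^u(\gamma)$ then has monodromy $(x,y)\mapsto(-x,-y)$ around $\gamma$ and is not single-valued on $\U$, whereas for elliptic and direct hyperbolic $\gamma$ the normal bundle is globally trivial and $\Phi$ is a diffeomorphism. What remains is to move $\beta$ to $dy\wedge dx-d\Psi_P(\psi)\wedge d\phi$ by a further change of coordinates preserving every level set of $\psi$ (hence of $p$).

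For the elliptic axis I would run Moser's deformation method. Let $\beta_0$ be the presymplectic form in the coordinates above and set $\beta_1=dy\wedge dx-\Psi_P'(\psi)\,d\psi\wedge d\phi$, where $\Psi_P$ is the antiderivative, vanishing at the axis, of the rotational transform $\iotab(\psi)$ of $B_0$ on the tori $\{\psi=c\}$ --- which for an elliptic axis extends smoothly (resp.\ analytically) to $\psi=0$. With $\Psi_P$ so chosen, $\beta_0$ and $\beta_1$ carry the same toroidal and poloidal fluxes, and (after adjusting by a closed form) the periods of a primitive $\sigma$ of $\beta_1-\beta_0$ over both generators of $H_1$ of each torus $\{\psi=c\}$ vanish. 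Since $\U$ is homotopy equivalent to a circle every closed two-form on it is exact, and since the kernel fields $B_0,B_1$ annihilate $P(\psi)$, hence $\psi$, the forms $\beta_0,\beta_1$ pull back to zero on each $\psi$-torus; so (after shrinking $\U$) the segment $\beta_\lambda=(1-\lambda)\beta_0+\lambda\beta_1$ consists of closed two-forms of maximal rank whose kernel fields $B_\lambda$ are all tangent to the $\psi$-surfaces, and $\partial_\lambda\beta_\lambda=d\sigma$ with $\sigma=a^\flat$, $a=\partial_\lambda A_\lambda$ in the notation of \S\ref{sec:ellipticproof}. Moser's requirement $d(\iota_{X_\lambda}\beta_\lambda+\sigma)=0$ is met by $\iota_{X_\lambda}\beta_\lambda=-\sigma+df$, and by the construction in the proof of \cref{lem:PresympCriteria} there is a unique such $X_\lambda$ with $g(X_\lambda,B_\lambda)=0$ precisely when $\iota_{B_\lambda}(-\sigma+df)=0$, i.e.\ when $f$ solves the magnetic differential equation $B_\lambda(f)=\iota_{B_\lambda}\sigma$ of \eqref{mdiff}. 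If moreover $f$ is chosen so that $-\sigma+df$ is a multiple of $d\psi$ --- which is possible fibrewise on $\{\psi>0\}$ exactly because the flux matching makes $\sigma$ restrict to an exact form on each $\psi$-torus, and the same $f$ then works for every $\lambda$ since $B_\lambda$ is $\psi$-tangent --- then $X_\lambda$ is $\psi$-tangent, its time-one flow $\Phi$ satisfies $\Phi^*\beta_1=\beta_0$ and $\Phi^*P(\psi)=P(\psi)$, and composing with the earlier coordinate change finishes this case. \emph{The main obstacle is the regularity of $f$ up to the degenerating fibre $\psi=0$}, which I would obtain by a careful asymptotic analysis near $\gamma$.

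For the hyperbolic axis a direct deformation is unavailable: near $\gamma$ the level sets of $p$ are disconnected, $\iotab(\psi)$ acquires a logarithmic singularity at the separatrix value $\psi=0$, and there is no global action, so flux matching has no analogue. Instead, following \S\ref{sec:hyperbolicproof}, I would enlarge the phase space via a Gotay-type coisotropic embedding adapted to the symmetry: embed $(\U,\beta)$ into a symplectic four-manifold $(N,\omega)$ with $\omega|_\U=\beta$ in such a way that $B$ extends to a Hamiltonian vector field $\tilde B=X_H$ with $\U=H^{-1}(0)$ and $J$ extends to a Hamiltonian vector field $\tilde J=X_{\tilde p}$ with $\{H,\tilde p\}=0$ and $[\tilde B,\tilde J]=0$. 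Producing such an embedding --- in particular arranging the extended Hamiltonians to Poisson-commute --- is the main technical point here, and it is where the symmetry $J$ (equivalently $\Div J=0$, i.e.\ $[B,J]=0$) enters essentially. Then $(N,\omega,H,\tilde p)$ is a two-degree-of-freedom Liouville-integrable system for which $\gamma$ is a nondegenerate rank-one singular orbit of hyperbolic type, so by the symplectic normal form for such orbits (and its analytic counterpart) there are symplectic coordinates $(x,y,\phi,\tau)$ near $\gamma$ --- on a double cover of $\U$ when $\gamma$ is reflection hyperbolic --- with $\omega=dy\wedge dx+d\tau\wedge d\phi$ in which $H$ and $\tilde p$ depend only on $\psi=\tfrac12(x^2-y^2)$ and $\tau$. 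Since $\tilde B\neq0$ forces $\partial_\tau H\neq0$ along $\gamma$, the implicit function theorem solves $H=0$ for $\tau=-\Psi_P(\psi)$ with $\Psi_P$ smooth (resp.\ analytic), and restricting to $\U=\{H=0\}$ gives $p|_\U=P(\psi)$ and $\beta=\omega|_\U=dy\wedge dx-d\Psi_P(\psi)\wedge d\phi$, which is \eqref{eq:SingularCoord}. The number of sheets of $\Phi$ --- a diffeomorphism for elliptic and direct hyperbolic $\gamma$, a double cover for reflection hyperbolic $\gamma$ --- is exactly the monodromy dichotomy of the first paragraph.
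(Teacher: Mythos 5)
Your overall architecture coincides with the paper's: Morse--Bott normalization of $p$ (with the double cover exactly for the reflection-hyperbolic monodromy $(x,y)\mapsto(-x,-y)$), then Moser's trick with toroidal/poloidal flux matching in the elliptic case, and a coisotropic embedding into a $4$D integrable Hamiltonian system followed by the Eliasson--Vey normal form and restriction to the zero level in the hyperbolic case. However, at each of the two technical cruxes you name the difficulty and then defer it, so the proposal as written has two genuine gaps.

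First, in the elliptic case you reduce everything to finding $f$ with $-\sigma+df$ proportional to $d\psi$, observe that this is possible fibrewise on $\{\psi>0\}$ because the flux matching makes $\alpha_*-\alpha$ restrict to an exact form on each $\psi$-torus, and then state that the ``main obstacle'' --- regularity of $f$ up to the degenerate fibre $\psi=0$ --- would be handled by ``careful asymptotic analysis.'' That step is the heart of the elliptic proof and no asymptotic analysis is needed: since $\alpha-\alpha_*$ is a globally smooth one-form on the solid torus whose leafwise restrictions are closed with vanishing periods, one obtains the primitive by integrating $\alpha-\alpha_*$ along leafwise paths generated by the (globally smooth) torus action from a smooth transverse section to the foliation; the resulting $\sigma$ is then smooth on all of $D^2\times\T$, axis included, and satisfies the over-determined angular system \eqref{eq:od_sys} by construction. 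Framing the problem as the magnetic differential equation \eqref{mdiff} to be integrated along field lines, as you do, reintroduces the small-divisor difficulty that the flux-matching argument is designed to bypass. (You also omit the verification that the Moser flow exists up to $\lambda=1$, though this follows from the tangency of $\xi_\lambda$ to the compact $\psi$-levels you have already established.)

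Second, in the hyperbolic case you correctly identify the integrable coisotropic embedding as ``the main technical point'' but do not produce it. The construction is short but essential: with $\eta=d\phi$ (admissible near $\gamma$ because $\iota_B\,d\phi\neq 0$ there, so $\beta\wedge d\phi$ is a volume form), set $\omega=\beta+d(u\,\eta)$ on $U\times\R$; then $H=u$ and the trivial extension $\tilde p$ of $p$ Poisson-commute because $\omega$ pulls back to zero on their common level sets --- $\beta$ vanishes on $p$-surfaces by integrability and $d\eta=0$ --- and $X_H|_{u=0}$ is a reparametrization of $B$ (see \cref{thm:eta-embedding}, \cref{thm:integrableEmbedding}, and \cref{cor:tubularCoIsoEmbedding}). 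Without exhibiting this $\omega$ and checking $\{H,\tilde p\}=0$, the appeal to \cref{thm:Eliasson} has nothing to act on. Your subsequent restriction step is fine, and your implicit-function argument that $\partial_\tau H\neq 0$ on $\gamma$ (since $dH|_\gamma\neq 0$ while $d\psi|_\gamma=0$) is a slightly cleaner route to $\tau=-\Psi_P(\psi)$ than the paper's inversion of the diffeomorphism $G$.
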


\begin{remark}
    The coordinates $(x,y,\phi)$ of \cref{thm:SingularCoordinates} are \emph{normal form coordinates}. If we think of $p$ as a nonautonomous Hamiltonian in a neighborhood of $\gamma$, it becomes clear that these coordinates put $p$ in Birkhoff normal form \cite{Arnold78}. While the Birkhoff normal form is in general only a formal construction, \cref{thm:SingularCoordinates} guarantees that this can be made smooth, or analytic.
\end{remark}

\begin{remark}
The coordinates $(x,y,\phi)$ above are certainly not unique. For suppose we have found one such set of coordinates and let $p^*$ and $\beta^*$ to be the forms in \eqref{eq:SingularCoord}. If $u$ is a vector field such that $\L_up^* = 0$ and $\L_u\beta^* = 0$ then $(\overline{x},\overline{y},\overline{\phi}) = \exp(u)(x,y,\phi)$ is another set of normalizing coordinates. There are many such $u$, including any vector fields of the form $u = f\,B$, where $f$ is some scalar function. This freedom is not atypical in normal form theory. We will use the freedom in \S\ref{sec:Applications} to construct near-axis Hamada and Boozer coordinates.
\end{remark}

Separate proofs will be given for the elliptic and hyperbolic cases in \S\ref{sec:ellipticproof} and \S\ref{sec:hyperbolicproof}, respectively, due to their differing topologies.
The proof strategy for the elliptic case is reminiscent of methods used
in the analogous theorem for integrable symplectic systems by Eliasson \cite{Eliasson90} in that it makes extensive use of the \textit{Moser trick}. The key idea for the hyperbolic case is to embed the integrable presymplectic system into an integrable symplectic system in 4D. Then, one can lean on results in the symplectic setting to avoid much of the heavy lifting.

Both proofs rely on a powerful theorem from Morse theory. First, we recall the definition of a nondegenerate critical manifold. Wherever necessary, we assume that $M$ is equipped with a Riemannian metric $g$.

\begin{definition}
      A submanifold $\C$ is called a \emph{Morse-Bott critical manifold} of a function $f\in C^\infty(M)$ if $\C \subset \crit_0(f)$, so that the restriction $ D_\perp^2f =  D^2f|_{N\C}$, of the Hessian to the normal bundle
    \[
        N\C = T_z M/T_z \C
    \]
    is nondegenerate. 
\end{definition}
\begin{thm}[Morse-Bott]\label{thm:Morse-Bott}
    If $ \C\subset M$ is a Morse-Bott critical manifold of  $f \in C^\infty(M)$, then there is a neighborhood $\U$ of the zero section of the normal bundle $ N\C $ and an open embedding $ \Phi:\U\to M$ such that $ \Phi|_\C = Id $ and 
    \[
        \Phi^*f(c,n) = f(\C) + \langle n,D^2_\perp f n \rangle,
    \]
    where 
    $(c,n)\in\U$, and $\langle,\rangle$ is an inner product on $N\C$.
\end{thm}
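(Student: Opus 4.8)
The statement is the classical Morse--Bott lemma, and the plan is to prove it by \emph{Moser's trick}, which has the pleasant feature of applying verbatim in the analytic category. Since a connected Morse--Bott critical manifold has locally constant, hence constant, normal signature, it suffices to treat one component at a time, and we may assume $f(\C)=0$. First I would pass to a model space: using the metric $g$, identify $N\C$ with the $g$-orthogonal complement $(T\C)^\perp\subset TM|_\C$ and apply the tubular neighbourhood theorem to obtain an open embedding $\tau$ of a neighbourhood $U_0$ of the zero section onto a neighbourhood of $\C$ in $M$, with $\tau|_\C=\mathrm{Id}$ and differential along $\C$ the canonical identification $N\C\cong(T\C)^\perp$. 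Write $(c,n)$ for the induced coordinates and set $\tilde f=\tau^*f$. Since $f$ is constant on $\C$ and $df$ vanishes along $\C$, at each point of $\C$ the Hessian of $\tilde f$ is block diagonal, zero on $T\C$ and equal to $D^2_\perp f$ on $(T\C)^\perp$; hence the quadratic model $Q(c,n)=\langle n,D^2_\perp f|_c\,n\rangle$, with $\langle,\rangle=g|_{(T\C)^\perp}$, has the same $2$-jet along $\C$ as $\tilde f$.

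Next I would solve the homological equation. Put $f_t=(1-t)Q+t\tilde f$ for $t\in[0,1]$; each $f_t$ shares the $2$-jet of $Q$ along $\C$, so in particular has nondegenerate normal Hessian there. Taylor's theorem with integral remainder yields smooth $M(c,n,t)$ and a smooth symmetric $\widetilde R(c,n)$, defined near the zero section, with $\partial_{n^i}f_t=\sum_j M_{ij}(c,n,t)\,n^j$ and $Q-\tilde f=\sum_{j,k}\widetilde R_{jk}(c,n)\,n^jn^k$, where $M(c,0,t)$ is a nonzero multiple of $D^2_\perp f|_c$, hence invertible; shrinking $U_0$ (uniformly along $\C$ when $\C$ is compact, or locally in general, using that $t$ runs over the compact $[0,1]$) keeps $M$ invertible. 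I would then seek a time-dependent vector field with purely normal components, $X_t=\sum_{i,j}N_{ij}(c,n,t)\,n^j\,\partial_{n^i}$, and solve the pointwise linear problem $N^\top M=\widetilde R$, i.e.\ $N=(M^\top)^{-1}\widetilde R$. This gives $\iota_{X_t}df_t=\sum_{j,k}(N^\top M)_{jk}\,n^jn^k=Q-\tilde f=-\dot f_t$, which is exactly the Moser equation.

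Finally I would integrate and conclude. Since $X_t$ is $O(|n|)$ it vanishes along $\C$, so $\C$ is pointwise fixed by its flow, and by a Gronwall estimate the flow $\psi_t$ of $X_t$ is defined for all $t\in[0,1]$ on a smaller neighbourhood $\U$ of the zero section, mapping it into $U_0$. Then $\frac{d}{dt}(\psi_t^*f_t)=\psi_t^*(\L_{X_t}f_t+\dot f_t)=0$, so $\psi_1^*\tilde f=\psi_0^*Q=Q$; setting $\Phi=\tau\circ\psi_1$ gives an open embedding with $\Phi|_\C=\mathrm{Id}$ and $\Phi^*f=\psi_1^*\tilde f=Q=f(\C)+\langle n,D^2_\perp f\,n\rangle$, as asserted. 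In the analytic case the tubular neighbourhood, the Taylor remainders, the inversion of $M$, and the flow of $X_t$ all remain analytic, so the same $\Phi$ works.

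The main difficulty is not any single estimate but the bookkeeping that makes the Moser homotopy close: verifying that $\tilde f$ and $Q$ genuinely have identical $2$-jets along \emph{all} of $\C$ (so that $Q-\tilde f$ is quadratically small and $\partial_n f_t$ linearly small in $n$), and arranging the neighbourhood so that $M$ stays invertible and $\psi_t$ stays defined for $t\in[0,1]$. This is automatic when the critical manifold is a compact periodic orbit, as in the applications to \cref{thm:SingularCoordinates}, and is handled locally along $\C$ in general.
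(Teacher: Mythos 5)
The paper states this as a classical result (the Morse--Bott lemma) and gives no proof of its own, so there is nothing to compare against line by line; your task was to supply a proof, and the Moser-trick argument you give is a correct and standard one (it is essentially the Banyaga--Hurtubise proof). The key points all check out: $\tilde f$ and the quadratic model $Q$ have the same $2$-jet along $\C$ because $df$ vanishes identically on $\C$ (so the intrinsic Hessian is defined there, its restriction to $T\C$ vanishes, and the mixed $T\C$--$N\C$ terms vanish since $Wf\equiv 0$ on $\C$ for any vector field $W$); the Taylor factorizations $\partial_{n^i}f_t=\sum_j M_{ij}n^j$ and $Q-\tilde f=\sum_{j,k}\widetilde R_{jk}n^jn^k$ are legitimate; $M(c,0,t)$ is invertible as a convex combination of two invertible multiples of the same nondegenerate form; the pointwise linear solve $N=(M^\top)^{-1}\widetilde R$ yields $\iota_{X_t}df_t=-\dot f_t$; and since $X_t=O(|n|)$ vanishes on the zero section, the flow fixes $\C$ and exists up to $t=1$ on a shrunken neighborhood (uniformly so in the paper's application, where $\C=\gamma$ is a compact periodic orbit). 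Two cosmetic remarks: first, there is an unstated normalization convention in $\langle n,D^2_\perp f\,n\rangle$ versus the Taylor quadratic term $\tfrac12\,n^\top H\,n$; the paper is itself loose on this (cf.\ the factor $\tfrac12$ appearing in \cref{lem:ThmA}), and your argument works with either convention after inserting the appropriate constant in $Q$. Second, your observation that the whole construction persists in the analytic category is worth keeping, since \cref{thm:SingularCoordinates} is asserted in both the smooth and analytic settings.
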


With this theorem we can immediately prove the following corollary for the case that the critical
manifold is a nondegenerate singular orbit of an integrable presymplectic system.
\begin{lemma}\label{lem:ThmA}
    In a cover $ U = D^2 \times \T$ of the tubular neighborhood $\U(\gamma)$ of \cref{thm:SingularCoordinates} there exist coordinates $(x,y,\phi)$ such that
    \begin{enumerate}
        \item $p(x,y,\phi) = p(\gamma) +\tfrac12 \,c\,(x^2 + \epsilon \,y^2)$
        \item $\beta(0,0,\phi) = dy\wedge dx$,
    \end{enumerate}
    where $c \neq 0$ and $\epsilon = 1$ ($-1$) if $\gamma$ is elliptic (hyperbolic). The cover is injective if $\gamma$ is elliptic or direct hyperbolic and is a double cover if it is reflection hyperbolic.
\end{lemma}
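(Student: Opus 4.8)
The plan is to build the coordinates in two stages: first use the Morse–Bott lemma to normalize $p$, and then correct the resulting two-form on the axis by a linear (fiberwise) change of coordinates that leaves $p$ untouched.

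First I would apply \cref{thm:Morse-Bott} to the function $p$ and its nondegenerate critical manifold $\gamma$. Since $\gamma \cong \T$ is a nondegenerate singular orbit, the normal bundle $N\gamma$ is a rank-$2$ vector bundle over the circle, and the embedding $\Phi$ supplied by the theorem gives $\Phi^* p(c,n) = p(\gamma) + \langle n, D^2_\perp p\, n\rangle$. A rank-$2$ bundle over $\T$ is classified by its orientability: it is either trivial ($D^2\times\T$) or the Möbius bundle. Passing to the orientation double cover when the bundle is nonorientable, we may work on $U = D^2\times\T$ in all cases, at the expense of a genuine double cover exactly when $W^u(\gamma)$ (equivalently the normal bundle) is nonorientable, i.e. in the reflection hyperbolic case; this matches the last sentence of the statement. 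On the cover, $D^2_\perp p$ is a smooth field of nondegenerate symmetric bilinear forms on $\R^2$ parametrized by $\phi\in\T$; by the equivariant (in $\phi$) version of Sylvester's law of inertia—diagonalizing smoothly via, e.g., a Gram–Schmidt / square-root procedure applied to the positive part and to $|D^2_\perp p|$—we can choose a smooth frame of $N\gamma$ in which $p = p(\gamma) + \tfrac12 c\,(x^2 + \epsilon y^2)$, with $\epsilon = +1$ in the definite (elliptic) case and $\epsilon = -1$ in the indefinite (hyperbolic) case, and $c\neq 0$. Here I am using that on each leaf the signature of $D^2_\perp p$ is constant along $\gamma$, which follows from $\L_B p = 0$ and the flow-box argument already used in the text (the quadratic form induced by $p$ on $z\times I$ has constant rank).

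Second, with $p$ in this form, I would fix up $\beta$ at the axis. The two-form $\Phi^*\beta$ is closed and, by \cref{lem:BProperties}, of maximal rank; restricted to the axis $\{x=y=0\}$ it is some closed two-form on $\U$ whose pullback to $\gamma$ must be nondegenerate transverse to $\gamma$ (since $B$ is nonvanishing and tangent to $\gamma$, $\iota_B\beta = 0$ forces $\beta$ to pair the two normal directions nontrivially). Writing $\Phi^*\beta|_{x=y=0} = a(\phi)\,dx\wedge dy + (\text{terms involving } d\phi)$ with $a(\phi)\neq 0$, I would perform a further linear change in the $(x,y)$-fibers, of the form $x\mapsto \lambda(\phi)x + \mu(\phi)y$, $y\mapsto \cdots$, chosen to (i) preserve the quadratic form $x^2+\epsilon y^2$ up to a scalar—i.e. taking values in the conformal orthogonal group $\mathrm{CO}(2)$ or $\mathrm{CO}(1,1)$ as appropriate—and (ii) normalize the coefficient $a(\phi)$ to $1$ while killing the mixed $dx\wedge d\phi$, $dy\wedge d\phi$ terms on the axis. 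Absorbing the resulting conformal factor into a reparametrization $\psi = \tfrac12(x^2+\epsilon y^2)$ (and into the constant $c$, which then becomes the function $P'$), this yields $\Phi^*\beta(0,0,\phi) = dy\wedge dx$ and $\Phi^* p = P(\psi)$ on the axis, with $P$ smooth (analytic) and $P'(0)=c/(\text{scalar})\neq 0$. A mild point is that the closedness of $\beta$ and $\iota_B\beta=0$ must be used to ensure the mixed terms can indeed be removed by a smooth fiberwise transformation; this is where the bulk of the bookkeeping lies.

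The main obstacle I anticipate is the smooth (and analytic) dependence of these normalizing frames on the circle parameter $\phi$, together with the topological subtlety of orientability: one must verify that the conformal-orthogonal-valued gauge transformation can be chosen globally on $\T$ (or on its double cover) rather than merely locally, and that the square-root / diagonalization constructions used for $D^2_\perp p$ and for $a(\phi)$ remain analytic when the data are analytic. Everything else—closedness, the kernel condition, reparametrizing $\psi$—is routine once the frames are in hand. I would emphasize that this lemma only puts $p$ and $\beta$ in normal form \emph{on} the axis; the full normalization off the axis, where Moser's trick (elliptic) or the coisotropic embedding (hyperbolic) enters, is carried out in \S\ref{sec:ellipticproof} and \S\ref{sec:hyperbolicproof}.
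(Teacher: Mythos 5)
Your first stage (Morse--Bott, orientation double cover for the reflection hyperbolic case, smooth $\phi$-dependent diagonalization of $D^2_\perp p$ to get a \emph{constant} $c$) matches the paper and is fine: the full $GL(2)$-valued gauge freedom available at that point lets you write $D^2_\perp p = S(\phi)^T\mathrm{diag}(1,\epsilon)S(\phi)$ and absorb $S(\phi)$ into the fiber coordinates.

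The gap is in your second stage. Once $p$ is frozen in the form $p(\gamma)+\tfrac12 c(x^2+\epsilon y^2)$ with constant $c$, the residual fiberwise gauge group preserving that form is $O(2)$ (resp. $O(1,1)$), whose elements have determinant $\pm1$ and therefore cannot rescale the coefficient $a(\phi)$ of $dx\wedge dy$ on the axis. Your proposed fix --- a conformal-orthogonal transformation $\lambda(\phi)R(\phi)$ --- does rescale $a(\phi)$, but it simultaneously turns $c$ into $c/\lambda(\phi)^2$, and a genuinely $\phi$-dependent factor cannot be ``absorbed into $P'(\psi)$'': item (1) of the lemma demands a constant $c$, and $c(\phi)(x^2+\epsilon y^2)$ is not a function of $\psi=\tfrac12(x^2+\epsilon y^2)$ alone. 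So if $a(\phi)$ were non-constant your construction could not deliver both conclusions at once. The missing idea is that $a(\phi)$ is in fact \emph{constant}, and this must be proved, not arranged by a gauge choice: the paper writes the transverse linearization $\tilde\beta = f(\phi)\,dY\wedge dX + (a_1X+a_2Y)\,dX\wedge d\phi + (a_3X+a_4Y)\,dY\wedge d\phi$, uses $\tilde\beta\wedge dp=0$ (from $\iota_B dp=0$) to force $a_2=a_3=0$, and then $d\tilde\beta=0$ gives $f'+a_2-a_3=0$, hence $f'=0$. After that only a constant scaling and a possible swap of $X,Y$ are needed, with $c=C/|f|$. Separately, the ``mixed terms'' you worry about killing on the axis vanish automatically at order zero, since $B$ is tangent to $\gamma$ and $\iota_B\beta=0$ forces $\ker\hat\beta|_\gamma=T\gamma$; the real bookkeeping is the constancy of $f$, which your argument does not address.
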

\begin{proof}
  From \cref{thm:Morse-Bott} we are guaranteed a map $\Phi_1$ from a neighborhood of the zero section of the normal bundle $N\gamma$ to $\U(\gamma)$ such that $\Phi_1^* p$  has this form
  near each point on $\gamma$. In the elliptic and direct hyperbolic case, the normal bundle $N\gamma$ is trivializable and hence $N\gamma = D^2 \times \T$. By rotating and linearly scaling the normal space if necessary with a transformation $\Phi_2$, and using $\Phi = \Phi_2\circ\Phi_1$, we obtain the Hessian  $\tfrac12\,C\,(X^2+\epsilon Y^2)$, with $C$ nonzero. 

  In the reflection hyperbolic case, the normal bundle $N\gamma$ is not trivializable because the (un)stable manifolds are nonorientable. Each of the manifolds are homeomorphic to a M\"{o}bius band, and consequently, by taking a double cover, we are able to orient the spaces. The remainder of the argument follows as in the elliptic and direct hyperbolic case.

  We will finally construct coordinates $(x,y,\phi)$ such that $\beta(0,0,\phi) =  dy\wedge dx$. First, note that $\ker\hat\beta|_\gamma = T\gamma$ and thus $\beta(0,0,\phi) = f(\phi) dY\wedge dX$ where $f(\phi)$ is nowhere-vanishing. Let us examine the transverse linearization $\tilde{\beta}$ of $\beta$ about $\gamma$. This linearization is of the form,
  \[
        \tilde{\beta} = f(\phi) dY\wedge dX + (a_1(\phi) X+ 
        a_2(\phi) Y) dX \wedge d\phi + (a_3(\phi)X + a_4(\phi) Y) dY\wedge d\phi 
    \]
    From the condition $\beta\wedge dp = 0 \implies \tilde{\beta}\wedge dp = 0$ we have that $a_1 = a_4$ and $a_2 = a_3 = 0$. Moreover, since $\tilde{\beta}$ is presymplectic, then $d\tilde{\beta} = 0$. This implies $f^\prime +a_2 - a_3 =0$, thus $f^\prime = 0$. The result now follows with $c = C/|f|$ by appropriately scaling and permuting $X$ and $Y$.
\end{proof}

\subsection{The elliptic case: Moser's trick}\label{sec:ellipticproof}

In this section, we use Moser's trick to prove \cref{thm:SingularCoordinates} for the smooth elliptic case.

\begin{proof}[Proof of \cref{thm:SingularCoordinates}]
By \cref{lem:ThmA}, there is a tubular neighborhood $\U$ containing $\gamma$ and coordinates $(X,Y,Z)\in D^2\times \mathbb{T}$ such that $p =p(\gamma) + \tfrac12\,c\,(X^2 + Y^2) $ and $\beta = \beta_{YX}\,dY\wedge dX + \beta_{YZ}\,dY\wedge dZ + \beta_{XZ}\,dX\wedge dZ$ such that $\beta_{YX}(0,0,Z) = 1$.
Define a $\mathbb{T}^2$-action $T_{\theta,\zeta}$ in this tubular neighborhood according to $T_{\theta,\zeta}(X,Y,Z) = (\cos\theta \,X + \sin\theta\,Y, \cos\theta\,Y -\sin\theta\,X,Z +\zeta)$. Let 
$\partial_\theta = \partial_{\theta}T_{\theta,0}\mid_{\theta=0}$ and $\partial_\zeta = \partial_{\zeta}T_{0,\zeta}\mid_{\zeta = 0}$ be the corresponding infinitesimal generators. 

Because $\U \cong D^2\times \T$ has trivial second de Rham cohomology, there must be a one-form $\alpha$ such that $\beta  =d\alpha$. We may therefore define the toroidal flux for a surface of fixed $p$ as
\begin{equation}\label{eq:actionIntegral}
\begin{split}
    \Psi_T(p) = \frac{1}{2\pi}\oint_{c_P}\alpha, \quad
\end{split}
\end{equation}
where the parameterized curve $c_P$ is given by 
\[
c_P(t) = \sqrt{\frac{2(p-p(\gamma))}{c}}(\cos t, -\sin t, 0).
\]
The function $\psi = \Psi_T\circ p$, 
in particular, satisfies $\psi(0,0,Z) = 0$, $d\psi(0,0,Z)  =0$, and $D_\perp^2\psi(0,0,Z) = \text{id}$. This means $\psi$ is Morse-Bott with critical manifold $\gamma$.

In light of these properties satisfied by $\psi$, the formulas
\begin{align*}
    \widetilde{x} &= X\,\sqrt{c\psi(X,Y,Z)/(p(X,Y,Z) - p(\gamma))}\\
    \widetilde{y} & = Y\,\sqrt{c\psi(X,Y,Z)/(p(X,Y,Z) - p(\gamma)))},\\
    \widetilde{\phi}& = Z,
\end{align*}
define smooth coordinates in a neighborhood of $\gamma$.
In these coordinates $\psi= \tfrac12(\widetilde{x}^2 + \widetilde{y}^2)$, by construction. Moreover using Stokes' theorem and Taylor's theorem with remainder on \eqref{eq:actionIntegral} gives
\begin{align*}
    \psi(\widetilde{x},\widetilde{y},\widetilde{\phi})
    & = \frac{1}{2\pi}\int_0^{\sqrt{\widetilde{x}^2+\widetilde{y}^2}}\int_0^{2\pi}\beta_{\widetilde{y}\widetilde{x}}(R\cos\theta,R\sin\theta,0)\,R\,dR\,d\theta\\
    & = \tfrac12(\widetilde{x}^2 + \widetilde{y}^2)\beta_{\widetilde{y}\widetilde{x}}(0,0,0) + O([\widetilde{x}^2+\widetilde{y}^2]^{3/2}).
\end{align*}
Equality of Taylor series therefore implies $\beta(0,0,\widetilde{\phi}) = d\widetilde{y}\wedge d\widetilde{x}$.

Define the poloidal flux
\begin{align}
    \Psi_P(\psi) = -\frac{1}{2\pi}\oint_{c_T}\alpha,\label{poloidal_flux_def}
\end{align}
where the parameterized curve $c_T$ is given by $c_T(t) = (\sqrt{2\psi},0,t)$. Introduce the one-form
\begin{align*}
    \alpha_* = \tfrac12(\widetilde{y}\,d\widetilde{x} - \widetilde{x}\,d\widetilde{y}) - \Psi_P(\psi)\,d\widetilde{\phi},
\end{align*}
and its differential
\begin{align*}
\beta_* =d\alpha_*= d\widetilde{y}\wedge d\widetilde{x} - \Psi^\prime_P(\psi)\,d\psi\wedge d\widetilde{\phi}.
\end{align*}
The two-form $\beta_*$ is the desired normal form for $\beta$. However, in the coordinates $(\widetilde{x},\widetilde{y},\widetilde{\phi})$, $\beta$ and $\beta_*$ are only guaranteed to agree on the singular orbit $\gamma$. We therefore seek new coordinates defined by a diffeomorphism $\Phi:D^2\times \T\rightarrow D^2\times \T:(\widetilde{x},\widetilde{y},\widetilde{\phi})\mapsto(x,y,\phi)$ in which $\beta = \beta_*$. 

We will show that $\Phi$ can be constructed as the (inverse of the) time-one flow map $\varphi_{\lambda}$ of a time-dependent vector field $\xi_\lambda$ on $D^2\times \T$. The goal is to realize the interpolation
\[
    \beta_\lambda = [1-\lambda]\beta +\lambda\,\beta_*
\]
by Lie dragging along $\varphi_\lambda$ while preserving $\psi$; that is, we require $\varphi_{\lambda}$ satisfies
\begin{equation}\label{eq:gproof}
   \varphi^{*}_{\lambda}\beta_\lambda = \beta, \qquad
   \varphi^{*}_{\lambda}\psi =\psi,
\end{equation}
for all $\lambda\in[0,1]$. If this can be done then, for the diffeomorphism $\Phi$, we could choose $\Phi=(\varphi_{1})^{-1}$
and the desired normal form would be achieved. So let us now attempt to choose $\xi_\lambda$ so that \eqref{eq:gproof} are satisfied.

Differentiating in $\lambda$ shows that \eqref{eq:gproof} will be satisfied if
\begin{align}
 \L_{\xi_\lambda}\psi &=0\label{gproof_cc}\\
 \L_{\xi_\lambda}\beta_\lambda + \beta_*-\beta & = 0,\label{gproof_dd}
\end{align}
for all $\lambda\in[0,1]$. Since $\beta_\lambda$ is closed \eqref{gproof_dd} will be satisfied in turn if
\begin{equation}\label{gproof_ee}
    \iota_{\xi_\lambda}\beta_\lambda + \alpha_*-\alpha = d\sigma,
\end{equation}
where $\sigma$ is any smooth function. The following argument will identify a $\xi_\lambda$ and a $\sigma$ such that \eqref{gproof_ee} is satisfied. To complete the proof, we will then verify that this $\xi_\lambda$ also satisfies \eqref{gproof_cc}, and that the corresponding flow map $\varphi_{\lambda}$ exists.

In order for a smooth $\xi_\lambda$ to satisfy \eqref{gproof_ee}, it is sufficient that two conditions be met:
\begin{itemize}
    \item[(A)] $\beta_\lambda$ has maximal rank for each $\lambda\in[0,1]$;
    \item[(B)] the one-form $d\sigma + \alpha - \alpha_*$ is in the image of $\beta_\lambda$.
\end{itemize}
Condition (A) is satisfied in a sufficiently-thin neighborhood of $\gamma$ since $\beta$ and $\beta_*$ agree on $\gamma$. Condition (B) may be rephrased as follows: for each $\lambda\in[0,1]$ and each $v\in\text{ker}\beta_\lambda$ we must have $\iota_v(d\sigma +\alpha-\alpha_*)=0$. The following argument shows the latter condition will be satisfied if 
\begin{equation}\label{eq:od_sys}
    \begin{aligned}
         \iota_{\partial_\theta}d\sigma +\iota_{\partial_\theta}(\alpha-\alpha_*)&=0\\
         \iota_{\partial_\zeta} d\sigma + \iota_{\partial_\zeta}(\alpha-\alpha_*)&=0.
    \end{aligned}
\end{equation}


The point is, since $\beta_*$ defined above satisfies $d\psi\wedge \beta_* = 0$ and $d\psi\wedge\beta = d\psi\wedge(\iota_B\Omega) = (\iota_Bd\psi)\Omega = 0$, we have
\[
    d\psi\wedge \beta_\lambda = [1-\lambda]d\psi\wedge \beta + \lambda\,d\psi\wedge \beta_*
            =0.
\]
Therefore if $v$ is in the kernel of $\beta_\lambda$ we must have
\[
    \iota_{v}(d\psi\wedge \beta_\lambda) = (\iota_vd\psi)\beta_\lambda = 0,
\]
which implies $\L_v\psi = 0$ since $\beta_\lambda$ has maximal rank near $\gamma$.
Thus, any vector $v$ in the kernel of $\beta_\lambda$ must be tangent to $\psi$-surfaces in a sufficiently thin neighborhood of $\gamma$. Since the vector fields $\partial_\theta$ and $\partial_\zeta$ span the tangent spaces to the $\psi$-surfaces, a sufficient condition to ensure solvability of \eqref{gproof_ee} for $\xi_\lambda$ is therefore \eqref{eq:od_sys}.


If we can show that there is a smooth $\sigma$ that satisfies the over-determined system 
\eqref{eq:od_sys} then a smooth solution $\xi_\lambda$ of \eqref{gproof_ee} must exist in a sufficiently thin neighborhood of the singular orbit. Moreover, since \eqref{eq:od_sys} eliminates the angular components of $d\sigma +\alpha-\alpha_*$, such a $\sigma$ would satisfy 
\[
    d\psi\wedge(d\sigma +\alpha-\alpha_*)=0.
\]
The wedge product of \eqref{gproof_ee} with $d\psi$ therefore implies $ 0= d\psi\wedge \iota_{\xi_\lambda}\beta_\lambda = (\L_{\xi_\lambda}\psi)\beta_\lambda$, which says that the solution $\xi_\lambda$ would automatically satisfy \eqref{gproof_cc}. Thus, we have reduced the proof to the problem of showing that there is a solution $\sigma$ of \eqref{eq:od_sys}, and that the flow map $\varphi_{\lambda}$ exists.

To establish existence of a smooth solution $\sigma$, we proceed geometrically by analyzing the properties of the one-form $\alpha-\alpha_*$ when pulled back to a $\psi$-level $\Lambda_\psi$. Let $i_\psi:\Lambda_\psi\rightarrow D^2\times \T$ be the natural inclusion map. Then the exterior derivative of the pullback $i_\psi^*(\alpha-\alpha_*)$ is given by
\[
    di_\psi^*(\alpha-\alpha*)  =i_\psi^*\beta - i_\psi^*\beta_* =0,
\]
since $\Lambda_\psi$ is isotropic for $\beta$ (by integrability of $B$) and for $\beta_*$ (by construction). Therefore $i_\psi^*(\alpha-\alpha*)$ is closed for each $\psi$. Moreover, since the fluxes \eqref{eq:actionIntegral} and \eqref{poloidal_flux_def} for $\alpha$ and $\alpha_*$ agree exactly, the periods of $i_\psi^*(\alpha-\alpha*)$ vanish for each $\psi$. It follows that there is a smooth function $\sigma_\psi:\Lambda_\psi\rightarrow \mathbb{R}$ such that $i_\psi^*(\alpha-\alpha*) = -d\sigma _\psi$. Because $\sigma_\psi$ is only defined up to a $\psi$-dependent constant, and the foliation of $D^2\times \T$ by $\psi$-levels admits a smooth transverse section, there is a smooth function $\sigma:D^2\times \T\rightarrow\mathbb{R}$ such that $i_\psi^*\sigma = \sigma_\psi$ for each $\psi$. Thus, if $m=(\widetilde{x},\widetilde{y},\widetilde{\phi})$ is a point in $\Lambda_\psi$ then
\begin{align*}
     \iota_{\partial_\theta}d\sigma 
    & = d[\sigma_\psi](\partial_\theta)\\
    & = -i_\psi^*(\alpha-\alpha_*)(\partial_\theta)\\
    & = -(\alpha-\alpha_*)(\partial_\theta),
\end{align*}
which says that $\sigma $ satisfies the first differential equation of \eqref{eq:od_sys}. A similar calculation shows that $\sigma$ satisfies the second equation of \eqref{eq:od_sys} as well. It follows that a smooth solution of the over-determined system \eqref{eq:od_sys} exists. This shows that there is a smooth $\xi_\lambda$ with smooth $\lambda$-dependence that satisfies both \eqref{gproof_cc} and \eqref{gproof_dd}.

To complete the proof, we will now argue that the flow map $\varphi_\lambda$ exists for $\lambda\in[0,1]$. First we recall some basic properties of the time-dependent vector field $\xi_\lambda$ constructed above. At several steps in our construction, we were forced to shrink the spatial domain on which $\xi_\lambda$ is defined. The end result is that there is some $r>0$, independent of $\lambda$, such that $\xi_\lambda(\widetilde{x},\widetilde{y},\widetilde{\phi})$ is a well-defined smooth function of $(\widetilde{x},\widetilde{y},\widetilde{\phi})$ for each $\lambda$ and  $(\widetilde{x},\widetilde{y},\widetilde{\phi})\in C_r$, where
\begin{align*}
    C_r = \{(\widetilde{x},\widetilde{y},\widetilde{\phi})\in D^2\times \mathbb{T}\mid \sqrt{\widetilde{x}^2 + \widetilde{y}^2}< r\}.
\end{align*}
Our argument also shows that $\xi_\lambda(\widetilde{x},\widetilde{y},\widetilde{\phi})$ is a smooth function of $\lambda$ for $\lambda\in [0,1]$. In fact, our construction would have worked just as well with $\lambda$ in some open set $\mathcal{I}$ containing $[0,1]$. Thus, we may regard $\xi_\lambda(\widetilde{x},\widetilde{y},\widetilde{\phi})$ as a smooth function on the open set $C_r\times \mathcal{I}$.

The above remarks imply that we may define a smooth autonomous vector field $\widetilde{\xi}$ on $C_r\times \mathcal{I}$ using the formula
\begin{align*}
\widetilde{\xi}_\psi(\widetilde{x},\widetilde{y},\widetilde{\phi},\lambda) = \xi_\lambda(\widetilde{x},\widetilde{y},\widetilde{\phi}) + \partial_\lambda.
\end{align*}
We will infer existence of the flow $\varphi_\lambda$ for $\lambda\in[0,1]$ by applying existence and uniqueness theory for autonomous ordinary differential equations (ODEs) to $\widetilde{\xi}$. This is made possible by the following basic fact: the parameterized curve $\widetilde{m}(t) = (\widetilde{x}(t),\widetilde{y}(t),\widetilde{\phi}(t),\lambda(t))$ with $\lambda(0) = 0$ is an integral curve for $\widetilde{\xi}$ if and only if $m(\lambda) = (\widetilde{x}(\lambda),\widetilde{y}(\lambda),\widetilde{\phi}(\lambda))$ is an integral curve for $\xi_\lambda$. 

Now suppose $\widetilde{m}(t) = (\widetilde{x}(t),\widetilde{y}(t),\widetilde{\phi}(t),\lambda(t))$ is an integral curve for $\widetilde{\xi}$ with $\lambda(0) = 0$ and $\widetilde{x}^2(0)+\widetilde{y}^2(0)\leq r_0$ for some $0<r_0<r$. Let $(a,b)\ni t$, $a<0<b$, be the corresponding maximal existence time interval. Since $\widetilde{\xi}$ is a locally Lipschitz function on the open set $C_r\times\mathcal{I}$, Thm. 3.35 in \cite{Meiss17a} implies that if $b$ is finite and $K\subset C_r\times\mathcal{I}$ is compact then $\widetilde{m}(t^*)$ must be outside of $K$ for some $t^*\in[0,b)$. In particular, if we set $K = \overline{C}_{r_0}\times [0,1]$, then we have $\widetilde{m}(t^*) = (\widetilde{x}(t^*),\widetilde{y}(t^*),\widetilde{\phi}(t^*),t^*)$ outside of $\overline{C}_{r_0}\times[0,1]$. (Here we have integrated the equation $\dot{\lambda} = 1$ explicitly.) Since \eqref{gproof_cc} says $\xi_\lambda$ is tangent to level sets of $\psi$, and the boundary of $C_{r_0}$ is a $\psi$ level set, we must have $(\widetilde{x}(t^*),\widetilde{y}(t^*),\widetilde{\phi}(t^*))\in \overline{C}_{r_0}$. Thus, $t^*>1$. Since $b>t^*$, we conclude that the maximal existence time interval for any $\widetilde{\xi}$-integral curve $\widetilde{m}(t)$ with $\widetilde{m}(0)\in \overline{C}_{r_0}\times\{0\}$ must include the closed interval $[0,1]$. The above remarks imply this is equivalent to existence of integral curves $m(\lambda) = (\widetilde{x}(\lambda),\widetilde{y}(\lambda),\widetilde{\phi}(\lambda))$ of $\xi_\lambda$ with $m(0)\in \overline{C}_{r_0}$ for $\lambda\in[0,1]$. By standard uniqueness results for solutions of ODEs, the flow map $\varphi_\lambda:\overline{C}_{r_0}\rightarrow\overline{C}_{r_0}$ exists for $\lambda\in [0,1]$. By standard results on smoothness of flow maps, we also conclude that $\varphi_\lambda$ is a smooth diffeomorphism with smooth $\lambda$-dependence, as desired.
\end{proof}

\subsection{The hyperbolic case: a coisotropic embedding}\label{sec:hyperbolicproof}

    Here we prove \cref{thm:SingularCoordinates} for both direct and reflection 
    hyperbolic singular orbits. In the process, we give a second proof for the elliptic case. The 
    core idea is to embed the 3D presymplectic manifold $(M,\beta)$ into a 4D symplectic manifold
    $(\tilde{M},\omega)$. It is then shown that, in a tubular neighborhood $U$ of a singular orbit $\gamma$ of an integrable system $(B,J,p,\Omega)$, the system can be embedded in an integrable Hamiltonian system with Hamiltonian $H$, and an invariant $\tilde{p}$ on $(\tilde{M},\omega)$ such that $H = 0$ corresponds to the neighborhood $U$ and $\tilde{p}|_U = p$. 
     This embedding allows us to use standard results for action-angle variables of integrable, Hamiltonian systems. In particular, we will use \cref{thm:Eliasson}, below, to show the existence of action-angle variables in a tubular neighborhood $\tilde{U}$ of a two-dimensional surface $\tilde{\gamma}\cong \T\times\R$ that contains $\gamma$ on $H=0$. By restricting back to $H = 0$, we obtain the desired action-angle variables in $U$.
    
    The idea of embedding a presymplectic system in a symplectic system originates with Gotay \cite{gotayCoisotropicImbeddingsPresymplectic1982}. In fact, Gotay demanded that the embedding be \emph{coisotropic}. Recall that a submanifold $M$ of a symplectic manifold $(\tilde{M},\omega)$ is \textit{coisotropic} if, for all $z\in M$,
    whenever $\omega(v_z,w_z) = 0$, for all $v_z \in T_zM$, then $w_z \in T_zM$ as well. Equivalently, one says a submanifold $M$ is coisotropic if 
    the $\omega$-orthogonal complement to $TM$ is a subset of $TM$, that is, $TM^{\perp} \subseteq TM$.
    
    \begin{definition}
        A \emph{coisotropic embedding} of a presymplectic manifold $(M,\beta)$ into a symplectic manifold $(N,\omega)$ is a closed embedding $\pi:M \to N$ such that 
        \begin{enumerate}
            \item $ \pi^* \omega = \beta$,
            \item $TM^\perp \subseteq TM$.
        \end{enumerate}
    \end{definition}

    It has been shown that every presymplectic manifold can be coisotropically embedded in a symplectic manifold. The key concept is the Reeb bundle, sometimes referred to as the characteristic bundle, of $(M,\beta)$:
    \[ 
        E := \cup_{z\in M} E_z,\quad E_z := \set{X \in T_z M | \iota_X \beta_z = 0}.
    \]
    Let $E^*$ denote the dual space of the Reeb bundle.
    
\begin{thm}[Gotay \cite{gotayCoisotropicImbeddingsPresymplectic1982}]\label{thm:GotayEmbedding}
        Let $E$ be the Reeb bundle over $M$. There exists a symplectic form $\omega$ on a tubular neighborhood of the zero section of $E^*$ such that $(E^*,\omega)$ is a symplectic manifold and $(M,\beta)$ is coisotropically embedded in this neighborhood as the zero-section of $E^*$.

        Furthermore, the embedding is unique in that any other such coisotropic embedding is symplectomorphic in this neighborhood. 
\end{thm}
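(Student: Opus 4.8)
The plan is to follow Gotay's two--step scheme: an explicit model for $\omega$ on $E^*$ for existence, and a Moser--Weinstein deformation argument for uniqueness.

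\emph{Existence.} First choose any smooth complement $G$ to the Reeb bundle $E$ inside $TM$, so $TM = E\oplus G$; such a $G$ exists (e.g.\ from a Riemannian metric), and since $E$ is \emph{exactly} the kernel of $\beta$, the restriction $\beta|_G$ is nondegenerate. This splitting produces a fiberwise--linear bundle embedding $s:E^*\hookrightarrow T^*M$ onto the annihilator bundle $G^{\circ}$, where $s(z,\xi)$ is the unique covector killing $G_z$ and restricting to $\xi$ on $E_z$; it intertwines the projections $\tau:E^*\to M$ and $T^*M\to M$ and sends zero section to zero section. Let $\theta$ be the canonical one-form of $T^*M$ and set
\[
    \omega := \tau^*\beta - d\!\left(s^*\theta\right),
\]
which is closed by construction. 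Pulling back along the zero section $\iota_0:M\to E^*$ gives $\iota_0^*\omega=\beta$, because $s\circ\iota_0$ is the zero section of $T^*M$ and the tautological one-form vanishes there. Evaluating $\omega$ at a point of the zero section, under $T_{(z,0)}E^*\cong T_zM\oplus E_z^*$, gives the pairing
\[
    \omega\big((v_1,\xi_1),(v_2,\xi_2)\big)=\beta(v_1,v_2)+\langle\xi_1,\mathrm{pr}_E v_2\rangle-\langle\xi_2,\mathrm{pr}_E v_1\rangle,
\]
where $\mathrm{pr}_E:TM\to E$ is the projection along $G$. A short linear-algebra check --- using only that $\iota_v\beta\in E^{\circ}$ for every $v$ (since $E=\ker\beta$) together with $E\cap G=0$ --- shows this form is nondegenerate, so $\omega$ is symplectic on a tubular neighborhood of the zero section $Z$. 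The same check identifies the $\omega$-orthogonal complement of $TZ$ with $E\subset TZ$, so $Z$ is coisotropic, and $\iota_0$ is the desired coisotropic embedding.

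\emph{Uniqueness.} Suppose $(M,\beta)$ also embeds coisotropically in a symplectic manifold $(N,\omega_N)$. Since $\omega_N$ restricts to $\beta$ on $TM$ and $M$ is coisotropic, $TM^{\perp_{\omega_N}}=\ker\beta=E$, so the symplectic normal bundle $TN|_M/TM$ has rank $\mathrm{rank}\,E$, and $v\mapsto \iota_v\omega_N|_{E}$ descends to a bundle isomorphism from $TN|_M/TM$ onto $E^*$. By the tubular neighborhood theorem, together with a pointwise adjustment so that the normal-bundle identifications match the two symplectic forms to first order along $M$ (the normal form for coisotropic subspaces of a symplectic vector space), one obtains a diffeomorphism $\Phi_0$ from a neighborhood of $Z$ in $E^*$ onto a neighborhood of $M$ in $N$ with $\Phi_0|_M=\mathrm{id}$ and $\Phi_0^*\omega_N=\omega$ at every point of $M$. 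Then apply Moser's trick to $\omega_t:=(1-t)\omega+t\,\Phi_0^*\omega_N$: this family is closed, nondegenerate near $M$, and $\omega_t|_M$ is $t$-independent, so the relative Poincar\'{e} lemma gives a one-form $\mu$ near $M$ with $d\mu=\Phi_0^*\omega_N-\omega$ and $\mu$ vanishing at all points of $M$. Defining $X_t$ by $\iota_{X_t}\omega_t=-\mu$ makes $X_t|_M=0$, so its flow $\varphi_t$ fixes $M$ pointwise and $\frac{d}{dt}\varphi_t^*\omega_t=\varphi_t^*(d\iota_{X_t}\omega_t+\dot\omega_t)=0$, whence $\varphi_1^*\Phi_0^*\omega_N=\omega$. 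The composite $\Phi_0\circ\varphi_1$ is the required symplectomorphism.

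\emph{Where the work is.} The two steps needing care are (i) the nondegeneracy of the model $\omega$, which is exactly where nondegeneracy of $\beta$ transverse to its kernel is used; and (ii) in uniqueness, upgrading the tubular-neighborhood diffeomorphism so that $\Phi_0^*\omega_N=\omega$ holds \emph{along} $M$ rather than merely on $TM$ --- otherwise $\Phi_0^*\omega_N-\omega$ need not vanish at points of $M$ and the Moser vector field would not vanish on $M$, destroying the property $\varphi_t|_M=\mathrm{id}$. One must also shrink to a sufficiently small tubular neighborhood at the end so that $\omega_t$ remains nondegenerate and $\varphi_t$ exists for $t\in[0,1]$; this is automatic in the cases of interest here, where the relevant part of $M$ is a tubular neighborhood of a compact orbit $\gamma$.
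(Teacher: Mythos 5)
Your proof is correct, and it takes a somewhat different (and more self-contained) route than the paper's sketch. The paper defines the candidate symplectic structure only \emph{along} the zero section, as $\pi_2^*\beta+\pi_1^*\omega_E$ relative to the splitting $T_zE^* = G_z\oplus E_z\oplus E_z^*$, and then invokes Weinstein's extension theorem as a black box both to extend it to a tubular neighborhood and (implicitly) to get uniqueness; no argument for the uniqueness clause is given at all. You instead write down an explicit \emph{global} closed two-form $\omega=\tau^*\beta-d(s^*\theta)$ via the embedding $E^*\cong G^\circ\hookrightarrow T^*M$ and the tautological one-form --- which is essentially Gotay's original construction --- verify nondegeneracy and coisotropy of the zero section by the linear-algebra computation at $T_{(z,0)}E^*$ (the same bilinear form the paper's $\omega_M$ encodes), and then carry out the Weinstein--Moser deformation argument for uniqueness in full. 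What your approach buys is that the only external inputs are the relative Poincar\'e lemma and Moser's trick, and the uniqueness statement actually gets proved; what the paper's route buys is brevity, at the cost of deferring all the analytic content to the cited extension theorem. Your closing remarks correctly flag the two genuinely delicate points (nondegeneracy of the model, and matching the forms to first order along $M$ before running Moser), so I see no gap.
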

 
    Let us sketch the proof. Embed $M$ as the zero section of $E^*$. The idea for the proof is to make $T_M E^*$ into a symplectic space and use the Weinstein extension theorem \cite{weinsteinLecturesSymplecticManifolds1977a} to get a symplectic neighborhood of this zero section. Then, $M$ is naturally diffeomorphic to the zero section of this bundle.
    
    For each $z\in M$, there is a decomposition $T_z E^* = T_z M \oplus E^*_z$ and since $E$ is a subbundle of $TM$ we can make $T_z M = E_z\oplus G_z$ with $G_z$ some complement. Hence, for each $z\in M$, we have $T_z E^* = G_z \oplus E_z \oplus E^*_z$.

    Let $\pi_1 : T_z E^* \to E_z \oplus E^*_z$ for each $z\in M$ and take $\omega_E$ at $z$ as the canonical symplectic form on $E\times E^*$. Define by $\pi_2$ the bundle map $\pi_2:E\to M$. Then \[\omega_M = \pi_2^*\beta + \pi_1^*\omega_E \] is a symplectic form on $T_ME^*$ and the Weinstein extension theorem  guarantees we can extend $\omega_M$ to a symplectic form $\omega$ on a tubular neighborhood of the zero section in $E^*$. 
    
    Gotay's construction is general, giving the existence of the coisotropic embedding for any presymplectic system. However, with a little more structure, a more concrete coisotropic embedding is evident. 
    
 \begin{thm}[$\eta$-embedding]\label{thm:eta-embedding}
        Suppose there exists a one-form $\eta$ on $M$ such that $\beta\wedge \eta$ is a volume form. Let $\tilde{M} = M\times \R \ni (z,u)$, define the projection $\pi:\tilde{M} \to M$ and take $\tilde{\beta} = \pi^*\beta,\, \tilde{\eta} = \pi^*\eta$. Then, $(\tilde{M},\omega)$ is a symplectic system with
        \begin{equation}\label{eq:omegaDefinition}
            \omega = \tilde{\beta} + d(u \tilde{\eta}).
        \end{equation}
        Moreover, $M$ is coisotropically embedded as $u = 0$ in $
        \tilde{M}$.
\end{thm}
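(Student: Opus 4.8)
The plan is to verify the three conditions that make $(\tilde M,\omega)$ a symplectic manifold into which $M$ embeds coisotropically: that $\omega$ is closed, that $\omega$ is nondegenerate (at least on a tubular neighborhood of the zero section $\{u=0\}$), and that the inclusion $\iota:M\hookrightarrow\tilde M$, $z\mapsto(z,0)$, is a closed embedding satisfying $\iota^*\omega=\beta$ with $T M^{\perp_\omega}\subseteq TM$ along $M$. Closedness is immediate: $d\omega = d\tilde\beta + d\,d(u\tilde\eta) = \pi^*(d\beta) = 0$, since $\beta$ is presymplectic and $d^2=0$, so this requires no work.

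For nondegeneracy I would compute $\omega\wedge\omega$ and show it is a volume form near $\{u=0\}$. Writing $\omega = \theta + du\wedge\tilde\eta$ with $\theta := \tilde\beta + u\,d\tilde\eta = \pi^*\beta + u\,\pi^*(d\eta)$, the square of $du\wedge\tilde\eta$ vanishes, and $\theta\wedge\theta = \pi^*(\beta\wedge\beta) + 2u\,\pi^*(\beta\wedge d\eta) + u^2\pi^*(d\eta\wedge d\eta) = 0$ because every summand is a pullback of a $4$-form from the $3$-manifold $M$. Hence
\[
\omega\wedge\omega = 2\,\theta\wedge(du\wedge\tilde\eta) = 2\,du\wedge\pi^*\big(\beta\wedge\eta + u\,d\eta\wedge\eta\big).
\]
At $u=0$ this equals $2\,du\wedge\pi^*(\beta\wedge\eta)$, which is a volume form on $\tilde M$ precisely because $\beta\wedge\eta$ is a volume form on $M$; by continuity $\omega\wedge\omega$ is nonvanishing on a tubular neighborhood of $M\times\{0\}$ (and globally when $\eta$ is closed), so $\omega$ is symplectic there. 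This is the one place where care is needed: in general $\beta\wedge\eta + u\,d\eta\wedge\eta$ can degenerate for large $|u|$, so the symplectic structure is a priori only local, exactly as in Gotay's theorem (\cref{thm:GotayEmbedding}).

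For the embedding properties, $M\times\{0\}$ is closed in $M\times\R$ and $\iota$ is visibly a smooth embedding, so it is a closed embedding. Since $\pi\circ\iota = \mathrm{id}_M$ and $u\circ\iota\equiv 0$, pullback gives $\iota^*\omega = \iota^*\pi^*\beta + d\big(\iota^*(u\tilde\eta)\big) = \beta + d(0) = \beta$. To check coisotropy at a point $(z,0)$, decompose $T_{(z,0)}\tilde M = T_zM\oplus\R\,\partial_u$ and let $w = w_0 + c\,\partial_u$ satisfy $\omega(v,w)=0$ for all $v\in T_zM$. Using $\omega|_{u=0} = \tilde\beta + du\wedge\tilde\eta$ this reads $\beta(v,w_0) - c\,\eta(v) = 0$ for all such $v$; evaluating at $v=B_z$, a spanning vector of $\ker\hat\beta_z$, yields $c\,\eta(B_z)=0$. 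Finally $\iota_B(\beta\wedge\eta) = (\iota_B\beta)\wedge\eta + \beta\,\eta(B) = \eta(B)\,\beta$, and since $\beta\wedge\eta$ is a nowhere-vanishing volume form and $B$ is nowhere vanishing, $\eta(B)\neq 0$ everywhere; hence $c=0$, so $w\in T_zM$ and $T_zM^{\perp_\omega}\subseteq T_zM$. The $\omega\wedge\omega$ computation and this final contraction with $B$ are the only non-routine steps; everything else is bookkeeping.
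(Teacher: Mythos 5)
Your proposal is correct and follows essentially the same route as the paper: closedness is immediate, nondegeneracy near $u=0$ comes from computing $\omega\wedge\omega$ and using that $\beta\wedge\eta$ is a volume form, and coisotropy reduces to $\iota_B\eta\neq 0$. If anything, your version is slightly more careful than the paper's at two points — you keep the full expansion $\omega\wedge\omega = 2\,du\wedge\pi^*(\beta\wedge\eta+u\,d\eta\wedge\eta)$, and you show directly that any $w=w_0+c\,\partial_u$ in $TM^{\perp_\omega}$ has $c=0$, rather than only checking that $\partial_u$ itself is not $\omega$-orthogonal to $TM$.
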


\begin{proof}
        In order for $\omega$ to be symplectic it must be closed and nondegenerate. The first condition is clear as $\beta$ is closed by assumption---implying that $\tilde{\beta}$ is closed, and $d(u \tilde{\eta})$ is exact, hence closed. For nondegeneracy to hold, we need that $\omega\wedge \omega \neq 0$. A calculation gives,
        \[ \omega \wedge \omega = \tilde{\beta}\wedge du 
        \wedge\tilde{\eta} + u\tilde{\beta}\wedge d\tilde{\eta}.  \]
        Along $u = 0$ we have $\omega\wedge \omega = -\beta\wedge\eta\wedge du$. By assumption $\beta\wedge \eta$ is a volume form on $u = 0$, and thus $\beta\wedge\eta\wedge du$ is a nondegenerate form on $u=0$. It follows that $\omega$ is nondegenerate in a tubular neighborhood of $u = 0$. Thus, $(\tilde{M}, \omega)$ is a symplectic form. 
        
        Finally, we show that the embedding is coisotropic. To prove this, it must be shown that $\partial_u$ is not in the $\omega$-orthogonal complement to $TM$. That is, for each $\tilde{z} = (z,0)$ we need to find a vector $v=(\dot{z},\dot{u}) $ tangent to $M$ such that $\omega_{\tilde{z}}(\partial_u,v)\neq 0$. Firstly, on $M$ we have $\beta\wedge \eta = f\Omega$, for some non-vanishing function $f$. It follows that,
        \[ (\iota_B\eta)\beta = f\beta \implies \iota_B\eta = f. \]
        With $\dot{z} = B$  and $\dot{u}=0$ we can conclude that $\omega_{\tilde{z}}(\partial_u,v) = f $, thus, non-vanishing as desired.
\end{proof}

    \begin{remark}
        Assuming that $\beta\wedge \eta$ is a volume form ensures that $M$ is orientable. In turn, this forces the vector bundle $E^*$ to be trivializable, that is, $E^* = M\times \R$. The connection between the constructive embedding of \cref{thm:eta-embedding} and the general version \cref{thm:GotayEmbedding} is now clear: $\tilde{M}$ is diffeomorphic to a tubular neighborhood of the zero-section of $E^*$.
    \end{remark}
    
    On a three-manifold $M$, the desired $\eta$ is always available, for example, one could take $\eta = B^\flat = \iota_B g$ for any Riemannian metric $g$ on $M$. We will call the particular coisotropic embedding of \cref{thm:eta-embedding} the \emph{$\eta$-embedding of $(M,\beta)$}.

    For our purposes, we not only seek a coisotropic embedding of $(M,\beta)$ in some symplectic $(\tilde{M},\omega)$, but we also desire an embedding of the integrable presymplectic system $(B,J,p,\Omega)$ into a symplectic integrable system $(\tilde{B},\tilde{J},H,\tilde{p})$. The following theorem reveals this possibility provided $\eta$ satisfies an additional constraint.
    
 \begin{thm}\label{thm:integrableEmbedding}
        Let $(M,\beta)$ be an orientable, presymplectic manifold and suppose that the associated magnetic field $B$ is integrable with Hamiltonian pair $(J,p)$. Assume there exists a one-form $\eta$ on $M$ such that $\beta\wedge \eta$ is a volume form and that the pullback of $d\eta$ to constant $p$ surfaces vanishes. Let $(M\times\R,\omega)$ be the symplectic manifold obtained from the $\eta$-embedding of $(M,\beta)$, and take coordinates $(z,u)\in M\times\R$. Then $H =  u$ and $\tilde{p}$ form an integrable system on $M\times\R$ with associated Hamiltonian vector fields respectively satisfying
            \begin{equation}
                X_H|_{u=0} = \frac{1}{\iota_{B}\eta}B,\quad X_{\tilde{p}}|_{u=0} =  J - \frac{\iota_{J}\eta}{\iota_{B} \eta} B.
            \end{equation}
\end{thm}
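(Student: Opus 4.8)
Here is the strategy I would pursue.

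\medskip

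\noindent\textbf{Setup and the two candidate vector fields.} The plan is to take $\tilde p := \pi^{*}p$ (so that trivially $\tilde p|_{M}=p$ and $H=u=0$ cuts out $M$) and $H:=u$, and to verify the two claims — the restriction formulas and integrability — directly. Write $\tilde B,\tilde J$ for the trivial lifts of $B,J$ to $\tilde M=M\times\R$, set $f:=\iota_{B}\eta$, which is nowhere vanishing by the computation in the proof of \cref{thm:eta-embedding}, and $h:=\iota_{J}\eta$. From $\omega=\tilde\beta+du\wedge\tilde\eta+u\,d\tilde\eta$ together with $\iota_{B}\beta=0$ and $\iota_{J}\beta=-dp$ I would first record the contraction identities
\[
\iota_{\tilde B}\omega=-f\,du+u\,\pi^{*}(\iota_{B}d\eta),\qquad
\iota_{\tilde J}\omega=-d\tilde p-h\,du+u\,\pi^{*}(\iota_{J}d\eta).
\]
Since $\omega$ is nondegenerate in a neighborhood of the zero section by \cref{thm:eta-embedding}, the Hamiltonian vector fields $X_{H}$ and $X_{\tilde p}$ (defined by $\iota_{X_{H}}\omega=-dH$, $\iota_{X_{\tilde p}}\omega=-d\tilde p$) exist and are unique there.

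\medskip

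\noindent\textbf{The restriction formulas.} I would subtract off the natural guesses. Using the identities above together with $\iota_{X_{H}}\omega=-du$ and $\iota_{X_{\tilde p}}\omega=-d\tilde p$, one finds that $\omega$ contracts $X_{H}-f^{-1}\tilde B$ and $X_{\tilde p}-\bigl(\tilde J-(h/f)\tilde B\bigr)$ into $-\tfrac{u}{f}\,\pi^{*}(\iota_{B}d\eta)$ and $-u\,\pi^{*}\!\bigl(\iota_{J}d\eta-\tfrac{h}{f}\iota_{B}d\eta\bigr)$, respectively. Both of these one-forms vanish on $\{u=0\}$, so nondegeneracy of $\omega|_{u=0}$ forces $X_{H}|_{u=0}=f^{-1}B$ and $X_{\tilde p}|_{u=0}=J-(h/f)B$, which is exactly the assertion. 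Note that this step uses neither the hypothesis on $d\eta$ nor integrability of $B$.

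\medskip

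\noindent\textbf{Involution (the main obstacle).} The remaining, and I expect hardest, step is $\{H,\tilde p\}=0$, and this is where the hypothesis on $d\eta$ is used. On the dense open set $\{dp\neq0\}$ the assumption that $d\eta$ pulls back to zero on level sets of $p$ means, by pointwise division (relative Poincar\'e lemma), that $d\eta=dp\wedge\mu$ for a smooth local one-form $\mu$; combined with $\L_{B}p=0$ from \cref{lem:presym}, this gives $\iota_{B}d\eta=-(\iota_{B}\mu)\,dp$, i.e.\ $\iota_{B}d\eta$ is pointwise proportional to $dp$. Now pair $\iota_{X_{\tilde p}}\omega=-d\tilde p$ with $\tilde B$: on one hand $\omega(X_{\tilde p},\tilde B)=-d\tilde p(\tilde B)=-(\L_{B}p)\circ\pi=0$; on the other hand, by the contraction identity for $\iota_{\tilde B}\omega$, $\omega(X_{\tilde p},\tilde B)=f\,du(X_{\tilde p})-u\,\pi^{*}(\iota_{B}d\eta)(X_{\tilde p})$, and the last term is proportional to $d\tilde p(X_{\tilde p})=-\omega(X_{\tilde p},X_{\tilde p})=0$. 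Hence $f\,du(X_{\tilde p})=0$, so $du(X_{\tilde p})=0$ on $\{dp\neq0\}$ and therefore everywhere by continuity, giving $\{H,\tilde p\}=X_{H}(\tilde p)=-du(X_{\tilde p})=0$. The only subtlety here is the degeneracy of $dp$ along $\gamma$, which is circumvented by this density argument, so the relative Poincar\'e lemma is never invoked at critical points of $p$.

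\medskip

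\noindent\textbf{Assembly.} Finally, involution gives $[X_{H},X_{\tilde p}]=X_{\{H,\tilde p\}}=0$; $X_{\tilde p}$, being Hamiltonian, automatically preserves $\omega$ and hence its volume form; and $dH=du$ and $d\tilde p=\pi^{*}dp$ are independent on the dense open set $\{dp\neq0\}$, whose complement is $\crit(p)\times\R$. Thus $(H,\tilde p)$ is an integrable Hamiltonian system on $(\tilde M,\omega)$ containing the original presymplectic system as $\{H=0\}$ with $\tilde p|_{M}=p$, and with the Hamiltonian vector fields restricting as claimed, completing the proof.
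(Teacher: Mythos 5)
Your proof is correct and follows the same two-step outline as the paper's (first compute the restrictions of $X_H$ and $X_{\tilde p}$ to $\{u=0\}$, then establish involution), but the involution step is argued by a genuinely different route. The paper pulls $\omega$ back to a common level set of $(\tilde p,H)$ and observes that the pullback vanishes --- because $\beta$ and $d\eta$ both pull back to zero on $p$-surfaces and $u$ is constant there --- so the common level sets are isotropic, hence Lagrangian in dimension four, hence $X_H$ and $X_{\tilde p}$ are tangent to them and $\{\tilde p,H\}=0$. You instead contract Hamilton's equation for $X_{\tilde p}$ against the lift $\tilde B$ and use the pointwise proportionality $\iota_B d\eta\propto dp$ to conclude $du(X_{\tilde p})=0$ directly. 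Both arguments consume exactly the same hypotheses ($\L_B p=0$ and isotropy of $p$-surfaces for $\beta$ and for $d\eta$), but yours is more self-contained: it does not rely on the dimension count (left implicit in the paper) showing that isotropic common level sets force Poisson commutation, and it addresses the critical set of $p$ explicitly via a density argument. Your contraction identities for $\iota_{\tilde B}\omega$ and $\iota_{\tilde J}\omega$ also supply the ``calculation'' that the paper merely asserts for the restriction formulas. One cosmetic simplification: the identity $\iota_B d\eta\propto dp$ on $\{dp\neq 0\}$ follows from pointwise linear algebra alone, since $\iota_B d\eta$ annihilates $\ker dp$ (as $B\in\ker dp$ and $d\eta$ vanishes on $\ker dp\times\ker dp$); no relative Poincar\'e lemma or local factorization $d\eta=dp\wedge\mu$ is actually needed.
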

\begin{proof}
        We begin by computing the Hamiltonian vector fields associated to $H$ and $\tilde{p}$ on $u=0$. They are defined through Hamilton's equation. Explicitly we have,
        \[ \iota_{X_H}\omega = - du,\quad \iota_{X_{\tilde{p}}}\omega = -d\tilde{p}. \]
        A calculation reveals that $X_H|_{u=0} = \frac{1}{\iota_{B}\eta}B$ and $X_{\tilde{p}}|_{u=0}=  J - \frac{\iota_{J}\eta}{\iota_{B} \eta} B$, as desired. 
        
        To show that $(H,\tilde{p})$ form a Hamiltonian integrable system on $M\times\R$ we require the Poisson bracket $\{\tilde{p},H\} = 0$. A quick way to see this is to pull back $\omega$ to the common level set $(p,H) = (p_0,H_0)$. The condition $H = H_0$ implies $u = H_0$. The pullback of $\omega$ to the level set is therefore
        \[ 
              \pi_{p_0,H_0}^* (\beta + d(u \eta)) = 0 +\pi_{p_0}^*d(H_0 \eta) = H_0\pi_{p_0^*}d\eta = 0,
        \]
        since $\beta$ and $d(\eta)$ both vanish when pulled back to a $p$-surface.
\end{proof}
    
\begin{remark}
        If a closed $\eta$ can be found then immediately $d\eta$ vanishes on constant $p$ surfaces. A manifold $M$ with a presymplectic form $\beta$ and a form $\eta$ such that $\beta\wedge \eta$ is a volume form is an \emph{almost cosymplectic manifold}. If $\eta$ is in addition closed then $(M,\beta,\eta)$ is a \emph{cosymplectic manifold}. See \cref{sec:Weak} for more details. The above \cref{thm:eta-embedding} is not entirely novel, as work in the embedding of cosymplectic systems has already been developed; see \cite[Lem.~3.2]{leonCosymplecticReductionSingular1993}. However, the theorem generalizes the construction to almost cosymplectic systems.
\end{remark}
\begin{remark}
        Recall that a magnetic field satisfying the MHS conditions implies that $j = \iota_J\Omega = dB^\flat$. Taking $\eta = B^\flat$ we see that the hypothesis of \cref{thm:integrableEmbedding} is satisfied. It follows that magnetic fields satisfying the MHS conditions can be globally embedded in a Hamiltonian integrable system. 
\end{remark}
    
    
    
    
    We now show that a tubular neighborhood $U$ of $\gamma$ can be $\eta$-embedded in a symplectic system. This is a local result. Whether any integrable presymplectic system can be globally embedded is not currently known and deserves further investigation.
    
\begin{corollary}\label{cor:tubularCoIsoEmbedding}
        There exists a tubular neighborhood $U$ of $\gamma$ that can be $\eta$-embedded in $U\times \R \ni(z,u)$ such that $(u,\tilde{p})$ is an integrable system on $U\times\R$.
\end{corollary}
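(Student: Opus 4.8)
The plan is to obtain the corollary as a direct application of \cref{thm:integrableEmbedding}; the only work is to produce, near $\gamma$, a one-form $\eta$ satisfying that theorem's two hypotheses: $\beta\wedge\eta$ is a volume form, and $d\eta$ pulls back to zero on every constant-$p$ surface. First I would replace $M$ by the cover $U=D^2\times\T$ of the tubular neighborhood $\U(\gamma)$ supplied by \cref{lem:ThmA}, with coordinates $(x,y,\phi)$ for which $p=p(\gamma)+\tfrac12 c(x^2+\epsilon y^2)$ and $\beta(0,0,\phi)=dy\wedge dx$; this cover is a diffeomorphism when $\gamma$ is elliptic or direct hyperbolic and a double cover when it is reflection hyperbolic. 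On $U$ the two-form $\beta$ is still closed and of maximal rank, $U$ is orientable, and the vector field $B$ with $\iota_B\Omega=\beta$ remains integrable with Hamiltonian pair $(J,p)$, since $[B,J]=0$ and $\iota_J\beta=-dp$ are local conditions preserved by pullback along the covering map. Thus $(U,\beta)$ with its magnetic field is exactly the kind of data to which \cref{thm:integrableEmbedding} applies.

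For the one-form I would take $\eta=d\phi$. Being exact it is closed, so $d\eta=0$ and its pullback to every $p$-level set vanishes, which is the second hypothesis. For the first, note that along $\gamma$ one has $\beta\wedge\eta=(dy\wedge dx)\wedge d\phi$, a volume form; since $\gamma$ is compact and $\beta\wedge d\phi$ varies continuously over $U$, it stays non-degenerate throughout some possibly thinner tubular neighborhood of $\gamma$. Shrinking the disk factor and keeping the name $U$, the three-form $\beta\wedge\eta$ is then a volume form on all of $U$.

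With both hypotheses in hand, \cref{thm:integrableEmbedding} (applied with $M$ replaced by $U$) yields the $\eta$-embedding $(U\times\R,\omega)$, with $\omega$ as in \eqref{eq:omegaDefinition}, coisotropically embedding $U$ as $\{u=0\}$, and establishes that $H=u$ together with the invariant $\tilde p=p\circ\pi$ (where $\pi:U\times\R\to U$ is the projection) form a Hamiltonian integrable system on $U\times\R$, with Hamiltonian vector fields along $u=0$ as recorded in that theorem. This is precisely the assertion that $(u,\tilde p)$ is an integrable system on $U\times\R$, completing the proof.

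I do not expect a substantive obstacle: the genuine difficulty is already absorbed into \cref{lem:ThmA} and \cref{thm:integrableEmbedding}. The only points requiring attention are that the product structure $D^2\times\T$ makes the naive choice $\eta=d\phi$ simultaneously closed and non-degenerate against $\beta$ --- without such a structure one would instead need the general \cref{thm:GotayEmbedding} and then check integrability by hand --- and the small continuity argument ensuring $\beta\wedge d\phi$ is non-degenerate on a whole neighborhood of the axis rather than only on $\gamma$. The reflection hyperbolic case is dealt with simply by carrying out everything on the double cover, in keeping with the statement of \cref{thm:SingularCoordinates}.
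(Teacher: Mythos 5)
Your proposal is correct and follows essentially the same route as the paper: both take $\eta=d\phi$ (automatically closed, so the second hypothesis of \cref{thm:integrableEmbedding} is trivial) and verify that $\beta\wedge d\phi$ is a volume form on a sufficiently thin tubular neighborhood by a continuity argument from the axis, then invoke \cref{thm:integrableEmbedding}. The only cosmetic difference is that the paper checks nondegeneracy via sign-definiteness of $\iota_B d\phi$ along $\gamma$ rather than via the normal form of \cref{lem:ThmA}, and it does not pass to the double cover; both variants are fine.
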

\begin{proof}
        Let $\phi\in \T$ parameterize the curve $\gamma \in M$. As $B$ is non-vanishing and tangent to $\gamma$ we must have $\iota_B d\phi$ be sign-definite on $\gamma$. It follows from the smoothness of $B$ that $\gamma$ is sign definite in a tubular neighborhood $U$ of $\gamma$. Hence, $\beta\wedge d\phi$ is a volume form on $U$. By setting $\eta = d\phi$ and invoking \cref{thm:integrableEmbedding}, the result follows.
\end{proof}

    With \cref{cor:tubularCoIsoEmbedding} established, we have opened the problem of proving \cref{thm:SingularCoordinates} to the theory developed for integrable, Hamiltonian systems. In particular, the large body of work on normal forms in the neighborhood of singular orbits of these systems. We will need to concentrate on four-dimensional, integrable Hamiltonian systems with integrals $F = (F_1,F_2)$. Moreover, the embedded singular orbit $\tilde{\gamma}$ satisfies $d\tilde{p}|_{\tilde{\gamma}} = 0$ but $du|_{\tilde{\gamma}} \neq 0$. Hence, our interest is in 4D systems with a 2D singular orbit $\tilde{\gamma}$ generated from the Hamiltonians $(F_1,F_2)$. The following celebrated theorem gives the analog of \cref{thm:SingularCoordinates} for these 4D systems.
    
\begin{thm}[Eliasson \cite{Eliasson90}]\label{thm:Eliasson}
        Suppose that $F = (F_1,F_2)$ is a smooth (analytic) integrable system on a symplectic 4-manifold $(\tilde{M},\omega)$, and
        let $\gamma \in \tilde{M}$ be a nondegenerate, 2D
        singular orbit of elliptic or hyperbolic type. Then, in a tubular neighborhood $U$ of $\tilde\gamma$, there exists a smooth (analytic) map $\Phi:\R^2 \times \R\times\T \to U$ and coordinates $ (x,y,\Psi,\phi)\in \R^2\times\R\times\T$ such that,
        \begin{enumerate}
            \item $\tilde\gamma$ is given by $x = y = 0$. 
            \item $\Phi^* \omega = dy \wedge dx - d\Psi \wedge d\phi$
            \item $\Phi^*F = G(\psi, \Psi)$, with $G:\R^2 \to \R^2$ a smooth (analytic) diffeomorphism and
            $\psi = \tfrac12(x^2+\varepsilon y^2)$ with $\varepsilon = 1$ if $\tilde\gamma$ is elliptic or $-1$ if hyperbolic. 
        \end{enumerate}
\end{thm}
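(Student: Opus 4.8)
The plan is to reduce this four-dimensional statement, in which $\tilde\gamma$ is a two-dimensional nondegenerate singular orbit (so $dF$ has rank one along it), to a one-parameter family of two-dimensional, rank-zero normal-form problems, by first peeling off the moment-map component whose differential does not vanish on $\tilde\gamma$ and then invoking the known smooth (resp.\ analytic) normal form for a planar center or saddle. This is the ``rank reduction'' strategy familiar from modern treatments of nondegenerate singularities of integrable systems.

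First I would linearly recombine the moment map: after replacing $F$ by $A\circ F$ for a suitable constant $A\in\mathrm{GL}(2,\R)$ — which alters neither integrability nor the shape of the conclusion, as the diffeomorphism $G$ in the statement absorbs $A^{-1}$ — I may assume $dF_2$ is nowhere zero near $\tilde\gamma$, while $F_1$ has vanishing transverse differential and nondegenerate transverse Hessian along $\tilde\gamma$ (definite in the elliptic, indefinite in the hyperbolic case). The $\R^2$-action generated by $X_{F_1},X_{F_2}$ restricts to $\tilde\gamma$ with one-dimensional orbits, which are exactly the cycles along which the nearby regular Liouville tori collapse onto $\tilde\gamma$. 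Replacing $F_2$ by the partial action $\Psi=\tfrac1{2\pi}\oint\alpha$ (with $\omega=d\alpha$ locally, the loop being one such cycle) — a smooth function of $(F_1,F_2)$ with nonvanishing differential near $\tilde\gamma$, since these cycles do not degenerate — makes the flow of $X_\Psi$ exactly $2\pi$-periodic. The partial action-angle theorem then supplies symplectic coordinates $(x,y,\Psi,\phi)\in\R^2\times\R\times\T$ on a tubular neighborhood of $\tilde\gamma$ with $\omega=dy\wedge dx-d\Psi\wedge d\phi$, $F_2=\Psi$, and $\tilde\gamma=\{x=y=0\}$. Because $\{F_1,F_2\}=0$ and $F_2=\Psi$, the integral $F_1$ is $\phi$-independent, $F_1=F_1(x,y,\Psi)$, and for each fixed $\Psi$ it is a smooth function on the symplectic plane $(\R^2,dy\wedge dx)$ with a nondegenerate center (elliptic) or saddle (hyperbolic) at the origin.

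Next I would apply the two-dimensional normal form, treating $\Psi$ as an inert parameter. In the elliptic case this is elementary: the Hamiltonian flow of $F_1(\cdot,\cdot,\Psi)$ near the center is periodic with closed level curves as orbits, its action $\mathcal{A}(\cdot,\Psi)$ extends smoothly through the center — by Taylor's theorem with remainder, just as in the computation of $\psi$ in \S\ref{sec:ellipticproof} — vanishing to second order there with positive Hessian, and passing to $\mathcal{A}$ and its conjugate angle as new planar symplectic coordinates yields $F_1=g(\psi,\Psi)$ with $\psi=\tfrac12(x^2+y^2)$, all smooth in $\Psi$. In the hyperbolic case I would invoke the isochore Morse lemma of Colin de Verdi\`{e}re--Vey — with parameters, this is the substantive part of \cite{Eliasson90} — which makes $F_1$, after a smooth symplectomorphism fixing the saddle, a smooth function of $xy$, equivalently of $\tfrac12(x^2-y^2)$, with smooth dependence on $\Psi$. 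Finally I would reassemble on the four-manifold: the resulting $\Psi$-family of planar symplectomorphisms does not preserve $\omega$ on its own because of cross terms in $d\Psi$, but these are absorbed by composing with a shear $\phi\mapsto\phi+S(x,y,\Psi)$ built from the $\Psi$-derivative of the generating function of that symplectomorphism; the composite fixes $\tilde\gamma$, carries $\omega$ to $dy\wedge dx-d\Psi\wedge d\phi$, preserves $F_2=\Psi$, and sends $F_1$ to $g(\psi,\Psi)$ with $\psi=\tfrac12(x^2+\varepsilon y^2)$, so that $\Phi^*F=G(\psi,\Psi)$ with $G$ a smooth diffeomorphism. The analytic case runs identically with analytic versions of each ingredient.

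I expect the main obstacle to be the hyperbolic two-dimensional normal form: producing the conjugating symplectomorphism with genuine smoothness (and analyticity) at the saddle itself, where the level sets branch into four separatrix sectors, and with smooth dependence on $\Psi$. This cannot be reached by a naive Moser homotopy, because the ``matching periods'' mechanism that makes the elliptic argument of \S\ref{sec:ellipticproof} go through has no counterpart for a saddle; controlling the flat divisor error there is the technical heart of the cited result, and the rank-one reduction in the earlier steps is comparatively soft.
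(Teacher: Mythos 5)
The paper does not actually prove this theorem: it is quoted as an external result, attributed to Eliasson in the smooth case and to Vey in the analytic case, with the explicit remark that no single reference contains a complete proof of the general statement. So there is no in-paper argument to compare yours against. Your sketch follows the standard route in the literature for rank-one, corank-one singular orbits: normalize the moment map so one component has nonvanishing differential, replace it by a partial action to get a periodic flow, apply the partial action-angle theorem to obtain $(x,y,\Psi,\phi)$ with $\omega=dy\wedge dx-d\Psi\wedge d\phi$ and $F_2=\Psi$, then treat $F_1(\cdot,\cdot,\Psi)$ as a one-degree-of-freedom Hamiltonian with parameter (elementary action variable at a center; the Colin de Verdi\`ere--Vey isochore Morse lemma at a saddle), and repair the four-dimensional symplectic form with a shear in $\phi$. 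This is essentially the Miranda--Zung decomposition, and it is sound as an outline; in particular the final shear works because the $d\Psi$ cross term is a closed, hence exact, one-form on each contractible $\Psi$-slice.

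Two caveats. First, your description of the cycle defining the partial action is internally inconsistent: you call the one-dimensional orbits on $\tilde\gamma$ ``the cycles along which the nearby regular Liouville tori collapse'' and then justify smoothness of $\Psi$ by saying ``these cycles do not degenerate.'' The loop over which $\Psi$ is computed must be the \emph{surviving} cycle, homotopic to the $\T$-factor of $\tilde\gamma\cong\T\times\R$; the vanishing cycle is the transverse one that shrinks to a point on the axis, and integrating $\alpha$ over that one would produce the degenerate action $\psi$, not $\Psi$. Second, and more substantively, the argument is not self-contained: the parametric smooth (and analytic) hyperbolic one-degree-of-freedom normal form that you invoke is precisely the hard content of the theorem being cited, so what you have written is an accurate reduction of Eliasson's theorem to its known one-dimensional core rather than an independent proof. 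Since the paper itself uses the theorem as a black box, this is a defensible level of detail, but it should be labelled as a reduction to the cited literature rather than a proof.
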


    The theorem is attributed to Eliasson, however, the analytic case is due to Vey \cite{veyCertainsSystemesDynamiques1978}, and no one reference appears to contain a complete proof of the most general form of \cref{thm:Eliasson}. A brief review is given in \cite{sepeIntegrableSystemsSymmetries2018} and \cite{zungConceptualApproachProblem2018}.
    
    With both the coisotropic embedding and normal form for 4d integrable Hamiltonian systems established, we are now in a position to prove the main theorem, \cref{thm:SingularCoordinates}.
    
\begin{proof}[Proof of \cref{thm:SingularCoordinates}]
        From \cref{cor:tubularCoIsoEmbedding} we can coisotropically embed a neighborhood $U$ of $\gamma\in M$ into $M\times\R\ni (z,u)$ with a symplectic form $\omega$ such that $(u,\tilde{p})$ is an integrable system on $M\times \R$. As $\tilde{p}$ is the trivial extension of $p$, it is clear that the manifold $\tilde{\gamma} = \{ p = 0, u\in \R\}$ is again a nondegenerate singular orbit of the same type (elliptic or hyperbolic) as $\gamma$. 
        By \cref{thm:Eliasson} we are guaranteed a diffeomorphism $\Phi:U\times\R \to \R^2 \times \R \times \T \ni (x,y,\Psi,\phi)$ that is smooth or analytic if $\tilde{p}$ is respectively smooth or analytic, and is such that $\Phi^*\omega = dy \wedge dx - d\Psi \wedge d\phi$. Moreover, there exists a smooth (analytic) diffeomorphism $G:\R^2\to \R^2$ such that $u = G_1(\psi,\Psi),\, \tilde{p} = G_2(\psi,\Psi)$ where $\psi = \tfrac12(x^2+ \varepsilon y^2)$ and $\varepsilon = \pm 1$ depending on whether $\gamma$ is elliptic or hyperbolic.
        
        As $M $ is embedded in $U\times\R$ as $u = 0$ then, taking the inverse of $G$, we have $\Psi|_{M} = G_1^{-1}(0,p)$. As $\gamma$ is nondegenerate, then we are guaranteed an inverse to $\psi = G^{-1}_2(0,p)$ so that $\tilde{p}|_M = p(\psi)$ for $p:\R\to\R$ smooth (analytic). Hence, we can write $\Psi = \Psi(\psi),\, p = p(\psi)$ on $u = 0$. By pulling back $\omega$ to $M = \{u=0\}$ we obtain $\beta$ in the form $\beta = dy\wedge dx - d\Psi(\psi)\wedge d\phi$, as desired.
\end{proof}

\section{Applications: Hamada and Boozer Coordinates}\label{sec:Applications}
As noted in \S\ref{sec:Summary}, two of the most common, straight-field-line (or normal form) coordinate systems that appear in the plasma physics literature are \emph{Hamada} and \emph{Boozer coordinates} \cite{Hamada62, Boozer81,Helander14}. Each of these coordinate systems is defined, in general, only in an open neighborhood of a regular invariant torus. Indeed, in \S\ref{sec:Summary} it was demonstrated that the coordinated can be ill-defined in any neighborhood of an elliptic magnetic axis because the poloidal angle variable degenerates on the axis. In \S\ref{sec:nearAxis}, we will define ``near-axis" Hamada and Boozer coordinates; these  share the essential qualitative features of their well-established counterparts, but explicitly include the magnetic axis within their domain of definition. We will prove the existence of these special coordinate systems under appropriate hypotheses.

We begin, however, by reformulating the regular case for presymplectic systems.
\subsection{Near-surface coordinates}\label{sec:nearTorus}

Recall that the standard definitions, \eqref{eq:HamadaCoord} for Hamada coordinates and \eqref{eq:BoozerCoord} for Boozer coordinates, apply to the neighborhood of a toroidal invariant surface. In the spirit of \S\ref{sec:Presymplectic}, we extend these definitions more generally to integrable presymplectic systems.
\begin{definition}[Near-surface Hamada coordinates]
Suppose $M$ is a compact three-manifold and that $(B,J,p, \Omega)$ is an integrable presymplectic system, \cref{def:integrable}.
Let $\Lambda$ be a regular invariant torus with a tubular neighborhood $\U$. A diffeomorphism $\Phi_H:\U\rightarrow  (\psi_1,\psi_2) \times \T\times \T$, with $\Phi_H(m) = (\psi,\theta,\zeta)$, defines a system of \emph{Hamada coordinates} in the flux interval $\psi_1<\psi < \psi_2$ if
\begin{equation}\label{eq:HamdaForm}
\begin{split}
    \beta = \iota_B\Omega& = dF\wedge d\theta - dG\wedge d\zeta,\\
    j = \iota_J\Omega & = dK\wedge d\theta - dL\wedge d\zeta,
\end{split}
\end{equation}
where $F,G,K,L$ are smooth single-variable functions of $\psi$. 
\end{definition}

By contrast, it is unclear what are the precise requirements for the existence of Boozer coordinates, given in the standard case by \eqref{eq:BoozerCoord}, but certainly they can be defined in the case that $J$ is the MHS current. The extension to presymplectic systems becomes:
\begin{definition}[Near-surface Boozer coordinates]
Suppose $M$ is a compact three-manifold with metric $g$ and that $(B,J,p, \Omega)$ is an MHS integrable system, \cref{def:MHSIntegrable}. A diffeomorphism $\Phi_B:\U\rightarrow (\psi_1,\psi_2)\times\T\times \T$, with $\Phi_B(m) = (\psi,\theta,\zeta)$, defines a system of \emph{Boozer coordinates} in the flux interval $\psi_1<\psi < \psi_2$ if
\begin{equation}\label{eq:BoozerForm}
\begin{split}
    \iota_B\Omega &= df\wedge d\theta - dg\wedge d\zeta,\\
         \iota_Bg\wedge dp & = dk\wedge d\theta - dl\wedge d\zeta,
\end{split}
\end{equation}
where $f,g,k,l$ are smooth single-variable functions of $\psi$. 
\end{definition}

For the sake of completeness, we summarize the existence theory for these coordinate systems.

\begin{thm}[Existence of Regular Hamada and Boozer coordinates]\label{thm:RegularHamada}
If $\Lambda$ is a regular invariant torus for an integrable presymplectic system then there is a tubular neighborhood of $\Lambda$ on which there is system of Hamada coordinates. If the system is in addition MHS integrable, then there is a neighborhood of $\Lambda$ on which there is a system of Boozer coordinates.
\end{thm}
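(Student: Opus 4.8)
The plan is to derive Hamada coordinates directly from the Arnol'd--Liouville--Mineur coordinates of \cref{thm:AML}, and then obtain Boozer coordinates for free by rescaling the volume form. For the Hamada statement, I would start with coordinates $(\psi,\theta,\zeta)$ on a tubular neighborhood $\U$ of $\Lambda$ furnished by \cref{thm:AML}, so that $p=p(\psi)$ and $\beta=\iota_B\Omega=dF(\psi)\wedge d\theta-dG(\psi)\wedge d\zeta$. This is already the first line of \eqref{eq:HamdaForm}; and since in these coordinates $B$ has only $\psi$-dependent angular components, the density $\rho$ in $\Omega=\rho\,d\psi\wedge d\theta\wedge d\zeta$ is a function of $\psi$, with $\rho B^\zeta=F'$ and $\rho B^\theta=G'$. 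It remains to bring $j:=\iota_J\Omega$ to the form $dK(\psi)\wedge d\theta-dL(\psi)\wedge d\zeta$ by a further change of the angular variables. Three structural facts about $j$ drive this: (i) $j$ is closed, since $J$ is divergence free; (ii) $\iota_J\,dp=-\iota_J\iota_J\beta=0$ and $p'(\psi)\neq 0$ near $\Lambda$, so $J$ is tangent to the $\psi$-surfaces, whence $i_\psi^*j=0$ on each leaf $\Lambda_\psi\cong\T^2$ and $j=d\psi\wedge\alpha$ with $i_\psi^*\alpha$ closed; and (iii) $\iota_Bj=-\iota_J\beta=dp$, which forces $\iota_B\alpha=-p'(\psi)$. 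I would then Hodge--decompose $i_\psi^*\alpha$ on each torus leaf into a $\psi$-dependent harmonic part $c_1(\psi)\,d\theta+c_2(\psi)\,d\zeta$ (its two period integrals, i.e.\ the net currents) plus an exact part $d_{\Lambda_\psi}g_\psi$; choosing a smooth transverse normalization, the $g_\psi$ assemble into one smooth function $g$ on $\U$ by pure fibrewise integration. A change of the angular variables $(\theta,\zeta)\mapsto(\overline\theta,\overline\zeta)$ absorbing the exact part then leaves $\beta$ in Hamada form and turns $j$ into $dK(\psi)\wedge d\overline\theta-dL(\psi)\wedge d\overline\zeta$ with $K'=c_1$, $L'=-c_2$; to keep this a bona fide diffeomorphism, I would realize it by a Moser--type deformation in the spirit of \S\ref{sec:ellipticproof}.

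The step I expect to be delicate is absorbing the exact part, which naively reintroduces the singular--divisor (``small--divisor'') problem on rational flux surfaces. The Hodge--theoretic reformulation dissolves it: the required correction is governed by the $2\times2$ ``flux matrix'' with rows $(c_1,c_2)$ and $(F',-G')$, and averaging the pointwise identity $\iota_B\alpha=-p'$ over each torus gives $c_1B^\theta+c_2B^\zeta=-p'$, so that this matrix has determinant $\rho(\psi)\,p'(\psi)$. This is nowhere zero precisely because $\Lambda$ is a \emph{regular} leaf; equivalently, it expresses the fact that $B$ and $J$ are never collinear on the flux surfaces, which is forced by $J\times B=\nabla p\neq 0$. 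Because this nondegeneracy is pointwise, the correction is found by inverting a matrix rather than by solving a transport equation along field lines, so no small--divisor estimate enters.

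For the Boozer statement, assume the system is MHS integrable, so $\iota_J\Omega=dB^\flat$. Then $\Omega':=|B|^2\Omega$ is a volume form and the vector fields $B':=B/|B|^2$, $J':=(B\times\nabla p)/|B|^2$ satisfy $\iota_{B'}\Omega'=\beta$, $\iota_{J'}\Omega'=B^\flat\wedge dp$, $\iota_{J'}\beta=-dp$, and $\L_{J'}\Omega'=d(B^\flat\wedge dp)=dB^\flat\wedge dp=(\iota_J\Omega)\wedge dp=(\iota_J\,dp)\,\Omega=0$ (the third equality by MHS integrability); hence by \cref{lem:divIsCommute}, $J'$ being $\Omega'$-preserving and Hamiltonian for $\beta$ forces $[B',J']=0$, so $(B',J',p,\Omega')$ is an integrable presymplectic system with the \emph{same} presymplectic form $\beta$ and the same regular leaf $\Lambda$. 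Applying the Hamada result just established to $(B',J',p,\Omega')$ then produces coordinates in which $\iota_{B'}\Omega'=\beta=df\wedge d\theta-dg\wedge d\zeta$ and $\iota_{J'}\Omega'=\iota_B g\wedge dp=dk\wedge d\theta-dl\wedge d\zeta$, which is precisely the Boozer form \eqref{eq:BoozerForm}. (The converse fails, since Hamada coordinates do not require MHS integrability.)
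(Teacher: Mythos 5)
Your proposal is correct and follows essentially the same route as the paper: the Boozer statement is obtained by exactly the volume-rescaling reduction used in the paper's proof of \cref{NAB_exist} (with $\Omega'=|B|^2\Omega$, $B'=B/|B|^2$, $J'=(B\times\nabla p)/|B|^2$), and the Hamada statement by the Arnol'd--Liouville construction, for which the paper simply cites \cite{Burby_Kallinikos_MacKay_2020a}; your Hodge-decomposition-plus-Moser realization is the standard way to carry that out and mirrors the paper's own proof of \cref{NAH_existence}. One small caution: in the ALM coordinates, with only $\beta$ normalized, you may conclude only that $\rho B^\theta$ and $\rho B^\zeta$ are flux functions, not $\rho$ itself (that $\rho$ is a flux function is a consequence of having \emph{both} $\beta$ and $j$ in Hamada form together with $[B,J]=0$), so averaging $\iota_B\alpha=-p'$ gives a flux-matrix determinant of $\langle\rho\rangle\,p'$ rather than $\rho\,p'$ --- still nonvanishing on a regular leaf, so your nondegeneracy argument and the subsequent Moser deformation go through unchanged.
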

\begin{proof}
The proof for the Hamada case using the Arnol'd-Liouville argument is given in \cite{Burby_Kallinikos_MacKay_2020a}. The proof for the Boozer case is the same as the proof of \cref{NAB_exist}.
\end{proof}

Of course, these coordinates are not unique since the angles on a torus do not have a unique definition.
\begin{thm}[Characterization of Hamada coordinates]\label{hamada_characterized}
Different sets of Hamada coordinates on a tubular neighborhood $\U$ of a regular torus $\Lambda$ with the same flux variable $\psi$ are equivalent up to an affine, unimodular transformation on each flux surface. 

Specifically,  $(\psi, \theta^{(1)}, \zeta^{(1)})$ and $(\psi, \theta^{(2)}, \zeta^{(2)})$
are Hamada coordinates on $\U$ {iff} there is a constant matrix $A\in SL(2,\mathbb{Z})$ and
a pair of smooth flux functions $\nu_1(\psi),\nu_2(\psi)$ such that
\begin{align}\label{angle_transformation}
    \begin{pmatrix} \theta^{(2)}\\ \zeta^{(2)} \end{pmatrix}
    = A
     \begin{pmatrix}\theta^{(1)}\\\zeta^{(1)} \end{pmatrix} 
     +\begin{pmatrix} \nu_1\\ \nu_2 \end{pmatrix} ,
\end{align}
and so that the coefficients of the forms \eqref{eq:HamdaForm} are related by
\begin{align}\label{eq:flux_fun_relation}
    \begin{pmatrix}
        F^{(2)} & -G^{(2)}\\
        K^{(2)} & - L^{(2)}
    \end{pmatrix}
    = \begin{pmatrix}
        F^{(1)} & -G^{(1)}\\
        K^{(1)} & - L^{(1)}
    \end{pmatrix} A^{-1}.
\end{align}
\end{thm}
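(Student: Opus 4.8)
The plan is to prove both implications by relating Hamada coordinate systems to the Mineur action formulas. First I would recall that in any system of Hamada coordinates $(\psi,\theta,\zeta)$, the two-form $\beta$ takes the form $dF\wedge d\theta - dG\wedge d\zeta$, so that $F$ and $G$ are (up to the overall sign convention) the actions $\Psi_1,\Psi_2$ of \cref{thm:AML} associated with the homology basis $[\theta],[\zeta]$ of the flux torus; likewise $K,L$ are the actions of the pair $(J,p)$ against the same basis. Since $\psi$ is fixed, the flux surfaces are fixed, and any two Hamada charts sharing $\psi$ differ on each flux surface by a diffeomorphism of $\T^2$ that (a) fixes each flux surface setwise and (b) preserves $d\theta\wedge d\psi$-type structure enough to keep $\beta$ in Hamada form. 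The content of the theorem is that such a diffeomorphism must be an affine unimodular map with flux-function translation part.

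The ``if'' direction is the routine one. Given $A\in SL(2,\mathbb{Z})$ and smooth flux functions $\nu_1,\nu_2$, substituting \eqref{angle_transformation} into $\beta = dF^{(1)}\wedge d\theta^{(1)} - dG^{(1)}\wedge d\zeta^{(1)}$: since $d\nu_i$ is proportional to $d\psi$, and $dF^{(1)}, dG^{(1)}$ are also proportional to $d\psi$, all cross terms involving $d\nu_i$ wedge with $dF^{(1)}$ or $dG^{(1)}$ vanish, and one is left with $\beta$ expressed in the new angles with coefficient matrix transformed by $A^{-1}$ exactly as in \eqref{eq:flux_fun_relation}. The integrality of $A$ is what guarantees that $\theta^{(2)},\zeta^{(2)}$ are again honest $2\pi$-periodic angles (a well-defined change of basis on $\T^2$), and unimodularity is what keeps the transformation a diffeomorphism; the same computation applied to $j = \iota_J\Omega$ gives the $K,L$ row. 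So $(\psi,\theta^{(2)},\zeta^{(2)})$ is again a Hamada system.

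The ``only if'' direction is the main obstacle. Suppose $(\psi,\theta^{(1)},\zeta^{(1)})$ and $(\psi,\theta^{(2)},\zeta^{(2)})$ are both Hamada. The transition map on each flux torus is a diffeomorphism of $\T^2$; its action on $H_1(\T^2;\Z)\cong\Z^2$ is by a matrix $A\in GL(2,\Z)$, a priori possibly $\psi$-dependent, but by continuity and discreteness $A$ is locally constant, hence constant on the connected neighborhood. Writing $\theta^{(2)} = a_{11}\theta^{(1)} + a_{12}\zeta^{(1)} + u_1$, $\zeta^{(2)} = a_{21}\theta^{(1)} + a_{22}\zeta^{(1)} + u_2$ with $u_1,u_2$ genuinely periodic functions on $\U$, I would then exploit that both $\beta$-expressions have coefficients that are flux functions: computing $d\theta^{(2)}\wedge d\zeta^{(2)}$ and comparing the component of $\beta$ along $d\theta^{(1)}\wedge d\zeta^{(1)}$ (which must vanish, since $\beta$ has no such term in either chart) forces $\partial u_i/\partial\theta^{(1)}$ and $\partial u_i/\partial\zeta^{(1)}$ to combine in a way that, together with the requirement that $dF^{(2)}$ and $dG^{(2)}$ be multiples of $d\psi$, shows $u_i$ cannot depend on the angles — i.e. $u_i = \nu_i(\psi)$. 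Concretely: $dF^{(2)} \propto d\psi$ and $dF^{(2)} = a_{11}\,dF^{(1)} - a_{21}\,dG^{(1)} + (\text{terms with }du_i)$; since the first two terms are already multiples of $d\psi$, the $du_i$ contributions to the coefficient must be angle-independent, and a short argument using periodicity of $u_i$ (its angular average is the only flux-function part) upgrades this to $du_i = \nu_i'(\psi)d\psi$. Finally $\det A = \pm 1$ because the map is a diffeomorphism of $\T^2$; orientation considerations (both charts inducing the same orientation through the fixed $\psi$ and the ambient volume form) pin down $\det A = +1$, so $A\in SL(2,\Z)$, and \eqref{eq:flux_fun_relation} drops out of the same coefficient comparison for both $\beta$ and $j$. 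I expect the delicate point to be the rigorous argument that the translation part is a pure flux function rather than merely having angle-independent differential — this is where the periodicity of $u_i$ on $\T^2$ and the Hamada constraint on $j$ (not just $\beta$) must be used in tandem.
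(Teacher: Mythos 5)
Your proposal is correct and follows essentially the same route as the paper: write the angle pairs as $\vartheta^{(2)} = A\vartheta^{(1)} + \nu$ with $A\in SL(2,\mathbb{Z})$ constant by discreteness, compare the coefficients of the two-forms, average over the angles (using periodicity of $\nu$) to extract the flux-function relation \eqref{eq:flux_fun_relation}, and then use the residual equations to kill the angular dependence of $\nu$. The delicate point you flag at the end is closed exactly by the mechanism you name: combining the $\beta$- and $j$-equations gives $\Psi^{(2)\prime}D_\vartheta\nu = 0$, where $\Psi^{(2)\prime}$ is the $2\times 2$ matrix of flux-function derivatives, which is invertible because $B$ and $J$ are linearly independent on $\U$, forcing $D_\vartheta\nu=0$ and hence $\nu=\nu(\psi)$.
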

\begin{proof}
Verification of the \textit{only if} statement amounts to a straightforward calculation, and so we only prove the forward direction. 

Since the second cohomology of $\U$ is trivial, the two-forms $\beta = \iota_B\Omega$ and $j = \iota_J\Omega$ each possess a pair of primitives, $\beta = d\alpha^{(1)} = d\alpha^{(2)}$, $j = d\kappa^{(1)} = d\kappa^{(2)}$, where
\begin{align*}
    \alpha^{(k)} &= F^{(k)}(\psi)\,d\theta^{(k)}- G^{(k)}(\psi)\,d\zeta^{(k)}\\
    \kappa^{(k)} &= K^{(k)}(\psi)\,d\theta^{(k)} - L^{(k)}(\psi)\,d\zeta^{(k)}
\end{align*}
for $k = 1,2$. Using matrix notation,
\begin{align*}
    \Pi^{(k)} = \begin{pmatrix}
                \alpha^{(k)}\\ \kappa^{(k)}
               \end{pmatrix},\quad 
    \vartheta^{(k)} 
             = \begin{pmatrix} \theta^{(k)}\\ \zeta^{(k)} \end{pmatrix},\quad 
     \Psi^{(k)} = \begin{pmatrix}
                F^{(k)} & -G^{(k)}\\
                K^{(k)} & - L^{(k)}
              \end{pmatrix},
\end{align*}
these formulas become simply
\begin{equation}\label{primitive_relation}
    \Pi^{(1)} = \Psi^{(1)} d\vartheta^{(1)},\quad \Pi^{(2)} = \Psi^{(2)} d\vartheta^{(2)} .
\end{equation}

Since both $\vartheta^{(1)}$ and $\vartheta^{(2)}$ define pairs of angular variables on $\U$, there exists an $A\in SL(2,\mathbb{Z})$ and a single-valued $\nu = (\nu_1(\psi,\theta^{(1)},\zeta^{(1)}),\nu_2(\psi,\theta^{(1)},\zeta^{(1)}))$ such that
\begin{equation}\label{angle_relation}
    \vartheta^{(2)} = A\vartheta^{(1)} + \nu .
\end{equation}
Substituting \eqref{angle_relation} into \eqref{primitive_relation} therefore implies
\[
    \Pi^{(2)} = \Psi^{(2)} A d\vartheta^{(1)} + \Psi^{(2)} d\nu.
\]
But since $d\Pi^{(1)} = d\Pi^{(2)}$ this implies
\begin{equation} \label{matrix_relation_pf}
\begin{split}
    \Psi^{(1)\prime} d\psi \wedge d\vartheta^{(1)} &= d\Pi^{(1)}\\
            & = d\Pi^{(2)} 
             = \Psi^{(2)\prime} A\,d\psi\wedge d\vartheta^{(1)} + \Psi^{(2)\prime}d\psi\wedge d\nu \\
            & = \left[\Psi^{(2)\prime} A +\Psi^{(2)\prime} D_\vartheta \nu\right]d\psi\wedge d\vartheta^{(1)},
\end{split}
\end{equation}
where $D_\vartheta \nu$ is the Jacobian of $\nu$ with respect to $\vartheta^{(1)}$.
Because $d\psi\wedge d\theta^{(1)}$ and $d\psi\wedge d\zeta^{(1)}$ are linearly-independent,  \eqref{matrix_relation_pf} can only be satisfied if
\begin{equation}\label{pre_average}
    \Psi^{(1)\prime} = \Psi^{(2)\prime} A + \Psi^{(2)\prime} D_\vartheta \nu.
\end{equation}
Averaging  \eqref{pre_average} over $\theta^{(1)},\zeta^{(1)}$ then implies
\[
    \Psi^{(1)\prime} = \Psi^{(2)\prime} A,
\]
which is readily seen to give \eqref{eq:flux_fun_relation}. Substituting \eqref{eq:flux_fun_relation} back into \eqref{pre_average} therefore implies
\[
    0 =\Psi^{(2)\prime} D_\vartheta \nu, 
\]
implying that $D_\vartheta \nu = 0$ because the matrix $\Psi^{(2)}$ is non-singular by the linear independence of $J$ and $B$ on $\U$. It follows that $\nu$ is only a function of $\psi$, and thus \eqref{angle_relation} is equivalent to \eqref{angle_transformation}.
\end{proof}

\begin{remark}
The proof of \cref{NAB_exist} shows that Boozer coordinates may be understood as Hamada coordinates with respect to a different volume form. Therefore the above characterization of Hamada coordinates applies to Boozer coordinates as well.
\end{remark}

\subsection{Near axis Hamada coordinates}\label{sec:nearAxis}

In this section, we extend the definition of Hamada coordinates to the neighborhood of an elliptic magnetic axis, recall \cref{def:ellipticAxis}.

\begin{thm}[Near-axis Hamada coordinates]\label{NAH_existence}
Suppose that $M$ is a three-manifold and  $(B,J,p,\Omega)$ is an integrable presymplectic system. If $\gamma$ is an elliptic magnetic axis and $\U$ is a tubular neighborhood of $\gamma$ then there exists a diffeomorphism $\Phi_{\text{NAH}}:\U\rightarrow D^2\times \T$ with $\Phi_{\text{NAH}}(m) = (x,y,\phi)$, such that 
\begin{align*}
   p &= P(\psi),\\
  \beta &= \iota_B\Omega =  F^\prime(\psi)\,dy\wedge dx - dG\wedge d\zeta\\
    j &= \iota_J\Omega = K^\prime(\psi)\,dy\wedge dx - dL\wedge d\zeta,
\end{align*}
in $\U$, where $F,G,K,L,P$ are smooth single-variable functions of $\psi = \tfrac12(x^2 + y^2)$.
\end{thm}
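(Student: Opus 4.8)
The plan is to take the normal form coordinates already produced by \cref{thm:SingularCoordinates} and then straighten $j=\iota_J\Omega$ by one further change of coordinates that shifts the coordinate system along the field lines of $B$; this mirrors the classical passage from Arnol'd--Liouville to Hamada coordinates near a regular torus, but with extra care forced by the degeneration of the poloidal angle on $\gamma$. Since $\gamma$ is elliptic, \cref{thm:SingularCoordinates} (case $\epsilon=+1$) provides a tubular neighborhood $\U$ of $\gamma$ and coordinates $(x,y,\phi)$ on $\U$, with $\psi=\tfrac12(x^2+y^2)$, in which $p=P(\psi)$ and $\beta=dy\wedge dx-d\Psi_P(\psi)\wedge d\phi$. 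This already exhibits $p$ and $\beta$ in the asserted form, with $F(\psi)=\psi$ and $G=\Psi_P$, so the whole problem reduces to putting $j$ in Hamada form on a (possibly smaller) $\U$.

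First I would read off the algebraic picture in these coordinates. The volume form is forced: $\iota_B\Omega=\beta$ has the unique solution $\Omega=-(B^\phi)^{-1}\,dx\wedge dy\wedge d\phi$, where $B^\phi=\iota_B\,d\phi$ is a nowhere-vanishing smooth function; and since $\iota_B\beta=0$, $\iota_J\beta=-dp$, while $B$ and $J$ are both tangent to the $\psi$-surfaces, one computes
\[
    B=B^\phi\bigl(\Psi_P'(\psi)\,R+\partial_\phi\bigr),\qquad J=P'(\psi)\,R+J^\phi\bigl(\Psi_P'(\psi)\,R+\partial_\phi\bigr),
\]
with $R=y\,\partial_x-x\,\partial_y$ and $J^\phi=\iota_J\,d\phi$. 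Substituting into $j=\iota_J\Omega$ yields
\[
    j=\frac{J^\phi}{B^\phi}\,dy\wedge dx-\frac{P'(\psi)+\Psi_P'(\psi)\,J^\phi}{B^\phi}\,d\psi\wedge d\phi,
\]
so $j$ is in the desired form exactly when $B^\phi$ and $J^\phi$ are functions of $\psi$ alone; in that case $K'(\psi)=J^\phi/B^\phi$ and $L'(\psi)$ is read off from the second coefficient, and the Jacobian $\rho=-(B^\phi)^{-1}$ is then automatically a flux function. One also checks that $[B,J]=0$ together with $dj=0$ (equivalently, $J$ divergence-free) reduce to a single first-order relation between $B^\phi$ and $J^\phi$ on each $\psi$-surface, which collapses to $\L_B(J^\phi/B^\phi)=0$ once $B^\phi$ is a flux function.

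Hence it suffices to straighten $B$. The admissible transformations are exactly the ones that shift coordinates along $B$-field lines: for any $2\pi$-periodic function $h$, the map $(\theta,\phi)\mapsto(\theta-\Psi_P'(\psi)\,h,\ \phi+h)$ (a shift of each point along the $B$-direction) fixes $\psi$ and preserves $\beta$ in the above normal form, as does any time-one flow $\exp(fB)$, and it replaces $B^\phi$ by $B^\phi+\L_B h$. I would therefore choose $h$ solving the magnetic differential equation
\[
    \L_B h=c(\psi)-B^\phi,
\]
with $c(\psi)$ the reciprocal of the field-line average of $1/B^\phi$ over the torus $\{\psi=\text{const}\}$, a choice that makes the right-hand side have vanishing field-line averages and hence makes the equation solvable. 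With $B$ straightened, the relation above gives $\L_B(J^\phi/B^\phi)=0$, so $J^\phi/B^\phi$ is constant along field lines; since it already depends only on $\psi$ on the dense set of tori with irrational rotational transform $\iotab=-\Psi_P'$, continuity upgrades this to $J^\phi/B^\phi=K'(\psi)$ throughout, whereupon the remaining coefficient of $j$ is a flux function as well.

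I expect the solvability and, above all, the regularity at $\psi=0$ of the magnetic differential equation $\L_B h=c(\psi)-B^\phi$ to be the main obstacle. As with equation \eqref{mdiff} of \S\ref{sec:ellipticproof}, this equation suffers singular-divisor problems at the resonant tori, and even where it is formally solvable the naive mode-by-mode construction of $h$ need not extend smoothly across the poloidal circle collapsing onto the axis. I would handle the resonances by the flux-conservation philosophy already used for \eqref{mdiff}, the point being that the relevant field-line averages are genuinely functions of $\psi$, so they both select a consistent $c(\psi)$ and force the averaging obstruction to vanish on every closed orbit; and I would handle the axis by verifying that each Fourier mode $e^{in\theta}$ of $h$ inherits the $O(r^{|n|})$ behavior of the smooth datum $B^\phi$ near $r=\sqrt{2\psi}=0$, so that the substitution $(x,y,\phi)\mapsto(\tilde x,\tilde y,\phi+h)$, with $(\tilde x,\tilde y)$ the rotation of $(x,y)$ by the angle $-\Psi_P'(\psi)h$, is a genuine $C^\infty$ change of coordinates on $D^2\times\T$. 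Modulo these regularity verifications, the construction is a direct near-axis transcription of the regular Hamada construction of \cite{Burby_Kallinikos_MacKay_2020a}.
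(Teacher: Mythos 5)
Your reduction of the problem to straightening the single contravariant component $B^\phi=\iota_B d\phi$ is algebraically correct, and the family of admissible transformations (shifts along $B$-lines, which preserve $\beta$ because $\iota_B\beta=0$) is exactly the one the paper uses. But the route you take from there has a genuine gap: everything hinges on solving the magnetic differential equation $\L_B h=c(\psi)-B^\phi$ smoothly up to and including the axis, and this is precisely the step you defer to ``regularity verifications.'' Those verifications are essentially the whole theorem. On a rational surface, solvability requires the average of $1/B^\phi-1/c$ to vanish on \emph{every} closed field line, which is not implied by your choice of $c(\psi)$ as a surface average; likewise $[B,J]=0$ with $B$ straightened only forces $J^\phi/B^\phi$ to be constant along each closed field line, not on the surface, and your density-of-irrational-tori argument fails outright if $\iotab$ is identically a rational constant. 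Your appeal to the ``flux-conservation philosophy'' does not supply the missing solvability: the trick used for \eqref{mdiff} in \S\ref{sec:ellipticproof} works because the datum there is the contraction of $B_\lambda$ with a one-form whose pullback to each $\psi$-surface is closed with vanishing periods---a cohomological condition on the torus---whereas $c(\psi)-B^\phi$ has no such structure (the pullback of $(1-c/B^\phi)\,d\phi$ to a $\psi$-surface is closed only if $B^\phi$ is already independent of $\theta$, i.e., only if the thing to be proved already holds). The Fourier sketch for the axis tracks only the $\theta$-modes and says nothing about the resonant denominators $n+m\,\iotab(\psi)$ coming from the $\phi$-dependence, with resonant surfaces possibly accumulating at $\psi=0$.

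The paper's proof is designed to avoid inverting $\L_B$ altogether. It applies Moser's trick directly to $j$: with $j_\lambda=(1-\lambda)j+\lambda j_*$ (where $j_*$ is built from the action integrals $K,L$ of a primitive $\kappa$ of $j$) and deformation field $\xi_\lambda=b_\lambda B$, the identity $\iota_Bj_\lambda=r_\lambda\,dp\propto d\psi$ turns the homotopy equation into (i) a cohomological problem for a potential $S_\lambda$ on each $\psi$-surface, solvable because $\kappa_*-\kappa$ pulls back to a closed one-form with vanishing periods there, and (ii) an \emph{algebraic} equation for $b_\lambda$, namely division by the explicit quantity $r_\lambda=[1-\lambda]+\lambda V^\prime(\psi)/\bigl((2\pi)^2\rho\bigr)$. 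The only analytic issue is then whether $r_\lambda$ stays bounded away from zero near the axis, which is settled by the averaging inequality $r_\lambda(0,0,\phi)\ge\rho_{\min}/\rho_{\max}$. Nothing in your proposal plays the role of this step, and without it (or a genuine smooth solution of your magnetic differential equation across $\psi=0$) the argument is incomplete.
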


\begin{remark}
Note that by a rescaling of $(x,y)$ we can always arrange for $F^\prime(\psi) = 1$.
\end{remark}

\begin{proof}
By \cref{thm:SingularCoordinates}, there is a tubular neighborhood $\U$ of $\gamma$ and a $C^\infty$ diffeomorphism $\Phi_0:\U\rightarrow D^2\times T$, with
$\Phi_0(m) = (X,Y,\phi)$, such that
\begin{equation}\label{eq:iotaNormal}
\begin{split}
    p & = P(\psi)\\
    \iota_B\Omega &= dY\wedge dX - G^\prime(\psi)\,d\psi\wedge d\phi,
\end{split}
\end{equation}
where $\psi = \tfrac12(X^2 + Y^2)$ is the toroidal flux and $G = \Psi_P$ in the statement of \cref{thm:SingularCoordinates}.
Note that both $p$ and $\iota_B\Omega$ are invariant under the torus action 
\[
    T_{\theta,\zeta}(X,Y,\phi) = (X\cos\theta -Y\sin\theta,Y\cos\theta+X\sin\theta,\phi+\zeta),
\]
while the two-form $j =\iota_J\Omega$ need not be. 
Since $D^2\times \T$ has trivial second de Rham cohomology, there
must be a one-form $\kappa$ such that $j = d\kappa$. Define the smooth single-variable functions
\[
    K(\psi)  = \frac{1}{2\pi}\oint_{c_P}\kappa, \qquad
    L(\psi) =-\frac{1}{2\pi}\oint_{c_T}\kappa,
\]
where $c_P$ is any closed curve in a constant $\psi$ surface that is homologous to $\theta\mapsto T_{\theta,0}(X,Y,\phi)$ and $c_T$ is any curve homologous to $\zeta\mapsto T_{(0,\zeta)}(X,Y,\phi)$ in the same surface. Associated with these action variables is the one-form
\[
    \kappa_* = \tfrac{K(\psi)}{\psi}\tfrac12(Y\,dX - X\,dY) - 
             L(\psi)\,d\phi ,
\]
and the associated two-form 
\[
    j_* = d\kappa_*
         = K^\prime(\psi)\,dY\wedge dX - L^\prime(\psi)\,d\psi\wedge d\phi.
\]
Note that $K(\psi)/\psi$ is a smooth function of $\psi$ because $K(0) = 0$ and $K$ is smooth. The goal for the remainder of the proof is to find a diffeomorphism $(X,Y,\phi)\mapsto(x,y,\zeta)$ such that $\iota_J\Omega$ is transformed into $j_*$ while $p$ and $\iota_B\Omega$ retain their normal forms \eqref{eq:iotaNormal}. The desired diffeomorphism will be constructed as the $\lambda=1$ flow map for a time-dependent vector field $\xi_\lambda$.

Let $\xi_\lambda$ be a time-dependent vector field on $D^2\times \T$ with flow map $\varphi_\lambda : D^2\times \T\rightarrow D^2\times \T$. 
Defining
\begin{align*}
    j_\lambda &= [1-\lambda]j + \lambda j_*,\\ 
    \kappa_\lambda &= [1-\lambda]\kappa + \lambda\kappa_*,
\end{align*}
suppose that the flow map $\varphi_\lambda $ satisfies
\begin{align*}
   \varphi_\lambda ^*\beta & =  \beta, \\
   \varphi_\lambda ^* j_\lambda  &= j,
\end{align*}
for each $\lambda\in[0,1]$.
By differentiating with respect to $\lambda$, we see that these conditions on $\varphi_\lambda$ are equivalent to
\begin{equation}\label{eq:hproof}
\begin{split}
     \L_{\xi_\lambda}\beta &= d \iota_{\xi_\lambda} \beta = 0\\
     \L_{\xi_\lambda} j_\lambda  + \partial_\lambda j_\lambda &= 
      d (\iota_{\xi_\lambda} j_\lambda + \partial_\lambda \kappa) = 0.
\end{split}
\end{equation}
The first equation in \eqref{eq:hproof} will be satisfied when $\xi_\lambda$ is parallel to $B$, i.e., $\xi_\lambda = b_\lambda B$, for some smooth scalar function $b_\lambda$. The second equation in \eqref{eq:hproof} will also be satisfied if there is a time-dependent scalar function $S_\lambda$ such that
\begin{equation}\label{eq:hproof-c}
      b_\lambda \,\iota_{ B}j_\lambda + \partial_\lambda\kappa_\lambda = dS_\lambda.
\end{equation}
To show that there exist such smooth $b_\lambda$ and $S_\lambda$, we first observe that $\partial_\lambda\kappa_\lambda = \kappa_*-\kappa$. When pulled back to any $\psi$-surface, the one-form $\partial_\lambda\kappa_\lambda$ is closed because $\psi$-surfaces are isotropic for both $j$ (by hypothesis) and $j_*$ (by construction). Moreover, the periods of $\partial_\lambda\kappa_\lambda$ on a $\psi$-surface all vanish since $\kappa$ and $\kappa^*$ have the same action integrals. Thus $\partial_\lambda\kappa_\lambda$ is exact on each $\psi$-surface. It follows that we may define a smooth function $S$ on $D^2\times \T$ such that $dS$ agrees with $\partial_\lambda\kappa_\lambda$ on each $\psi$-surface. In other words, there exists a smooth $S$ such that 
\begin{align}
    \partial_\lambda\kappa_\lambda - dS_\lambda = h_\lambda\,d\psi,
\end{align}
for some smooth function $h_\lambda$. Next let
\begin{align*}
    \partial_\theta &= \partial_\theta T_{\theta,0}|_{\theta=0} = -Y\partial_X +X\partial_Y \\
    \partial_\zeta &= \partial_\zeta T_{0,\zeta} |_{\zeta=0}=\partial_\phi
\end{align*}
be infinitesimal generators for the torus action $T_{\theta,\zeta}$. Also assume, without loss of generality, that $\Omega = \rho\,dX\wedge dY\wedge d\phi$, where $\rho$ is a smooth positive function bounded away from $0$. 

By \eqref{eq:iotaNormal}, the magnetic field $B$ must be of the form
\[
    B  = \frac{\partial_\phi +G^\prime(\psi)\partial_\theta}{\rho}.
\]
Therefore
\begin{align*}
    \iota_{ B}j_\lambda & = [1-\lambda]\iota_{B}j + \lambda \iota_{B}j_*\\
            &=[1-\lambda]dp+\lambda \iota_{B}(K^\prime(\psi)dY\wedge dX - L^\prime(\psi)\,d\psi\wedge d\phi)\\
            &\equiv r_\lambda\,dp.
\end{align*}
where we have defined
\[
    r_\lambda = [1-\lambda] + \lambda \frac{L^\prime(\psi)-G^\prime(\psi)K^\prime(\psi)}{\rho P^\prime(\psi)}.
\]
Suppose that $r_\lambda$ is positive and bounded away from zero for $\lambda\in[0,1]$. Then $b_\lambda = -h/r_\lambda$ is smooth and satisfies \eqref{eq:hproof-c}. Thus, $\Phi=(\varphi_1)^{-1}$ satisfies
\begin{align*}
    \Phi_*p     & = p\\
    \Phi_*\beta &= dY\wedge dX - G^\prime(\psi)\,d\psi\wedge d\zeta\\
    \Phi_*j     &= K^\prime(\psi)\,dY\wedge dX - L^\prime(\psi)\,d\psi\wedge d\zeta,
\end{align*}
and $(x,y,\zeta) = \Phi(X,Y,\phi)$ are smooth near-axis Hamada coordinates. The proof will therefore be complete if we can show that $r_\lambda$ is indeed positive and bounded away from zero.

To examine the sign of $r_\lambda$ near the magnetic axis, we pick a $\psi$-level $M$ with $\psi$ arbitrary but nonzero, and construct standard Hamada coordinates $(\psi,\theta,\zeta)$ in a neighborhood of $M$, where $\psi$ is the $p$-surface toroidal magnetic flux. We assume the Hamada angles $\theta,\zeta$ are chosen so that
\[
    \frac{1}{2\pi}\begin{pmatrix}
                     \oint_{c_P} d\theta & \oint_{c_T}d\theta\\
                    \oint_{c_P} d\zeta & \oint_{c_T} d\zeta
                \end{pmatrix} 
            = \begin{pmatrix}
                    1 & 0 \\
                    0 & 1
              \end{pmatrix},
\]
with poloidal and toroidal loops $c_P$ and $c_T$ as before.
If they are not, then we define $A\in SL(2,\mathbb{Z})$ according to 
\[
        A = 2\pi\begin{pmatrix}
                \oint_{c_P} d\theta & \oint_{c_T}\theta\\
                \oint_{c_P} d\zeta & \oint_{c_T} d\zeta
            \end{pmatrix}^{-1},
\]
and set $(\overline{\theta},\overline{\zeta})^T = A (\theta,\zeta)^T$. According to \cref{hamada_characterized}, the mapping $(\overline{\psi,\theta},\overline{\zeta})$ then defines a system of Hamada coordinates with the desired property.

In these coordinates we have
\begin{align*}
        \Omega &= \tfrac{1}{(2\pi)^2} V^\prime(\psi)\,d\psi\wedge d\theta\wedge d\zeta\\
        \iota_B\Omega & = F^\prime(\psi)\,d\psi\wedge d\theta - G^\prime(\psi)\,d \psi\wedge d\zeta\\
        \iota_J\Omega & = K^\prime(\psi)\,d \psi\wedge d\theta - L^\prime(\psi)\,d\psi\wedge d\zeta,
\end{align*}
where $Vol(\psi)$ is the smooth single-variable function given by
\begin{align*}
    Vol(\psi) = \int_{\tfrac12(X^2+Y^2)\leq \psi}\Omega.
\end{align*}
Therefore the vector fields $B,J$ are given by 
\begin{align*}
        B &= \frac{(2\pi)^2}{V^\prime(\psi)}\left(F^\prime(\psi)\partial_\zeta + \,G^\prime(\psi)\,\partial_\theta\right)\\
        J& = \frac{(2\pi)^2}{V^\prime(\psi)}\left(K^\prime(\psi)\partial_\zeta + \,L^\prime(\psi)\,\partial_\theta\right),
\end{align*}
and the Hamiltonian equation $\iota_J\beta=\iota_J\iota_B\Omega = -dp$ implies
\[
    1 = (2\pi)^2\,\frac{ L^\prime(\psi)- G^\prime(\psi)\,K^\prime(\psi)}{P^\prime(\psi)\,V^\prime(\psi)}.
\]
Away from the magnetic axis, the function $r_\lambda$ may therefore be written
\[
    r_\lambda = [1-\lambda] + \lambda\,\frac{V^\prime(\psi)}{(2\pi)^2\,\rho}.\label{rho_formula}
\]
And by continuity of $r_\lambda$, \eqref{rho_formula} must be valid on the magnetic axis $X=Y=0$ as well. In particular, since
\begin{align*}
    V^\prime(0) &= \frac{d}{d\lambda}\bigg|_{\lambda=0} \iiint_{X^2/2 + Y^2/2\leq \lambda} \rho(X,Y,\phi) \,dX\,dY\,d\phi\\
        & =\frac{d}{d\lambda}\bigg|_{\lambda=0} \int_0^{\lambda}\int_{0}^{2\pi}\int_0^{2\pi} \rho(\sqrt{2\varepsilon}\cos\theta,\sqrt{2\varepsilon}\sin\theta,\phi)\,d\theta\,d\phi\,d\varepsilon\\
        & =2\pi \int_0^{2\pi} \rho(0,0,\phi)\,d\phi,
\end{align*}
the value of $r_\lambda$ on-axis is 
\begin{equation}\label{ineq}
\begin{split}
    r_\lambda(0,0,\phi) &= [1-\lambda]+\lambda\,\frac{V^\prime(0)}{(2\pi)^2\rho(0,0,\phi)}\\
        & =[1-\lambda]+\lambda\,\frac{[2\pi]^{-1}\int_0^{2\pi}\rho(0,0,\overline{\phi})\,d\overline{\phi}}{\rho(0,0,\phi)}\\
        & \geq [1-\lambda] + \lambda\frac{\rho_{\text{min}}}{\rho_{\text{max}}}\\
        &\geq \frac{\rho_{\text{min}}}{\rho_{\text{max}}},
\end{split}
\end{equation}
where $\rho_{\text{min}}$ and $\rho_{\text{max}}$ are the minimum and maximum values of the periodic function $\phi\mapsto \rho(0,0,\phi)$, respectively. By continuity of $r_\lambda$, \eqref{ineq} that $r_\lambda > c$ for some positive constant $c$ in a sufficiently thin neighborhood of the magnetic axis.

\end{proof}

\begin{remark}
Observe that this proof exploits the fact that the normal form of \cref{thm:SingularCoordinates} is unchanged by transformations that stretch along the magnetic field lines.
\end{remark}

\subsection{Near axis Boozer coordinates}
In this section, we extend Boozer coordinates to accommodate an elliptic magnetic axis.

\begin{thm}[Near-axis Boozer coordinates]\label{NAB_exist}
Suppose that $(B,J,p, \Omega)$ is an MHS integrable presymplectic system on a Riemannian three-manifold $(M,g)$. If $\gamma$ is an elliptic magnetic axis, then there exists a tubular neighborhood $\U$ of $\gamma$ and a $C^\infty$ diffeomorphism $\Phi_{\text{NAB}}:\U\rightarrow D^2\times \T$ that defines $C^\infty$
coordinates such that
\begin{align*}
    \iota_B\Omega &= f^\prime(p)\,dy\wedge dx - dg\wedge d\zeta\\
    \iota_Bg\wedge dp & = k^\prime(p)\,dy\wedge dx - dl\wedge d\zeta,
\end{align*}
in $\U$, where $f,g,k,l$ are smooth single-variable functions of $p$.
\end{thm}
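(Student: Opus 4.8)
The plan is to deduce \cref{NAB_exist} directly from the near-axis Hamada theorem, \cref{NAH_existence}, by a change of volume form: the Boozer data for $(B,J,p,\Omega)$ will be recognized as Hamada data for a rescaled integrable presymplectic system. Set $\Omega' := |B|^2\,\Omega$, where $|B|^2 = g(B,B)$; this is a genuine volume form since $B$ is non-vanishing. Define the rescaled vector fields $B' := B/|B|^2$ and $J' := (B\times\nabla p)/|B|^2$, and write $B^\flat = \iota_B g$. The first step is to check that $(B',J',p,\Omega')$ is an integrable presymplectic system in the sense of \cref{def:integrable}, with $\gamma$ still an elliptic magnetic axis; this rests on the following short computations. (i) $\iota_{B'}\Omega' = \iota_{B/|B|^2}(|B|^2\Omega) = \iota_B\Omega = \beta$, so the presymplectic form is unchanged and $B'$ is the field it determines relative to $\Omega'$. (ii) From $B\times\nabla p = B\times(J\times B) = |B|^2 J - (B\cdot J)\,B$ we get $J' = J - \tfrac{B\cdot J}{|B|^2}B$, so, using $\iota_B\iota_B\Omega = 0$,
\[
    \iota_{J'}\beta = \iota_{J'}\iota_B\Omega = \iota_J\iota_B\Omega = -dp ;
\]
hence $(J',p)$ is a Hamiltonian pair for $\beta$. (iii) Since $\iota_{J'}\Omega' = \iota_{B\times\nabla p}\Omega = B^\flat\wedge dp$ (the last equality the standard three-dimensional identity $\iota_{u\times v}\Omega = u^\flat\wedge v^\flat$, with $(\nabla p)^\flat = dp$), and $dB^\flat = \iota_J\Omega$ by the MHS hypothesis $J=\nabla\times B$,
\[
    \L_{J'}\Omega' = d\,\iota_{J'}\Omega' = dB^\flat\wedge dp = \iota_J\Omega\wedge dp = (\L_J p)\,\Omega = 0 ,
\]
where the penultimate step uses $dp\wedge\Omega = 0$ and the last uses $\L_J p = -\iota_J\iota_J\beta = 0$. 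By \cref{lem:presym,lem:divIsCommute}, (ii) and (iii) imply $[B',J']=0$, so $(B',J',p,\Omega')$ is integrable. Finally $B'$ is everywhere a positive multiple of $B$, so it has the same field lines and the same transverse sections, while $p$, hence its normal Hessian $D_\perp^2 p$ along $\gamma$, is unchanged; thus $\gamma$ remains a nondegenerate elliptic magnetic axis for the primed system.

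The second step is to apply \cref{NAH_existence} to $(B',J',p,\Omega')$. This yields a tubular neighborhood $\U$ of $\gamma$, a diffeomorphism $\Phi_{\text{NAB}}:\U\to D^2\times\T$ with coordinates $(x,y,\zeta)$ and $\psi=\tfrac12(x^2+y^2)$, and smooth single-variable functions $P,F,G,K,L$ such that $p = P(\psi)$, $\iota_{B'}\Omega' = F'(\psi)\,dy\wedge dx - dG\wedge d\zeta$, and $\iota_{J'}\Omega' = K'(\psi)\,dy\wedge dx - dL\wedge d\zeta$. By (i) the first of these equals $\iota_B\Omega$, and by the identity in (iii) the second equals $\iota_B g\wedge dp$; so the normal forms in the statement already have the correct shape. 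It remains only to express the flux functions in terms of $p$: since $\gamma$ is elliptic, $D_\perp^2 p = P'(0)\,\mathrm{id}$ is nonzero, so $\psi\mapsto P(\psi)$ is a local diffeomorphism near $0$ and every smooth function of $\psi$ is a smooth function of $p$. Taking $f,k$ to be $p$-antiderivatives of $F'(\psi),K'(\psi)$ (viewed as functions of $p$) and $g(p)=G(P^{-1}(p))$, $l(p)=L(P^{-1}(p))$ --- so that $dg=dG$ and $dl=dL$ --- gives precisely $\iota_B\Omega = f'(p)\,dy\wedge dx - dg\wedge d\zeta$ and $\iota_B g\wedge dp = k'(p)\,dy\wedge dx - dl\wedge d\zeta$, as required.

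I do not expect a genuinely hard step: all of the analytic work is done inside \cref{NAH_existence}, and the reduction succeeds precisely because that theorem is stated for an arbitrary volume element. The one point that needs care --- and the only place the MHS condition enters --- is step (iii), namely that the diamagnetic-type field $J' = (B\times\nabla p)/|B|^2$ is divergence-free relative to $\Omega'$; this uses $J = \nabla\times B$ and would fail for a general divergence-free $J$. That asymmetry is expected, matching the earlier observation that near-axis Hamada coordinates require only integrability whereas near-axis Boozer coordinates require MHS integrability.
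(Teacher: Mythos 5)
Your proposal is correct and follows essentially the same route as the paper: recognize $(B',J',p,\Omega')$ with $\Omega'=|B|^2\Omega$, $B'=B/|B|^2$, $J'=(B\times\nabla p)/|B|^2$ as an integrable presymplectic system with $\iota_{B'}\Omega'=\iota_B\Omega$ and $\iota_{J'}\Omega'=\iota_Bg\wedge dp$, then invoke \cref{NAH_existence}. The only cosmetic differences are that you establish $[B',J']=0$ via $\L_{J'}\Omega'=0$ and \cref{lem:divIsCommute} where the paper computes $\L_{B'}\mu=0$ directly, and that you spell out the routine $\psi\mapsto P(\psi)$ reparametrization (using $P'(0)\neq 0$ from ellipticity) that the paper leaves implicit.
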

\begin{proof}
Introduce the two-forms $\beta =\iota_B\Omega$ and $\mu =  \iota_Bg\wedge dp$.
Notice that level sets of $p$ are isotropic for both $\beta$ and $\mu$. Also notice that both forms are closed because
\begin{align*}
    d \mu&= d( \iota_Bg\wedge dp)
     = dp\wedge d\iota_B g\\
    & = dp\wedge\iota_J\Omega
     =  \L_Jp\,\Omega\\
    &=0,
\end{align*}
where we have used the relation $di_B g = \iota_J \Omega$ satisfied by MHS integrable systems.

Now introduce the volume form $\Omega^\prime = |B|^2\Omega$ where $|B|^2 = g(B,B)$, with $g$ the Riemannian metric on $M$. The associated vector fields $B^\prime,J^\prime$ defined according to  \eqref{eq:Bprime}, satisfy
\[\iota_{B^\prime}\Omega^\prime = \beta\qquad \iota_{J^\prime}\Omega^\prime  = \mu.\]

Note that \cref{lem:BProperties} implies that $B^\prime $ and $J^\prime$ are  divergence-free with respect to $\Omega^\prime$. Moreover, because
\begin{align*}  
     \L_{B^\prime}\mu& = -( \L_{B^\prime}dp)\wedge \iota_Bg - dp\wedge ( \L_{B^\prime}\iota_Bg)\\
        & = -dp\wedge (\iota_{B^\prime}d\iota_Bg + d[g(B,B^\prime)] )\\
        & = -dp\wedge (\iota_{B^\prime}\iota_J\Omega)\\
        & = -dp\wedge(dp)/|B^2|\\
        & = 0,
\end{align*}
$B^\prime$ and $J^\prime$ commute. Lastly, 
\begin{align*}
    \iota_{J^\prime}\beta & = -\iota_{B^\prime}\mu\\
        & = \iota_{B^\prime}(dp\wedge \iota_Bg)\\
        & = -g(B,B^\prime)\,dp\\
        & = -dp,
\end{align*}
which says that $(J^\prime,p)$ is a Hamiltonian pair for the presymplectic form $\beta$. It follows that $(B^\prime,J^\prime,p,\Omega^\prime)$ defines an integrable presymplectic system. Moreover $\gamma$ is an elliptic magnetic axis for $(B^\prime,J^\prime,p,\Omega^\prime)$. The desired result is therefore implied by the existence of near-axis Hamada coordinates (\cref{NAH_existence}) for integrable presymplectic systems.

\end{proof}

\section{Discussion}
The work in this paper formulates and proves results pertinent to the study of plasma confinement by magnetic fields. However, it also attempts to contextualize the results in a more general framework: that of presymplectic systems. In particular, many of our results apply to magnetic fields with flux surfaces that may not satisfy the ideal MHD euqilibrium equation $(\nabla\times B)\times B = \nabla p$. A benefit of relaxing this common assumption is that our results allow for pressure-anisotropic equilibria and equilibria with flow. Due to this balancing between application and theory, sacrifices have been made on both sides, ultimately leaving several worthwhile future directions.

Firstly, the theory throughout has been described for presymplectic systems on an orientable three-manifold. The specialization to three dimensions is purely because of the physical application and many of the results should still be true for presymplectic systems on a $2n+1$-dimensional manifold. In particular, a theorem analogous to \cref{thm:integrableEmbedding} should hold and one should be able to obtain,  using preexisting results from symplectic geometry, an analog to \cref{thm:SingularCoordinates} near singular orbits in higher dimensions. This would include more exotic singularities with a focus-focus component. 

Our results do not apply if the magnetic axis is degenerate, that is, if the normal Hessian has a zero eigenvalue. It is unclear to us which aspects of our results hold in the degenerate case. Indeed, even in the symplectic realm such questions are unanswered. Further work in this area would prove significant to several fields.

It may also be possible to broaden some of the hypotheses throughout. For instance, in the definition of an integrable system for a presymplectic manifold, \cref{def:integrable}, four pieces of information are required; $B,J,p,$ and $\Omega$. A natural question arises of whether this is minimal. That is, it may be possible to show that given any non-vanishing, volume preserving vector field $B$, if $p$ is invariant under $B$, then there exists a volume-preserving vector field $J$ such that $\iota_J \beta = -dp$. If such a result held then only the information $B,p,\Omega$ would be required. It would also be significant if one could show that such a construction of $J$ was not possible; this would then imply that knowing that a magnetic field lies on $p$ surface would not be sufficient for integrability.

Another possible refinement comes from \cref{thm:integrableEmbedding}. The result shows the possibility of embedding an integrable presymplectic system into an integrable system on the well-known guiding center symplectic phase space \cite{Littlejohn_1979}, provided that there is a one-form $\eta$ that satisfies certain restrictions. But are these restrictions are necessary? For example, if it was only required that $\beta\wedge \eta$ be a volume form, then we would have a powerful way of embedding any presymplectic integrable system simply by taking $\eta = B^\flat$. This global embedding would serve as a powerful theoretical tool in future work on existence of coordinates or other structures, such as quasisymmetry \cite{Helander14}. Moreover, it is linked to the previous question of the existence of a global volume preserving $J$ such that $\iota_J \beta = -dp$. For, if one has a global integrable embedding, then $H,\tilde{p}$ will give an integrable system on the symplectic space and the Hamiltonian vector field associated to $\tilde{p}$ should restrict to the desired $J$ on the embedded space.

There also remain unanswered questions for the application of our results to special coordinate systems such as those in \S\ref{sec:Applications}. Crucially, we have not shown whether some version of Hamada or Boozer coordinates exist near a hyperbolic magnetic axis. While we conjecture that this is true, the non-connectedness of level sets in a tubular neighborhood of a hyperbolic axis prevent the approach we have used for the elliptic case. Showing the result for the hyperbolic case is more than simply closing this gap. The existence of Hamada coordinates is equivalent to the existence of a Hamiltonian circle action, or $\T$-symmetry, of the system. This circle action is an invaluable tool in both symplectic and presympletic geometry.

The precise necessary conditions on $B,J,p$ for the existence of Boozer coordinates is yet to be investigated. Certainly, from \cref{NAB_exist}, it is sufficient if $B$ is MHS integrable. However, following the same proof, it is probable to that one can also show existence if $j = f(\psi) dB^\flat$ for any smooth function $f$. Thus, MHS is not necessary. 

Lastly, there are several remaining questions on global existence of coordinates. In particular, are there topological constraints that imply that there are  globally defined Hamada and Boozer coordinates? In addition, particularly important for quasisymmetry, are there constraints that imply the global existence of a Hamiltonian circle action?

\section*{Acknowledgements}
ND and JDM acknowledge support from the Simon's Foundation Grant \#601972, ``Hidden Symmetries and Fusion Energy.'' JWB was supported by the Los Alamos National Laboratory LDRD program under project 20180756PRD4. Useful conversations with Matt Landreman and Gabriel Plunk are gratefully acknowledged. 
\section*{Data Availability}
The data that support the findings of this study are available within the article.

\begin{appendix}

\section{Symplectic, Presymplectic, Co-symplectic and Contact}\label{sec:Weak}
In this appendix we collect some definitions of terms that correspond to weaker concepts than symplectic dynamics. We suppose that the phase space is a manifold $M$, with a volume form $\Omega$. Recall that a \textit{symplectic} form $\omega$, is a two-form, $\omega$, that is closed, $d\omega = 0$,
and nondegenerate. A form is nondegenerate if $\omega(v,w) = 0$ for all $w \in \mathfrak{X}(M)$ only when $v = 0$. A vector field $X_H \in \mathfrak{X}(M)$ is a Hamiltonian vector field when there is a smooth function $H: M \to \mathbb{R}$ such that \eqref{eq:Hamilton} is satisfied.

When the manifold $M$ has an odd dimension, then there is no symplectic structure, since any two-form must be degenerate. We summarize some of the alternative structures that are used in the literature in \cref{fig:structures}.

A closed two-form, $\beta$, that is degenerate is a \textit{presymplectic} form \cite{Ortega04,Gotay79a}. It is usually assumed that $\beta$ has constant rank \cite{Gotay79b}; we will suppose that $\beta$ has maximal rank, that is $\ker{\hat\beta_z}$ \eqref{eq:BetaKernel} has dimension one. 
This case is sometimes called a \textit{contact} form \cite{Abraham78} when $\dim{M} = 2n+1$ is odd, though this does not seem to be universal. 

If in addition, there is a one-form $\eta$ such that $\beta^n \wedge \eta = \Omega$, then $(\beta,\eta)$ is a \textit{cosymplectic} structure. In this case $\ker{\hat{\beta}_z}$ is one-dimensional. When $\beta = d\eta$, this structure becomes an \textit{exact contact} structure \cite{Abraham78}. A Darboux-like theorem implies that in this case there are local coordinates $(q_1,p_1,\ldots, q_n,p_n,,u)$ on $M$ such that $\beta = dq \wedge dp = \sum_{i=1}^n dq_i\wedge dp_i$ and $\eta = du$. The standard example of an exact contact manifold corresponds to a nonautonomous Hamiltonian system with $M = \mathbb{R}^{2n} \times \mathbb{R}$, with $\beta = \omega$ and $\eta = dt$.

\begin{center}
  \footnotesize
  \begin{tikzpicture}[auto,
    block_left/.style ={rectangle, draw=black, thick, fill=white,
      text width=16em, text ragged, minimum height=4em, inner sep=6pt},
    line/.style ={draw, thick, -latex', shorten >=0pt}]
    \matrix [column sep=5mm,row sep=3mm] {
      \node[block_left] (manifold) {
        \textbf{Manifold:} $M^{2n+1}$
        \begin{itemize}
            \item $\beta \in \Lambda^2(M)$
        \end{itemize}
      };
      & \\
      \node [block_left] (presymplectic) {
      	\textbf{presymplectic:} \\
      	\begin{itemize}
      	\item $d\beta = 0$
      	\item $\beta$ of maximal rank
      	\end{itemize}
  		};
      &  \\
       \node [block_left] (almostCosymplectic) {
      	\textbf{almost cosymplectic:} \\
      	$\eta \in\Lambda^1(M)$ s.t.
      	\begin{itemize}
      	\item $\beta^n \wedge \eta$ a volume form
      	\end{itemize}
      };
      & \\
      \node [block_left] (cosymplectic) {
      	\textbf{cosymplectic:} \\
     	\begin{itemize}
      	\item $d\eta = 0$ 
      	\end{itemize}
      }; 
      & \node [block_left] (contact) {
      	\textbf{exact contact:} \\
      	$\eta\in\Lambda^1(M)$ s.t.
      	\begin{itemize}
      	\item $\beta = d\eta$
      	\item $(d\eta)^n \wedge \eta$ a volume form.
      	\end{itemize}
      };\\
    };
    \begin{scope}[every path/.style=line]
      \path (manifold) -- (presymplectic);
      \path (presymplectic) -- (almostCosymplectic);
     \path (almostCosymplectic) -- (cosymplectic);
     \path (presymplectic) -| (contact);
    \end{scope}
  \end{tikzpicture}
    
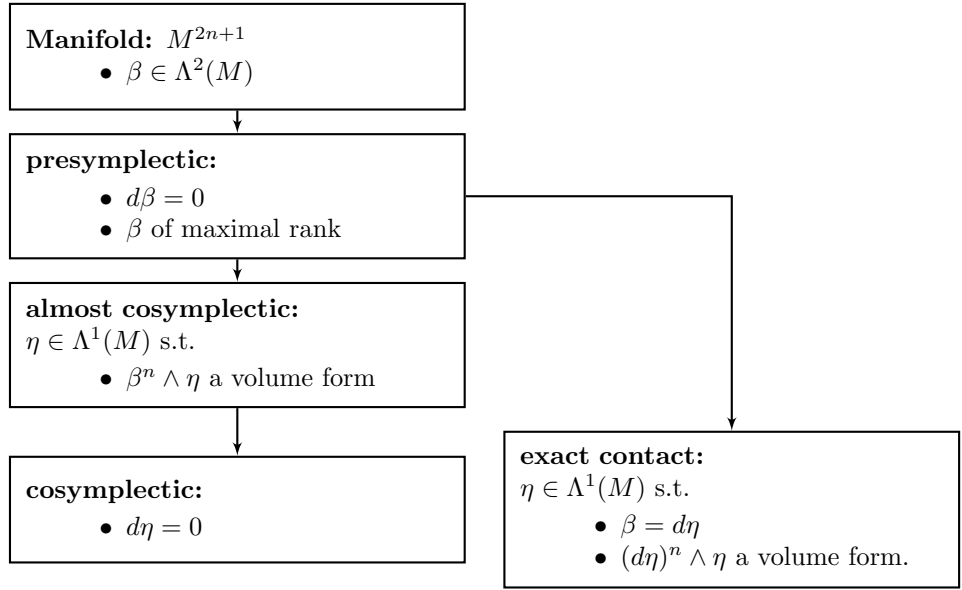
\captionof{figure}{Some geometrical structures on odd-dimensional manifolds. Here $\beta \in \Lambda^2(M)$ is a two-form on a manifold $M$ with $\dim{M} = 2n+1$.}\label{fig:structures}
\end{center}

\end{appendix}

\bibliographystyle{alpha}
\bibliography{presymplectic}

\end{document}